\documentclass[11pt, smallextended,envcountsect,envcountsame]{amsart} 

\usepackage{graphicx}
\usepackage{enumerate}\usepackage{anysize}
\usepackage{fancyhdr}
\usepackage{amsmath}
\usepackage{amssymb}
\usepackage{amsxtra}
\usepackage{amsfonts}
\usepackage{bbold}
\usepackage{subfigure}
\usepackage{longtable}
\usepackage{caption}
\usepackage{tabularx}
\usepackage{xcolor}
\usepackage{tikz}
\usepackage{comment}

\numberwithin{equation}{section}
\numberwithin{table}{section}
\newtheorem{Thm}{Theorem}[section]
\newtheorem{lem}[Thm]{Lemma}

\newtheorem{prop}[Thm]{Proposition}

\theoremstyle{definition}
\newtheorem{defn}{Definition}[section]

\newcommand{\eee}{{\rm e}}

\newcommand{\one}[1]{\mathbb{1}_{\left\{#1\right\}}}
\newcommand{\card}[1]{\left| #1 \right|}
\newcommand{\bbd}[1]{\boldsymbol{#1}}

\newcommand{\dprime}{\prime\prime}

\def\tende#1{\vtop{\ialign{##\crcr\rightarrowfill\crcr
			\noalign{\kern-1pt\nointerlineskip}
			\hskip3.pt${\scriptstyle #1}$\hskip3.pt\crcr}}}

\begin{document}
	\title[New Results on the domain of analyticity   of the free energy for the Ising model]{New Results on the domain of analyticity of  the free energy for the Ising model}
	\author{Nguyen Tong Xuan}
	\address{Quy Nhon University, 170 An Duong Vuong street, Quy Nhon Nam ward, Gia Lai province, Vietnam}
	\email{nguyentongxuan@qnu.edu.vn}
	
	\author{Nguyen Dang Thien Thu} 
	\address{Quy Nhon University, 170 An Duong Vuong street, Quy Nhon Nam ward, Gia Lai province, Vietnam}
	\email{nguyendangthienthu@qnu.edu.vn }
	
	\keywords{Cluster expansions \and Ising model \and Free energy}
	\subjclass[2020]{82B20\and 82B05}
	\date{Received: date / Accepted: date}
	
	\thanks{}
	\maketitle
	\begin{abstract}
		We investigate the analyticity of the free energy of the Ising model in the presence of a non-zero external magnetic field, at high temperature, and at low temperature. Using the Fernandez--Procacci convergence criterion for cluster expansions, together with generating-function techniques and graph-theoretical methods, we derive improved convergence conditions in all three regimes. In particular, the generating-function approach yields sharper estimates for polymers and contours in the strong-field and low-temperature regimes, while a new high-temperature expansion based on Veblen's theorem provides a substantially larger analyticity region than the classical results in the literature.
	\end{abstract}
	
	\section{INTRODUCTION} 
	
	The Ising model is one of the most fundamental and influential models in statistical mechanics, originally introduced to describe ferromagnetic phenomena. In this model, each site of a crystal lattice is assigned a spin variable taking one of two possible values, $+1$ or $-1$, representing the two possible orientations of a magnetic dipole moment. The spins interact with their nearest neighbors through pair interactions and may also be subject to an external magnetic field. Despite its apparent simplicity, the Ising model has become a paradigmatic framework for studying a wide range of physical phenomena, including phase transitions, spontaneous symmetry breaking, spontaneous magnetization, and numerous mathematical questions concerning the existence and analytic properties of thermodynamic quantities. A comprehensive account of the Ising model and its mathematical foundations can be found in the monograph by Friedli and Velenik \cite{FV17}.
	
	Among the thermodynamic quantities associated with the Ising model, the free energy plays a central role. It is defined as the logarithm of the partition function and encodes essentially all equilibrium thermodynamic information of the system. The analytic properties of the free energy are intimately related to the occurrence of phase transitions, since singularities of the free energy characterize qualitative changes in the macroscopic behavior of the system. Beyond the Ising model, the study of the analyticity domain of the free energy has also attracted considerable attention across numerous statistical mechanics model such as the Potts model, the Blume--Capel model, and the Curie--Weiss model.
	
	One of the most powerful tools for studying the analyticity of the free energy is the cluster expansion. The basic idea is to represent the logarithm of the partition function as a convergent power series with respect to suitable auxiliary parameters, commonly referred to as fugacities or activities. The cluscter expansion was originally developed in the context of the virial expansion for gases and liquids and has since evolved into a fundamental technique with numerous applications in statistical mechanics, probability theory, and combinatorics. An extensive overview of cluster expansion and its applications can be found in \cite{FV17,JPS13,NPS12,NF20}.
	
	A fundamental issue in the theory of cluster expansions is the determination of convergence conditions, since the convergence of the expansion immediately implies the analyticity of the free energy. Since the late 1960s, several convergence criteria have been developed using different approaches, including the Kirkwood--Salzburg equations \cite{GK71}, tree-graph inequalities \cite{Bry84}, Dobrushin's inductive method \cite{Dob96}, and the partition scheme introduced by Fernandez and Procacci \cite{FP07}. Among these approaches, the Fernandez--Procacci criterion currently provides one of the strongest known convergence conditions and has been successfully refined and applied to a broad class of polymer models \cite{BFP10,NF20}.
	
	For the Ising model, the polymer representation of the partition function depends significantly on the parameter regime under consideration. In the presence of a strong external magnetic field $(h\ne 0)$, spin configurations opposing the external field appear only as rare excitations and naturally form a dilute polymer gas. In the high-temperature regime, the interaction between neighboring spins is sufficiently weak so that the partition function admits a polymer representation based on the classical high-temperature expansion. In contrast, in the low-temperature regime, the ground states dominate the Gibbs measure, and the relevant excitations are described by Peierls contours, leading to a contour representation of the partition function. Although all three representations rely on the same general principle of cluster expansion, they give rise to different polymer systems and consequently require different convergence analyses.
	
	In this paper, we investigate the analyticity domain of the free energy of the Ising model in the three parameter regimes described above by combining the cluster expansion with the convergence criterion of Fernandez and Procacci. More precisely, for the non-zero external field and the low-temperature regime, besides applying the Fernandez--Procacci convergence criterion, we incorporate the generating-function technique introduced by Balister and Bollobás in \cite{BB07}. Originally, this technique was developed to establish upper and lower bounds for the number of bounded regions arising in graph arrangements. In the present work, we adapt this method to obtain sharper estimates for the number of polymers (or contours) of a given size. Combining these refined combinatorial estimates with the Fernandez--Procacci criterion leads to substantially improved convergence conditions for the cluster expansion and, consequently, to larger analyticity domains of the free energy in both regimes. 
	
	For the high-temperature regime, rather than relying on the standard high-temperature polymer representation, we develop a new cluster expansion based on tools from graph theory, with Veblen's theorem playing a key role in the characterization of the Eulerian subgraphs arising in the expansion. This new representation allows the Fernandez--Procacci convergence criterion to be applied in a more effective manner. As a consequence, we obtain an analyticity region for the free energy that is significantly larger than those previously available in the literature, including the classical results of Friedli and Velenik \cite{FV17} and Simon \cite{Simon93}. Our results demonstrate that combining modern combinatorial enumeration techniques and graph-theoretical methods with the cluster expansion provides a unified framework for analyzing the Ising model in different parameter regimes while yielding substantially improved analyticity domains for the free energy.
	\section{ISING MODEL AND MAIN RESULTS}\label{sec.Xuan.1}
	\subsection{Introduction to Ising model}
	Let us begin with recalling the definition of distances, 
	\[d(i,j)\;:=\;\|i-j\|_\infty\;=\;\max_{1\le k\le d} \card{i_k-j_k}\]
	for $\;i,j\in \Lambda$, and 
	\[d(S,S')\;:=\;\inf\{d(\kappa,\ell):\kappa\in S,\ell\in S'\}\]
	for   $S,S'\subset \mathbb Z^d$.

	We consider the set $\Omega=\{-1,1\}^{\mathbb{Z}^d}$ with $d\ge 2$.  Configurations are denoted by $\boldsymbol{\sigma}=(\sigma_x)_{x\in\mathbb Z^d}$. Let us consider a finite set $\Lambda\subset\mathbb{Z}^d$, configurations $\boldsymbol{\sigma}_{\Lambda}\in \Omega_{\Lambda}:=\{-1, 1\}^{\Lambda}$ and the Hamiltonian with free boundary condition
	\begin{equation}\label{eq:int.free1}\mathrm{H}^{\varnothing}_{\Lambda;\beta, h}(\boldsymbol{\sigma}_{\Lambda}):=-\beta\sum_{\{i,j\}\in\Lambda}f(i,j)\sigma_i\sigma_j-h\sum_{i\in\Lambda}\sigma_i, 
	\end{equation}
	where $\beta\in\mathbb{R}_{\ge0}$ is the inverse temperature, $h\in\mathbb{R}$ is the external field, and the interaction $f(\cdot, \cdot)$ is defined as 
	\begin{equation}\label{eq:int.free2}
		f(i,j)=\left\{\begin{array}{cl} 1 & \quad\mbox{ if }\|i-j\|_\infty=1\\ 0 & \quad\mbox{ otherwise }
		\end{array}\right..
	\end{equation}
	For each configuration $\boldsymbol{\sigma}_\Lambda\boldsymbol{\omega}_{\Lambda^c}\in\Omega$, Hamiltonian is defined as
	\begin{align}\label{eq:int.is1}\mathrm{H}^{\boldsymbol{\omega}}_{\Lambda; \beta, h}(\boldsymbol{\sigma}_{\Lambda}\boldsymbol{\omega}_{\Lambda^c})&=\mathrm{H}^{\varnothing}_{\Lambda;\beta, h}(\boldsymbol{\sigma}_{\Lambda})-\beta\sum_{i \in\Lambda\atop j\in \Lambda^c}f(i,j)\sigma_i\omega_j,
	\end{align}
	where the interaction $f(\cdot, \cdot)$ is defined in \eqref{eq:int.free2}, a configuration $\boldsymbol{\sigma}_\Lambda\boldsymbol{\omega}_{\Lambda^c}\in\Omega$ includes two parts $\boldsymbol{\sigma}_\Lambda\in =\Omega_{\Lambda}$, and $ \boldsymbol{\omega}_{\Lambda^c}\in \{-1, 1\}^{\Lambda^c}$ which is usually called a boundary of the systems, or configurations are frozen outside of the finite set $\Lambda$, and the term,
	\[\beta\sum_{i \in\Lambda\atop j\in \Lambda^c}f(i,j)\sigma_i\omega_j,\]
	refers to \emph{the interaction between the internal and external components of the system}.   
	
	The \emph{partition function} with free boundary condition in $\Lambda$ is  
	\begin{equation}\label{eq:int.is2}Z^{\varnothing}_{\Lambda}(\beta,h)\;=\;\sum_{\boldsymbol{\sigma}_{\Lambda}\in\Omega_{\Lambda}}\exp\left(-\mathrm{H}^{\varnothing}_{\Lambda;\beta,h}(\boldsymbol{\sigma}_{\Lambda})\right),\end{equation}
	the (finite-volume)  free energy function (pressure function) with free boundary condition is
	\begin{equation}\label{eq:int.is3}P^{\varnothing}_{\Lambda}(\beta,h)\;=\;\frac{1}{\card\Lambda}\log Z^{\varnothing}_{\Lambda}(\beta,h).\end{equation}
	The \emph{partition function} with $\boldsymbol{\omega}-$boundary condition in $\Lambda$ is  
	\begin{equation}\label{eq:int.is2}Z^{\boldsymbol{\omega}}_{\Lambda}(\beta,h)\;=\;\sum_{\boldsymbol{\sigma}_{\Lambda}\in\Omega_{\Lambda}}\exp\left(-\mathrm{H}^{\boldsymbol{\omega}}_{\Lambda;\beta,h}(\boldsymbol{\sigma}_{\Lambda}\boldsymbol{\omega}_{\Lambda^c})\right),\end{equation}
	the (finite-volume) free energy function with $\boldsymbol{\omega}-$boundary condition in $\Lambda$  is
	\begin{equation}\label{eq:int.is3}P^{\boldsymbol{\omega}}_{\Lambda}(\beta,h)\;=\;\frac{1}{\card\Lambda}\log Z^{\boldsymbol{\omega}}_{\Lambda}(\beta,h).\end{equation}
	The thermodynamic free energy function $p^{\#}$ is obtained through the thermodynamic limit
	\begin{equation}\label{eq:int.is4}p^{\#}(\beta,h)\;=\;\lim\limits_{\Lambda\uparrow\mathbb{Z}^d}P^{\#}_{\Lambda}(\beta,h)\end{equation}
	in Fisher sense, where $\#:= \varnothing \mbox{ or }\boldsymbol{\omega}$. 
	
	\subsection{Ising model in strong field ($h\ne 0$)}\label{subsection.Ising.strong.Xuan1}
	We utilize the fact that thermodynamic pressure is independent of boundary conditions (for reference, please take a look at Theorem 3.8 \cite{FV17}) and, for the sake of algebraic convenience, we will focus in this section on ``plus'' boundary conditions: $\omega_i=1$ for all $i\not\in\Lambda$.  The interaction between the inside and outside of the system can be described as the following term
	\[
	\beta\sum_{\scriptstyle i\in\Lambda, \,j\not\in\Lambda\atop\scriptstyle \|i-j\|_\infty=1}\sigma_i\;.
	\]
	To get the presentation of partition function, we add and subtract $1$ to each term in this Hamiltonian and 
	for each $\boldsymbol{\sigma}_{\Lambda}\in\Omega_{\Lambda}$, let us introduce the set
	\begin{equation}\label{eq:Is.mag3}
		\Lambda^-(\boldsymbol{\sigma}_{\Lambda})\;=\;\{i\in\Lambda:\;\sigma_i=-1\}\;.
	\end{equation}
	We obtain 
	\begin{align}\label{eq:Is.mag4}
		\mathrm{H}^{+}_{\Lambda;\beta,h}(\boldsymbol{\sigma}_{\Lambda})=&-\beta\card{\mathcal{E}_{\Lambda}}-h\card{\Lambda}+2\beta\card{\partial\Lambda^-(\boldsymbol{\sigma}_{\Lambda})} +2h\card{\Lambda^-(\boldsymbol{\sigma}_{\Lambda})}
	\end{align}
	where  
	\begin{eqnarray*}\partial\Lambda^-(\boldsymbol{\sigma}_\Lambda)&=&\Big\{\{i,j\}:i\in \Lambda^-(\boldsymbol{\sigma}_{\Lambda}), j\notin\Lambda^-(\boldsymbol{\sigma}_{\Lambda}),\, \|i-j\|_\infty=1\Big\}
	\end{eqnarray*}
	and 
	\begin{equation}\label{eq.Xuan.edge.1}
		\mathcal{E}_{\Lambda}=\big\{\{i,j\}\subset \mathbb{Z}^d:\{i,j\}\cap\Lambda\ne\varnothing,\,\|i-j\|_\infty=1\big\}\;.
	\end{equation}
	Each $\boldsymbol{\sigma}_{\Lambda}$ corresponds one-to-one to a term of deviation from the ground state $\Lambda^-(\boldsymbol{\sigma}_{\Lambda})$ (the configuration with minimal energy), which is the ``all $+1$'' configuration. As a consequence, the partition function can be expressed in terms of deviations from the ground state:
	\begin{equation}\label{eq:Is.mag5}
		Z^{+}_{\Lambda}(\beta,h)\;=\;\exp(\beta |\mathcal E_{\Lambda}|+h|\Lambda|)\Xi^{\mathrm{LF}}_{\Lambda}(\beta,h),
	\end{equation}
	where  \emph{the large field polymers partition function} $\Xi^{\rm LF}_{\Lambda}(\beta, h)$ is given as
	\[\Xi^{\mathrm{LF}}_{\Lambda}(\beta,h):=\sum_{\Lambda^-\subset \Lambda}\exp\left(-2\beta|\partial\Lambda^-|-2h|\Lambda^-|\right).\]
	From the definition of the distance, let us declare that two vertices $i,j\in\Lambda^-$ are \emph{ connected} if and only if $d(i,j)\le1$, and we can  decompose $\Lambda^-$ into maximally connected components (For example, see Figure \ref{fig:Ising_models1}),
	\[\Lambda^-=S_1\cup...\cup S_n\]
	with  $d(S_\ell,S_k)>1$ for $ \ell\ne k$. 
	\begin{figure}[htb]
		\begin{center}
			\resizebox{0.45\textwidth}{!}{
				\begin{tikzpicture}
					
					\def\rows{10}
					\def\cols{10}
					

					\foreach \x/\y in {0/1, 0/0, 0/2, 1/0, 0/3, 0/4, 0/5,
						0/10, 1/10, 0/9, 1/9, 8/1,
						1/1,  3/1, 4/1, 5/1, 5/2, 5/3, 6/2,6/3, 6/4, 4/4, 4/5,
						7/1,  8/2, 9/1, 10/2, 3/5, 4/6, 5/5, 10/10, 9/9,10/9, 9/10, 9/2, 2/10
					} {
						\fill[gray!15] (\x,\y) circle(0.55);
						\draw[thick] (\x,\y) circle(0.55);
						\node at (\x,\y) {\textbf{$-$}};
					}
					\foreach \x/\y in {  0/6, 0/7, 0/8,
						1/2, 1/3, 1/4, 1/5, 1/6, 1/7, 1/8,
						2/0, 2/1, 2/2 ,2/3, 2/4, 2/5, 2/6, 2/7, 2/8, 2/9, 
						3/0, 3/2, 3/3, 3/6, 3/4,  3/7, 3/8, 3/9, 3/10, 4/0, 4/2, 4/3, 4/7, 4/8, 4/9, 4/10, 5/0, 5/4, 5/6, 5/8, 5/9, 5/10, 5/7, 6/0, 6/1, 6/5,6/6, 6/7, 6/8, 6/9, 6/10,
						7/0, 7/2, 7/4, 7/5, 7/7, 7/8, 7/9, 7/10,7/3, 7/6,
						8/0,  8/4, 8/5, 8/7, 8/8, 8/9, 8/10,8/3, 8/6,
						9/0,  9/4, 9/5, 9/7, 9/8, 9/3, 9/6,
						10/0, 10/1,  10/4, 10/5, 10/7, 10/8, 10/3, 10/6 
					} {
						\node at (\x,\y) {\textbf{$+$}};
					}
					\foreach \x in {0,...,\cols} {
						\foreach \y in {0,...,\rows} {
							\ifnum\x<10
							\draw (\x+0.15,\y) -- (\x+0.85,\y);
							\fi
							\ifnum\y<10
							\draw (\x,\y+0.15) -- (\x,\y+0.85);
							\fi
						}
					}
					\foreach \x in {0,...,\cols} {
						\foreach \y in {0,...,\rows} {
							\draw[thick] (\x, \y) circle (0.15);
						}
					}
					
					
			\end{tikzpicture}}
		\end{center}
		\caption{ A configuration of the Ising model. Each connected component of the shaded area delimits one of the polymers $S_1,\ldots, S_6$.}
		\label{fig:Ising_models1}
	\end{figure}
	Before giving an alternative expression of large field polymers partition function, let us introduce the definitions of compatible and incompatible objects as follow.
	\begin{defn}
		Let us define $S, S^{\prime}$ to be  compatible, and denote $S\sim S^{\prime}$, if $d(S,S^{\prime})\ge 2$. Otherwise $S$ and $S^{\prime}$ are incompatible and we denote $S\nsim S^{\prime}$.
	\end{defn} 
	\noindent
	Denote
	\begin{equation}\label{eq.Ising.model.Xuan.1}
		\zeta(S,S')=\left\{\begin{array}{cl} 1 & \quad \mbox{ if }\quad S\sim S'\\ 0 & \quad \mbox{ if }\quad S\nsim S'\end{array}\right..
	\end{equation}
	Since $|\partial_e\Lambda^-|=\sum_{i=1}^n|\partial_e S_i|$ and $|\Lambda^-|=\sum_{i=1}^n|S_i|$, then the expression of  large field polymers partition function can be rewritten in the following form
	\begin{align}\label{eq:Is.mag7}
		\Xi_{\Lambda}^{\mathrm{LF}}(\beta,h)=1+\sum_{n\ge 1}\frac{1}{n!}\sum_{(S_1,\ldots, S_n)\in\mathcal{P}_{\Lambda}^n}\prod_{1\le i<j\le n}\zeta(S_i, S_j)\prod_{i=1}^nw_{\beta,h}(S_i),
	\end{align}
	with 
	\[\mathcal P_{\Lambda}\;=\;\left\{S\subset \Lambda:\; S\; \mbox{is non-empty and connected of }\;\Lambda\right\}\]
	and 
	\begin{equation}\label{eq:Is.mag.add8}
		w_{\beta,h}(S_i)\;=\;\exp(-2\beta |\partial S_i|-2h|S_i|).
	\end{equation} 
	Let $\mathcal{C}[n]$ be the set of connected graph on $n$ vertices. Let $E(G)$ be the set of edges in the graph $G$. We set  
	\[\mathcal{P}_{\mathbb Z^d}=\left\{S\subset \mathbb Z^d:\; S\; \mbox{is non-empty and connected of }\;\mathbb Z^d\right\}.\]
	Using the cluster expansion theory (see reference \cite{BFP10, FP07, NF20} for more detail), we then can express the pressure with $+1$-boundary condition in $\Lambda$ as following form:
	\begin{equation}\label{eq:Is.mag.strong.16}
		P^{+}_{\Lambda}(\beta, h)=\beta\frac{|\mathcal{E}_{\Lambda}|}{|\Lambda|}+h+\frac{1}{|\Lambda|}\log \Xi^{\rm LF}_{\Lambda}(\beta,h),
	\end{equation}
	where
	\begin{align}\label{eq:Is.X.mag.stronggg16}
		\log \Xi^{\rm LF}_{\Lambda}(\beta,h)=\sum_{n=1}^{\infty}\frac{1}{n!}\sum_{(S_1\ldots S_n)\in\mathcal{P}_{\Lambda}^n}a_n^T(S_1,\ldots,S_n)\prod_{i=1}^nw_{\beta,h}(S_i),
	\end{align}
	with the Ursell function $a_n^T(\cdot)$ defined as 
	\begin{equation}\label{Ursell.F.X1}
		a^T_n(S_1,\ldots, S_n):= \sum_{G\in\mathcal{C}[n]}\prod_{\{i,j\}\in E(G)}[\zeta(S_i, S_j)-1].
	\end{equation}
	Expression \eqref{eq:Is.X.mag.stronggg16} is the well-known cluster expansion.
	
	The next theorem establishes a sufficient condition for the existence of the pressure function as $\Lambda \to \mathbb{Z}^d$ in the thermodynamic limit and allows us to verify the analytic domain of the pressure function.
	\begin{Thm}\label{Thm.Xuan.strong.1} If there exists $a>0$ such that
		\begin{align}\label{eq.Xuan.new.GK.1111}
			\sup_{x\in\mathbb Z^d}\sum_{x\in S\atop S\in\mathcal{P}}w_{\beta,h}(S)\eee^{a|[S]_1|}\;\le\; \eee^a-1
		\end{align}
		with
		\begin{equation}\label{eq.Xuan.new.1111}
			[S]_1\;:=\;\{j\in \mathbb{Z}^d:d(j,S)\le 1\},
		\end{equation}
		then the following hold:
		
		(i.) Denote $\bbd{w}_{\beta,h}:=\{w_{\beta, h}(S)\}_{S\in \mathcal{P}_{\mathbb Z^d}}$. For a finite subset $S\in\mathcal{P}_{\mathbb Z^d}$, $\card{\Gamma}_S(\bbd{w}_{\beta,h})$ defined in
		\begin{align}\label{eq.Xuan.Robcor13}|\Gamma|_S(\bbd{w}_{\beta,h})=1+\sum_{n=1}^{\infty}\frac{1}{n!}\sum_{(S_1,\ldots, S_n)\in\mathcal{P}_{\mathbb V}^n}|a^T_{n+1}(S,\ldots,S_n)|\prod_{i=1}^{n}w_{\beta,h}({S_i})
		\end{align}
		converges. Furthermore, for $S\in\mathcal{P}_{\mathbb Z^d}$,
		\[\card{\Gamma}_S(\bbd{w}_{\beta,h})\le \eee^{a|S|}.\]
		
		(ii.) The free energy function  \eqref{eq:Is.mag.strong.16} converges absolutely and uniformly in $\Lambda$, and 
		\begin{equation}\label{eq.Xuan.new.12} p^+(\beta, h)=\beta d+h+\sum_{X\subset \mathbb Z^d: X\ni 0}\frac{1}{|X|}\Psi(X),\end{equation}
		where for each $X\subset \mathbb Z^d$, $\Psi(\cdot)$ is defined as follows
		\begin{align}
			\Psi(X)=\sum_{n=1}^{\infty}\frac{1}{n!}\sum_{(S_1\ldots S_n)\in\mathcal{P}_{\mathbb Z^d}^n\atop S_1\cup \cdots \cup S_n=X}a_n^T(S_1,\ldots,S_n)\prod_{i=1}^nw_{\beta,h}(S_i)\nonumber
		\end{align}
	\end{Thm}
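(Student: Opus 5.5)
The plan is to deduce Theorem~\ref{Thm.Xuan.strong.1} from the Fernández--Procacci convergence criterion \cite{FP07}, in the form refined in \cite{BFP10,NF20}, applied to the hard-core polymer gas \eqref{eq:Is.mag7}. In that gas the polymers are the sets $S\in\mathcal P_{\mathbb Z^d}$ and the incompatibility relation is $S\nsim S'$ iff $d(S,S')\le 1$. The criterion states the following. Suppose there are numbers $\mu(S)>0$ with
\[
w_{\beta,h}(S)\,\Xi_{\mathcal P_{\mathbb Z^d}\nsim S}(\boldsymbol\mu)\;\le\;\mu(S)\qquad\text{for all } S,
\]
where $\Xi_{\mathcal P\nsim S}(\boldsymbol\mu)=\sum_{\{S_1,\dots,S_k\}\ \text{pairwise compatible},\ S_i\nsim S}\prod_i\mu(S_i)$. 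Then $|\Gamma|_S(\bbd w_{\beta,h})\le \mu(S)/w_{\beta,h}(S)$. We take the ansatz $\mu(S)=w_{\beta,h}(S)\,\eee^{a|S|}$, so that part (i) becomes exactly $|\Gamma|_S\le \eee^{a|S|}$, and it remains to verify $\Xi_{\mathcal P\nsim S}(\boldsymbol\mu)\le \eee^{a|S|}$.

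The verification goes as follows.
\begin{enumerate}[(1)]
\item Every $S'\nsim S$ contains a point of $[S]_1$. Two polymers that are mutually compatible are disjoint, and in fact at distance at least $2$. So a compatible family of polymers incompatible with $S$ can be injectively labelled by distinct points $x\in[S]_1$, with each polymer $S_i$ containing its label $x_i$.
\item This labelling gives the bound
\[
\Xi_{\mathcal P\nsim S}(\boldsymbol\mu)\le\prod_{x\in[S]_1}\Bigl(1+\sum_{S'\ni x}w_{\beta,h}(S')\eee^{a|S'|}\Bigr).
\]
\item Since $|S'|\le|[S']_1|$, hypothesis \eqref{eq.Xuan.new.GK.1111} bounds each factor by $1+(\eee^a-1)=\eee^a$, giving $\Xi\le \eee^{a|[S]_1|}$.
\end{enumerate}

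This last bound is the main obstacle: it gives $\eee^{a|[S]_1|}$, whereas the criterion with our ansatz needs $\eee^{a|S|}$. The hypothesis uses $[S]_1$ precisely so that the bookkeeping can be shifted. I would first try the ansatz $\mu(S)=w_{\beta,h}(S)\eee^{a|[S]_1|}$ instead. Then the right-hand side of the criterion is $\eee^{a|[S]_1|}$, and the computation above closes exactly, using $\eee^{a|[S']_1|}$ directly in the hypothesis. This yields $|\Gamma|_S\le \eee^{a|[S]_1|}$. To reach the sharper $\eee^{a|S|}$ stated in (i), I would try one of two refinements. The first labels each incompatible polymer by a point of $S'\cap[S]_1$ chosen canonically, and absorbs the extra factor through the definition of $w$ together with the sum over $x\in S'$. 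The second uses the Fernández--Procacci form with the partition scheme, in which the incompatibility neighbourhood is counted only through the points of $S$ adjacent to the polymer. If neither works, the bound with $[S]_1$ is what the argument honestly gives, and it suffices for (ii).

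For (ii), note first that $|\mathcal E_\Lambda|/|\Lambda|\to d$ in the Fisher sense. Indeed, the number of nearest-neighbour edges per site in $\mathbb Z^d$ with this edge set is handled by the paper's convention giving $\beta d$, and boundary effects are $O(|\partial\Lambda|/|\Lambda|)$. Next, reorganize \eqref{eq:Is.X.mag.stronggg16} by grouping clusters according to their support $X=S_1\cup\dots\cup S_n$, which gives $\log\Xi^{\rm LF}_\Lambda=\sum_{X\subset\Lambda}\Psi(X)$. We have $a^T_n=0$ unless the incompatibility graph of the cluster is connected. Writing $\sum_{X\subset\Lambda}\Psi(X)=\sum_{x\in\Lambda}\sum_{X\ni x,\,X\subset\Lambda}\Psi(X)/|X|$ and using translation invariance, it remains to bound the tail. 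By part (i), the sum $\sum_{X\ni 0}|\Psi(X)|$ is at most $\sum_{S_1\ni0}w(S_1)|\Gamma|_{S_1}$, which is finite by the hypothesis (a rooted cluster sum). Hence the contribution of the clusters meeting $\Lambda^c$ or $\partial\Lambda$ is $o(|\Lambda|)$. Dominated convergence then gives absolute and uniform convergence and the formula \eqref{eq.Xuan.new.12}.
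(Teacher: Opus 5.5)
Your argument is the paper's own. The paper's proof announces $\mu_S=w_{\beta,h}(S)\eee^{a(S)}$, but it actually verifies the Fern\'andez--Procacci condition with $\eee^{a|[S_i]_1|}$ on the left and $\eee^{a|[S]_1|}$ on the right. It does so through the same injective labelling of compatible families by distinct points of $[S]_1$. Its part (ii) is likewise your $1/|X|$ rearrangement, translation invariance and an $O(|\partial\Lambda|)$ boundary term. So both arguments establish $|\Gamma|_S(\bbd{w}_{\beta,h})\le \eee^{a|[S]_1|}$, and you are right that this falls short of the stated $\eee^{a|S|}$.

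Do not spend effort on the two refinements, because under \eqref{eq.Xuan.new.GK.1111} the bound $\eee^{a|S|}$ is false. Take $S=\{0\}$ and $w_0:=w_{\beta,h}(\{0\})$.
\begin{itemize}
\item Each of the $3^d$ singletons $\{x\}$ with $\|x\|_\infty\le 1$ is incompatible with $\{0\}$ and has $|a^T_2(\{0\},\{x\})|=1$. Hence $|\Gamma|_{\{0\}}(\bbd{w}_{\beta,h})\ge 1+3^d w_0$.
\item Fix $\beta\ge 0$, let $h\to\infty$ and take $a\le 1$. Since $w_{\beta,h}(S)\le \eee^{-2h|S|}$ and the number of connected $S\ni 0$ with $|S|=k$ grows at most exponentially in $k$, polymers with $|S|\ge 2$ contribute $O(\eee^{-4h})=O(w_0^2)$ to the left side of \eqref{eq.Xuan.new.GK.1111}.
\item Choose $\eee^a-1=2w_0$, so $a\le 2w_0$. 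Then that left side equals $w_0\eee^{3^d a}+O(w_0^2)=w_0+O(w_0^2)\le \eee^a-1$, and the hypothesis holds.
\item Yet $\eee^{a|S|}=1+2w_0<1+3^d w_0\le|\Gamma|_{\{0\}}(\bbd{w}_{\beta,h})$.
\end{itemize}
Hence (i) must be read with $[S]_1$ in place of $S$. As you say, that corrected bound is all (ii) needs, since
$\sum_{X\ni 0}|\Psi(X)|\le\sum_{S\ni 0}w_{\beta,h}(S)|\Gamma|_S(\bbd{w}_{\beta,h})\le\sum_{S\ni 0}w_{\beta,h}(S)\eee^{a|[S]_1|}\le \eee^a-1$.

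Finally, your sentence on $|\mathcal E_\Lambda|/|\Lambda|$ is not an argument. With $f$ as in \eqref{eq:int.free2}, each site has $3^d-1$ neighbours, so the limit is $(3^d-1)/2$. The constant $\beta d$ in the formula for $p^+$ is therefore correct only for nearest-neighbour bonds. This inconsistency comes from the paper, whose proof simply asserts $|\mathcal E_\Lambda|/|\Lambda|\to d$.
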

	As stated in the introduction, our primary goal in this subsection is to determine the domain of the inverse temperature \(\beta\) and the external magnetic field \(h\) for which the pressure function \(p(\beta, h)\) is analytic. This is outlined in the following theorem.
	\begin{Thm}\label{main.result.X1} The pressure function $p(\beta,h)$ is analytic in the domain $\mathcal{D}$ with
		\[\mathcal{D}\;=\;\left\{(\beta, h)\in\mathbb R\times \mathbb R:2h
		\;\ge\;\varphi^{\rm st}(d)\right\}\]
		where $\varphi^{\rm st}(d)$ is defined by
		\begin{equation}\label{eq.min.strong.RX.1}
			\varphi^{\rm st}(d):=\min_{a> 0}\phi^{\rm st}(a)=(4d+1)\log\left[1+\frac{1}{4d}\right]+\log(4d),
		\end{equation}
		with 
		\[\phi^{\rm st}(a):=(4d+1)a-\log(\eee^a-1).\]
	\end{Thm}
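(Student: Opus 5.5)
The plan is to verify the convergence hypothesis \eqref{eq.Xuan.new.GK.1111} of Theorem \ref{Thm.Xuan.strong.1} on $\mathcal D$ (and on a complex neighbourhood of it), and then deduce analyticity from the uniform convergence of the cluster expansion.

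\emph{Step 1: reduce to a purely combinatorial sum.} For real $\beta\ge 0$ (and for complex parameters with $\mathrm{Re}\,\beta\ge0$) one has
\[
|w_{\beta,h}(S)|\le \eee^{-2\,\mathrm{Re}(h)\,|S|},
\]
so the boundary term can simply be dropped. Condition \eqref{eq.Xuan.new.GK.1111} then follows from
\[
\sup_{x}\sum_{S\ni x,\ S\in\mathcal P}\eee^{-2h|S|+a|[S]_1|}\le \eee^a-1 .
\]
I expect the author allows $\beta\in\mathbb R$ in $\mathcal D$ only loosely. I would restrict to $\beta\ge0$, or else absorb $|\partial S|\le c|S|$ into $h$.

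\emph{Step 2: a generating-function bound.} This is the step I expect to be the main obstacle. I would adapt the Balister--Bollob\'as method. Explore a connected set $S\ni x$ vertex by vertex along a spanning tree, in a fixed order. Each step either reveals a new vertex of $S$ or marks a neighbour as ``not in $S$''. The aim is to encode the pair $(S,[S]_1)$ so that each vertex of $S$ contributes a factor $z=\eee^{-2h}$, and so that the growth of $[S]_1$ is controlled by at most $4d+1$ units of $a$ per added vertex.

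Concretely, I would aim for an inequality of the form
\[
\sum_{S\ni x,\,|S|=n}\eee^{a|[S]_1|}\le \eee^{a}\bigl(\eee^{(4d+1)a}\bigr)^{n}\,\kappa^{\,n}
\]
with the counting constant $\kappa$ absorbed into the optimisation. Equivalently, I would look for a functional inequality $F\le z\,\eee^{(4d+1)a}(1+F)^{k}$ for $F(z)=\sum_{S\ni x}z^{|S|}\eee^{a|[S]_1|}$. The resulting sufficient condition should be $\eee^{-2h+(4d+1)a}\le \eee^a-1$, i.e.
\[
2h\ge (4d+1)a-\log(\eee^a-1)=\phi^{\rm st}(a).
\]
If the exact constant $4d+1$ does not come out directly, I would first try charging each new vertex with its new lattice neighbours in the coordinate directions, $2d$ of them, doubled for the exploration choice.

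\emph{Step 3: optimisation and analyticity.} Minimising $\phi^{\rm st}(a)$ gives $\frac{\eee^a}{\eee^a-1}=4d+1$, i.e. $\eee^a=1+\frac1{4d}$. Substituting yields $\varphi^{\rm st}(d)=(4d+1)\log(1+\frac1{4d})+\log(4d)$, as in \eqref{eq.min.strong.RX.1}.

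For $(\beta,h)$ in the interior of $\mathcal D$, the hypothesis holds strictly, and it remains valid for complex $(\beta,h)$ in a small polydisc, since $|w|$ depends only on the real parts. Theorem \ref{Thm.Xuan.strong.1}(ii) then gives absolute convergence of \eqref{eq.Xuan.new.12}, uniformly in $\Lambda$. Each term is analytic and the series converges uniformly, so $p^+$ is analytic there. On the boundary $2h=\varphi^{\rm st}(d)$, I would use the fact that the bound of Step 2 holds with a little slack, or else interpret $\mathcal D$ up to its boundary.
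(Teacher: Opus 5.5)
Your Steps 1 and 3 match the paper. It too drops the factor $\eee^{-2\beta|\partial S|}\le 1$ (implicitly for $\beta\ge 0$, as you rightly note) and minimises $\phi^{\rm st}$ at $\eee^a=1+\frac{1}{4d}$. The gap is Step 2, which as written does not give the stated constant. You charge the whole $(4d+1)a$ to the growth of $[S]_1$ and then try to make the lattice-animal entropy disappear, but neither formulation does that. A factor $\kappa^n$ cannot be ``absorbed into the optimisation'': summing $\eee^{a}\bigl(z\kappa\,\eee^{(4d+1)a}\bigr)^n$ over $n$ forces $2h\ge(4d+1)a+\log\kappa-\log(\eee^a-1)+\log(2\eee^a-1)$, which is strictly worse. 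Alternatively, suppose $F$ is dominated by the solution of $G=w(1+G)^{k}$ with $w=z\,\eee^{(4d+1)a}$. The fixed-point argument then gives $G\le\eee^a-1$ for $w\le(\eee^a-1)\eee^{-ka}$. With $k=2d$ this reads $z\,\eee^{(6d+1)a}\le\eee^a-1$, which only yields the smaller domain $2h\ge(6d+1)\log(1+\frac{1}{6d})+\log(6d)$. Your claimed condition $z\,\eee^{(4d+1)a}\le\eee^a-1$ would follow only for $k=0$, i.e.\ with no branching at all. The idea of ``$2d$ neighbours, doubled for the exploration choice'' reverse-engineers the constant rather than giving a counting bound.

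The missing idea is that $4d+1=(2d+1)+2d$ has two different origins.
\begin{enumerate}
\item The crude bound $|[S]_1|\le(2d+1)|S|$ reduces the hypothesis of Theorem \ref{Thm.Xuan.strong.1} to $p(X)\le\eee^a-1$ at $X=\eee^{-2h+(2d+1)a}$. Here $p(X)=\sum_k|\mathcal A_k|X^k$ counts connected sets containing the origin. This bound, like the exponent $2d$ below, presupposes nearest-neighbour adjacency rather than the $\ell^\infty$ distance written in the definitions.
\item The function $p$ is dominated by the limit of the Balister--Bollob\'as recursion $p_{L+1}=X(1+p_L)^{2d}$, which solves $p=X(1+p)^{2d}$ and is increasing in $X$. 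Imposing $p(X_1)=\eee^a-1$ gives $X_1=(\eee^a-1)\bigl(1+(\eee^a-1)\bigr)^{-2d}=(\eee^a-1)\eee^{-2da}$. The extra $2da$ is therefore the branching factor $(1+p)^{2d}$ evaluated at the fixed point $1+p=\eee^a$, not a neighbourhood-growth term.
\end{enumerate}
The condition $\eee^{-2h+(2d+1)a}\le X_1$ is then exactly $2h\ge\phi^{\rm st}(a)$. In your notation, the correct statement is that $F$ is dominated by the solution of $G=z\,\eee^{(2d+1)a}(1+G)^{2d}$, with $\eee^{(2d+1)a}$ in front, not $\eee^{(4d+1)a}$.
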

	The proof of Theorem \ref{Thm.Xuan.strong.1} and Theorem \ref{main.result.X1} follow from cluster expansion theory, which is discussed in more detail in Subsection \ref{Sub.Xuan.new.1}. 
	
	\medskip 
	\paragraph{\bf Comparison of analyticity domains.} We compare our estimations with the results obtained by Friedli and Velenik in their book \cite{FV17}, which presents the standard established findings on this topic. To describe their result, we define
	\begin{eqnarray}
		\eta(h,d) &=& \sum_{k=1}^\infty (2d)^{2k} \eee^{(2d+1 - 2h)k}, \nonumber \\
		H^+ &=& \{ h : \mathrm{Re}\, h \ge \bar{h} \}
	\end{eqnarray}
	where
	\[
	\bar{h} := \inf \{ h > 0 : \eta(h, d) < 1 \}.
	\]
	According to Friedli and Velenik, if $h \in H^+$, then the pressure function $p^+(\beta, h)$ is analytic. A straightforward computation yields the estimate
	\begin{equation}
		2\bar{h} = \log(8d^2) + 2d + 1.
	\end{equation}
	By substituting $a = \log(1+1/(4d+1))$ into the function $\phi^{\rm st}(a)$, we obtain
	\begin{align} \label{eq.Xuan.com.1}
		2\bar{h} = \log(8d^2) + 2d + 1 
		\ge (4d+1)\log\left(1+\frac{1}{4d+1}\right)+\log(4d+1)
		\ge \min_{a > 0} \phi^{\rm st}(a)=\varphi^{\rm st}(d).
	\end{align}
	Inequality~\eqref{eq.Xuan.com.1} indicates that our estimate is less restrictive than the bound established by Friedli and Velenik~\cite{FV17} (see Figure~\ref{Fig.1} for more details). To compare the two bounds, we define the ratio
	\[
	r^{\rm st}(d) := \frac{\varphi^{\rm st}(d)}{\log(8d^2) + 2d + 1},
	\]
	where $\varphi^{\rm st}(d)$ is defined in \eqref{eq.min.strong.RX.1}. Figure~\ref{Fig.2} shows that this ratio decays exponentially, tending to zero as $d \to \infty$.
	\begin{figure}[h]
		\centering
		\includegraphics[width=0.5\linewidth]{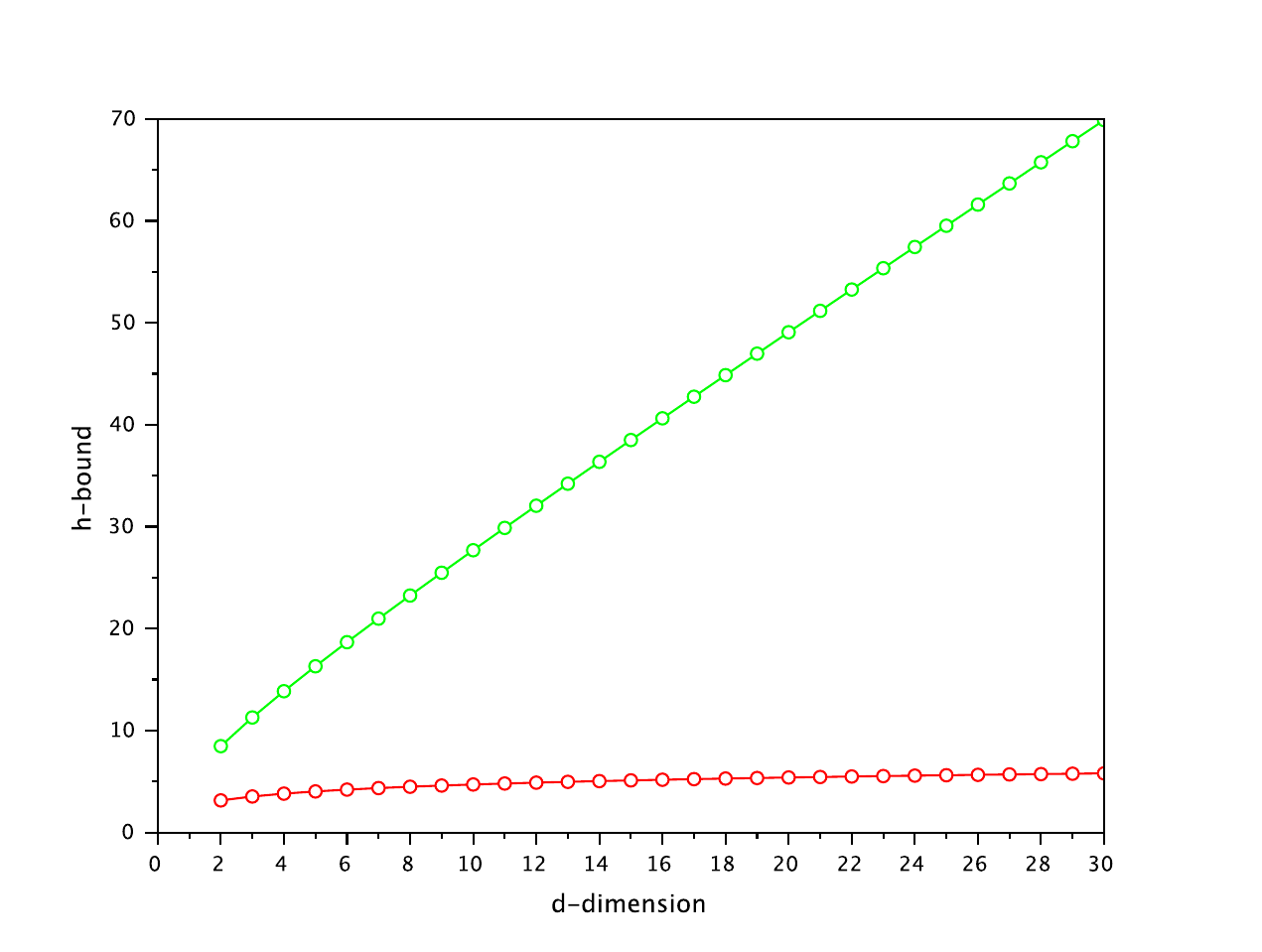}
		\caption{A comparison with the result of  Friedli and Velenik (The green and red lines  represent the Friedli--Velenik bound and  our result, respectively.)}
		\label{Fig.1}
	\end{figure}
	\begin{figure}[h]
		\centering
		\includegraphics[width=0.5\linewidth]{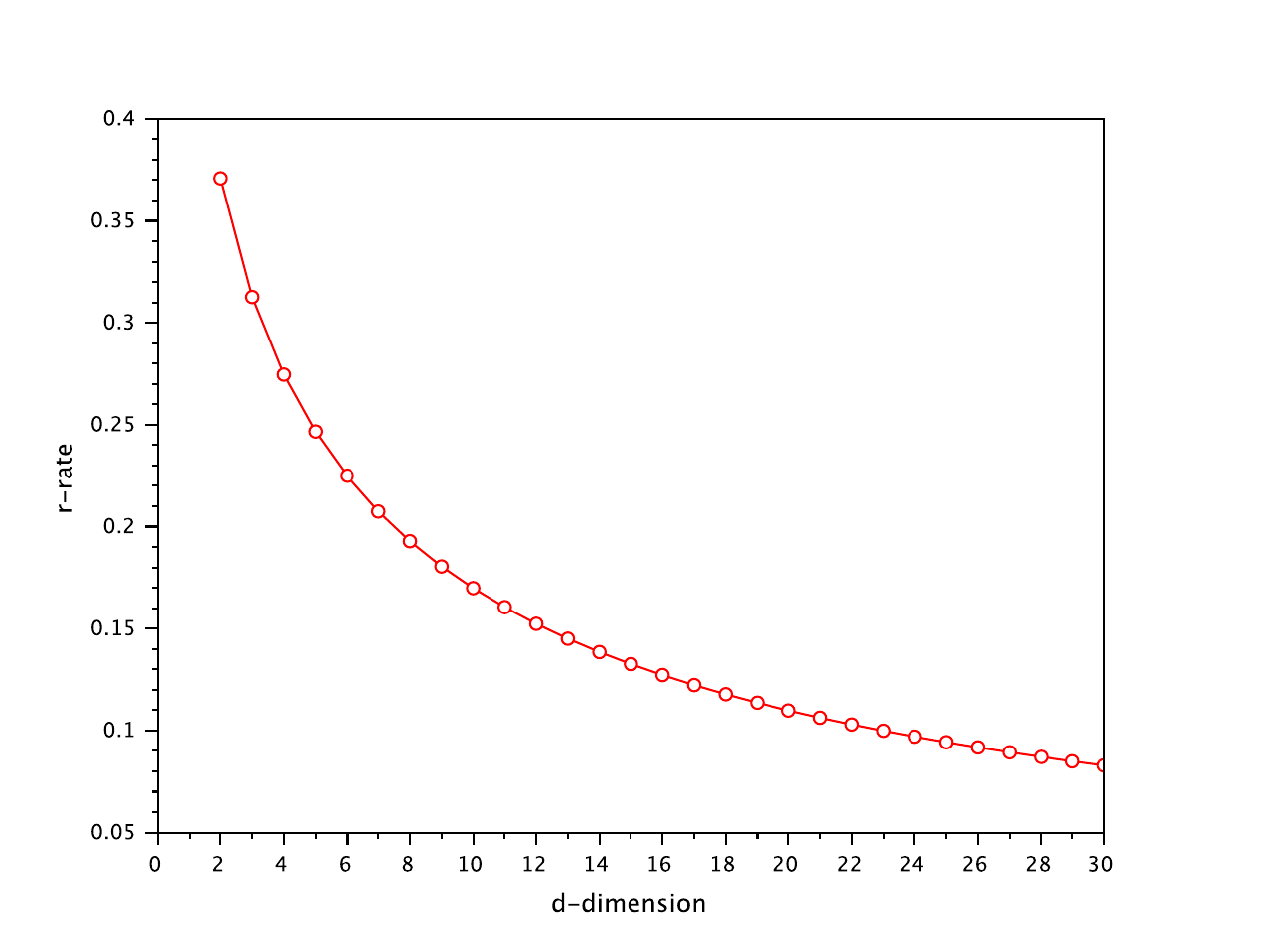}
		\caption{The rate between our bound and the Friedli--Velenik bound}
		\label{Fig.2}
	\end{figure}

	\subsection{The Ising model at high-temperature without magnetic field $(h=0)$}\label{subsec.X.high.tem1}
	In the case of high temperatures and a vanishing external field, for simplicity in computation, we consider the Ising model with free boundary conditions. Its partition function is defined as:
	\begin{equation}\label{eq:Is.mag.high1}
		Z^{\varnothing}_{\Lambda}(\beta,0)\;=\;\sum_{\sigma_{\Lambda}}\exp\Bigg\{\beta\sum_{x,y\in\Lambda\atop \card{x-y}=1}\sigma_x\sigma_y\Bigg\}.
	\end{equation}
	We use the identity
	\begin{equation}\label{eq:Is.mag.high2}
		\exp\{\beta\sigma_x\sigma_y\}\;=\;\cosh\beta+\sigma_x\sigma_y\sinh\beta,
	\end{equation}
	to rewrite it in  the form
	\begin{eqnarray}\label{eq:Is.mag.high3}
		Z^{\varnothing}_{\Lambda}(\beta,0)&=&(\cosh\beta)^{\card{\mathcal{E}_{\Lambda}}}\sum_{E\subset \mathcal{E}_{\Lambda}}
		(\tanh\beta)^{\card{E}}\sum_{\sigma\in\Omega_{\Lambda}}\prod_{\{i,j\}\in E}\sigma_i\sigma_j\nonumber\\&=&(\cosh\beta)^{\card{\mathcal E_{\Lambda}}}\sum_{E\subset \mathcal E_{\Lambda}}
		(\tanh\beta)^{|E|}\sum_{\sigma\in\Omega_{\Lambda}}\prod_{i\in\Lambda}\sigma_i^{I(i,E)}\nonumber\\&=&(\cosh\beta)^{\card{\mathcal{E}_{\Lambda}}}\sum_{E\subset \mathcal E_{\Lambda}}
		(\tanh\beta)^{\card{E}}\prod_{i\in\Lambda}\bigg[\sum_{\sigma_i=\pm 1}\sigma^{I(i,E)}\bigg]\nonumber\\&=&2^{|\Lambda|}(\cosh\beta)^{|\mathcal E_{\Lambda}|}\sum_{E\subset \mathcal{E}_{\Lambda}}
		(\tanh\beta)^{|E|}\prod_{i\in\Lambda}\one{i\in\Lambda:\;I(i,E)\;  \rm{ even}}
	\end{eqnarray}
	where $I(i,E)\;=\;\card{\big\{j\in\mathbb{Z}^d:\{i,j\}\in E\big\}}$ is the incidence number and $\mathcal{E}_{\Lambda}$ is defined as in \eqref{eq.Xuan.edge.1}.
	Let us denote 
	\begin{eqnarray}\label{eq:Is.mag.high4}
		\mathcal{E}_{\Lambda}^{\rm{even}}&=&\big\{E\subset \mathcal{E}^{b}_{\Lambda}:I(i,E)\text{ is even for all }i\in\Lambda\big\}.
	\end{eqnarray}
	To improve our understanding of the convergence conditions in comparison to previous findings, such as those presented by Simon in \cite{Simon93} or the recent results shared by Aldo Procacci in his unpublished lecture notes \cite{AP23}, we would like to revisit a lesser-known result in graph theory known as Veblen's theorem concerning about a property  of a connected graph $E\in\mathcal{E}^{\rm even}_{\Lambda}$. This theorem was established by Veblen in \cite{Veb12}. 
	\begin{lem}[Veblen's Theorem]\label{thm.Veb} The set of edges of a finite connected graph can be written as a union of meaning edge-disjoint simple cycles if and only if every vertex has an even number of incident edges.
	\end{lem}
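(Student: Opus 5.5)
The plan is to prove the two directions separately. The "only if" direction is a counting argument. The "if" direction goes by strong induction on the number of edges, and its core is showing that a nonempty graph with all degrees even contains a simple cycle. Throughout, "graph" means a finite simple graph, which is the relevant case since the elements of $\mathcal{E}^{\rm even}_{\Lambda}$ are sets of nearest-neighbour edges. Nothing earlier in the paper is needed; the lemma is purely combinatorial.

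\emph{Necessity.} Suppose $E=C_1\cup\cdots\cup C_m$, where the $C_k$ are edge-disjoint simple cycles. Fix a vertex $i$. Each $C_k$ passing through $i$ contributes exactly two edges incident to $i$, and each other $C_k$ contributes none. Because the cycles are edge-disjoint, these contributions add up without overlap. Hence $I(i,E)=2\,\#\{k: i\in C_k\}$, which is even.

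\emph{Sufficiency.} We induct on $|E|$. If $|E|=0$, the empty union works. Otherwise, take a graph $G$ with $|E|\ge 1$ in which every vertex has even degree.

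\begin{enumerate}[(a)]
\item \textbf{A simple cycle exists.} Choose a maximal path $v_0v_1\cdots v_\ell$ in $G$, meaning a path with distinct vertices that cannot be extended. Since $E\neq\varnothing$, we have $\ell\ge1$. The endpoint $v_\ell$ has positive degree, and that degree is even, so it is at least $2$. Hence $v_\ell$ has a neighbour other than $v_{\ell-1}$. By maximality this neighbour lies on the path, say it is $v_j$ with $j\le \ell-2$. Then $v_jv_{j+1}\cdots v_\ell v_j$ is a simple cycle $C$.
\item \textbf{Remove the cycle.} Delete the edges of $C$ from $G$. Every vertex of $C$ loses exactly two incident edges, and every other vertex loses none, so all degrees stay even. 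The remaining edge set $E\setminus C$ has fewer edges than $E$. It may be disconnected.
\item \textbf{Apply induction componentwise.} Apply the induction hypothesis to each connected component of $E\setminus C$ that contains at least one edge. This writes each such component as an edge-disjoint union of simple cycles. These unions, together with $C$, partition $E$.
\end{enumerate}

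The only point that needs care is that deleting $C$ can destroy connectivity, so the induction must be run on each component rather than on the whole remainder. This is harmless: the induction hypothesis concerns connected graphs with fewer edges, and each component is such a graph. Alternatively, one can phrase the induction for arbitrary (possibly disconnected) finite graphs with all degrees even, which contains the lemma as a special case and removes the issue entirely. The main substantive step is (a), and it relies precisely on degrees being at least $2$ at the end of a maximal path.
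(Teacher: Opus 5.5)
The paper gives no proof of this lemma: it states it and attributes it to Veblen \cite{Veb12}, so there is no argument in the text to compare yours against. Your proof is the standard one, and it is correct and complete. The degree count for necessity is right. In (a) the longest-path argument is correct, and it is exactly the simplicity of the graph that forces $j\le \ell-2$, hence a genuine cycle of length at least $3$. For multigraphs you would have to allow loops and $2$-cycles, but that is irrelevant for lattice edge sets. In (c) either of your two fixes works. Connectivity plays no role in either direction, so the cleanest version is the one for arbitrary finite graphs with all degrees even.

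Your constructive argument also makes visible something the bare citation hides: the decomposition depends on choices and is generally not unique. For instance, in $\mathbb{Z}^2$ take the boundary of $[0,1]^2$ together with the boundary of $[-1,2]^2\setminus([-1,0]^2\cup[1,2]^2)$. The latter is a $12$-cycle meeting the square only at $(0,0)$ and $(1,1)$. This edge set is also the edge-disjoint union of two $8$-cycles, obtained by re-pairing the four edges at each of those two corners. So the lemma supplies existence only. On its own it does not convert the sum over $\mathcal{E}^{\rm even}_\Lambda$ into the polygon gas $\Xi^{\rm HT}_\Lambda(\beta)$ with each even edge set counted exactly once. That would additionally require a canonical choice of decomposition, e.g.\ a fixed splitting rule at vertices of degree at least $4$.
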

	Each set \( E \subset \mathcal{E}_{\Lambda}^{\text{even}} \) can be represented as a graph, where the vertices correspond to the endpoints of the edges. This graph can then be decomposed into maximally connected components. This decomposition results in a partition of the set of edges, expressed as \( E = E_1 \cup \cdots \cup E_n \), where each \( E_i \in \mathcal{E}^{\text{even}}_{\Lambda} \). According to Lemma \ref{thm.Veb}, each set \( E_i \) consists of edges that form a cycle in the lattice \( \mathbb{Z}^d \) for \( i = 1, \ldots, n \). Before proceeding with further calculations, let us further decompose  each cycle into a primitive cycle based on the following definition.
	\begin{defn}
		A cycle $\mathfrak{c}$ is called \emph{primitive} if it cannot be partitioned into two cycles $\mathfrak{c}^{\prime}$ and $\mathfrak{c}^{\dprime}$. 
	\end{defn}
	If $\mathfrak{c}$ is not primitive then it can be partitioned into two parts $\mathfrak{c}^{\prime}$ and $\mathfrak{c}^{\dprime}$ with no common edge but sharing some common sites. In particular, $|\mathfrak{c}|=|\mathfrak{c}^{\prime}|+|\mathfrak{c}^{\dprime}|$. A primitive cycle is also referred to as a polygon in $\mathbb{Z}^d$. We shall find it convenient to employ $\mathcal{P}^
	{\rm p}_{\mathbb Z^d}, \mathcal{P}^{\rm p}_{\Lambda}$ as the collection of  closed polygons in $\mathbb{Z}^d$ and $\Lambda$ respectively. Therefore, each set $E\in\mathcal{E}^{\rm even}$ can be decomposed into 
	\[E=\bigcup_{i=1}^n \mathfrak{p}_i\]
	where each $\mathfrak{p}_i\in\mathcal{P}^{\rm p}_{\Lambda}$.
	We then can rewrite 
	the right-hand side of the last equation in \eqref{eq:Is.mag.high3} in the form
	\begin{equation} \label{eq:Is.mag.high4} 
		Z_{\Lambda}(\beta,0)=2^{\card{\Lambda}}(\cosh\beta)^{|\mathcal E_{\Lambda}|}\,\Xi^{\rm{HT}}_{\Lambda}(\beta)
	\end{equation}
	with
	\begin{equation}\label{eq:Is.mag.high5}
		\Xi^{\rm{HT}}_{\Lambda}(\beta)=1+\sum_{n\ge 1}\frac{1}{n!}\sum_{(\mathfrak{p}_1\ldots,\mathfrak{p}_n)\in \left[\mathcal{P}^{\rm{p}}_{\Lambda}\right]^n}\prod_{1\le i< j\le n}\zeta(\mathfrak{p}_i, \mathfrak{p}_j)\prod_{i=1}^n(\tanh\beta)^{|\mathfrak{p}_i|}.
	\end{equation}
	with $\zeta(\cdot)$ function defined as in \eqref{eq.Ising.model.Xuan.1} and 
	\begin{equation}
		w^{\rm high}(\mathfrak{p})\;=\;\left[\tanh\beta\right]^{\card{\mathfrak{p}}}.
	\end{equation}
	By using the cluster expansion theory, the pressure with $\varnothing$-boundary condition in $\Lambda$ can be expressed as the following form:
	\begin{equation}\label{eq:Is.high.16}
		P^{\varnothing}_{\Lambda}(\beta)=\beta\frac{|\mathcal{E}_{\Lambda}|}{|\Lambda|}+\frac{1}{|\Lambda|}\log \Xi^{\rm HT}_{\Lambda}(\beta,h),
	\end{equation}
	where
	\begin{equation}\label{eq:Is.highhh16}
		\log \Xi^{\rm{HT}}_{\Lambda}(\beta)=\sum_{n=1}^{\infty}\sum_{(\mathfrak{p}_1,\ldots,\mathfrak{p}_n)\in\left[\mathcal{P}^{\rm{p}}_{\Lambda}\right]^n}a_n^{T}(\mathfrak{p}_1,\ldots,\mathfrak{p}_n)\prod_{i=1}^nw_\beta^{\rm high}(\gamma_i)
	\end{equation}
	with the Ursell function $a_n^T(\cdot)$ defined in  \eqref{Ursell.F.X1}.
	
	In the following two theorems \ref{Thm.Xuan.hightem} and \ref{main.result.high.X1}, we present a condition for the existence of the pressure function as \(\Lambda\to\mathbb{Z}^d\) and give a full representation of the pressure function through the thermodynamic limit. Based on the stated condition, we will establish the analytic domain for the pressure function in the final theorem of this subsection.
	\begin{Thm}\label{Thm.Xuan.hightem} If there exists $a>0$ such that
		\begin{align}\label{eq.Xuan.new.GK.high.1111}
			\sup_{{\rm q}\in\mathcal{E}_{\mathbb{Z}^{d}}}\sum_{{\rm q}\in \mathfrak{p}\atop \mathfrak{p}\in\mathcal{P}^{\rm p}_{\mathbb{Z}^d}}w^{\rm high}_{\beta}(\mathfrak{p})(\coth a)^{|\mathfrak{p}|}\;\le\; \coth a-1,
		\end{align}
		then the following hold:
		
		(i) Denote $\bbd{w}_\beta^{\rm high}:=\{w_\beta^{\rm high}(\mathfrak{p})\}_{\mathfrak{p}\in \mathcal{P}^{\rm p}_{\mathbb Z^d}}$. For each finite polygon $\mathfrak{p}\in\mathcal{P}_{\mathbb Z^d}^{\rm p}$, $\card{\Gamma}_{\mathfrak{p}}(\bbd{w}^{\rm high}_{\beta})$, defined in
		\begin{align}\label{eq.Xuan.Robcor113}|\Gamma|_{\mathfrak{p}}(\bbd{w}^{\rm high}_{\beta})=1+\sum_{n=1}^{\infty}\frac{1}{n!}\sum_{(S_1,\ldots, S_n)\in\mathcal{P}_{\mathbb V}^n}|a^T_{n+1}(S,\ldots,S_n)|\prod_{i=1}^{n}w_{\beta}^{\rm high}({\mathfrak{p}_i})
		\end{align} converges. Furthermore, each finite polygon $\mathfrak{p}\in\mathcal{P}_{\mathbb Z^d}^{\rm p}$,
		\[\card{\Gamma}_{\mathfrak{p}}(\bbd{w}^{\rm high}_{\beta})\le \eee^{a|\mathfrak{p}|}.\]
		
		(ii) The free energy function  \eqref{eq:Is.high.16} converges absolutely and uniformly in $\Lambda$, and for a fixed edge $e\in\mathcal{E}_{\mathbb Z^d}$. 
		\begin{equation}\label{eq.Xuan.new.high.12} p^\varnothing(\beta)=\log 2+d\cosh\beta +d\sum_{E\subset \mathcal{E}_{\mathbb Z^d}: E \ni e}\frac{1}{|E|}\Psi^{\rm high}(E),\end{equation}
		where, for each finite set of edges $E\subset  \mathcal{E}_{\mathbb Z^d}$, $\Psi^{\rm high}(\cdot)$ is defined as follow:
		\begin{align}
			\Psi^{\rm high}(E)=\sum_{n=1}^{\infty}\frac{1}{n!}\sum_{(\mathfrak{p}_1,\ldots ,\mathfrak{p}_n)\in [\mathcal{P}^{\rm p}_{\mathbb Z^d}]^n\atop \mathfrak{p}_1\cup \cdots \cup \mathfrak{p}_n=E}a_n^T(\mathfrak{p}_1,\ldots,\mathfrak{p}_n)\prod_{i=1}^nw^{\rm high}_{\beta}(\mathfrak{p}_i).
		\end{align}
	\end{Thm}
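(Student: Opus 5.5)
The plan is to deduce Theorem~\ref{Thm.Xuan.hightem} from the Fernández--Procacci criterion \cite{FP07}, in the same way as Theorem~\ref{Thm.Xuan.strong.1}. The polymers are the polygons $\mathfrak{p}\in\mathcal{P}^{\rm p}_{\mathbb Z^d}$ with activity $w^{\rm high}_\beta(\mathfrak{p})=(\tanh\beta)^{|\mathfrak{p}|}$. Incompatibility $\mathfrak{p}\nsim\mathfrak{p}'$ is read off from the decomposition leading to \eqref{eq:Is.mag.high5}: two polygons are incompatible when they share an edge. (If vertex-sharing is also excluded, replace the edge below by a vertex/edge of $[\mathfrak{p}]$.) The key structural fact is that incompatibility with $\mathfrak{p}$ is witnessed by a single element of a finite set attached to $\mathfrak{p}$, namely an edge ${\rm q}\in\mathfrak{p}$. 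The polymer system is therefore a hard-core gas on the edge set $\mathcal{E}_{\mathbb Z^d}$, and we can use the version of the Fernández--Procacci criterion for subset polymers in \cite{BFP10,FP07}.

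\textbf{Step 1: cliques and the criterion.} For each ${\rm q}\in\mathcal{E}_{\mathbb Z^d}$, the polygons containing ${\rm q}$ form a clique of the incompatibility graph. The neighbourhood of $\mathfrak{p}$ is covered by the $|\mathfrak{p}|$ cliques indexed by its edges. The FP criterion then reads: find $\mu(\mathfrak{p})=w(\mathfrak{p})\,\eee^{a'|\mathfrak{p}|}$ such that
\[
w(\mathfrak{p})\prod_{{\rm q}\in\mathfrak{p}}\Big(1+\sum_{\mathfrak{p}'\ni{\rm q}}\mu(\mathfrak{p}')\Big)\le\mu(\mathfrak{p}).
\]
A single clique contributes at most one factor because its elements are mutually incompatible. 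Under this inequality, $|\Gamma|_{\mathfrak{p}}\le\mu(\mathfrak{p})/w(\mathfrak{p})$.

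\textbf{Step 2: matching \eqref{eq.Xuan.new.GK.high.1111}.} Choose $\eee^{a'}=\coth a$. Then
\[
\sum_{\mathfrak{p}'\ni{\rm q}}\mu(\mathfrak{p}')=\sum_{\mathfrak{p}'\ni{\rm q}}w(\mathfrak{p}')(\coth a)^{|\mathfrak{p}'|}\le\coth a-1,
\]
so the product over ${\rm q}\in\mathfrak{p}$ is at most $(\coth a)^{|\mathfrak{p}|}$, which is the required inequality. This gives convergence of $|\Gamma|_{\mathfrak{p}}$ with $|\Gamma|_{\mathfrak{p}}\le(\coth a)^{|\mathfrak{p}|}$.

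The main obstacle is reconciling this with the stated bound $\eee^{a|\mathfrak{p}|}$. Since $\coth a>1$, the two agree only after reparametrisation. I would state the bound as $\eee^{\tilde a|\mathfrak{p}|}$ with $\tilde a=\log\coth a$, and check that the paper's convention identifies the two. Alternatively, I would verify that the intended criterion uses $\eee^{a}$ in place of $\coth a$. Apart from this normalisation, the argument is the standard one.

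\textbf{Step 3: part (ii).} With (i) in hand, the series \eqref{eq:Is.highhh16} is dominated by
\[
\sum_{{\rm q}\in\mathcal{E}_\Lambda}\sum_{\mathfrak{p}\ni{\rm q}}w(\mathfrak{p})\,|\Gamma|_{\mathfrak{p}}\le|\mathcal{E}_\Lambda|(\coth a-1),
\]
which gives absolute convergence uniformly in $\Lambda$ after dividing by $|\Lambda|$.

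Next, group clusters by their support $E=\mathfrak{p}_1\cup\cdots\cup\mathfrak{p}_n$. Distribute $\Psi^{\rm high}(E)$ evenly over the edges of $E$, writing it as $\sum_{{\rm q}\in E}\Psi^{\rm high}(E)/|E|$. Use translation invariance, and note that the number of edges per site is $d$. The boundary contributions are $O(|\partial\Lambda|)$, because clusters meeting $\Lambda^c$ are controlled by the same summable tail. In the Fisher limit this yields \eqref{eq.Xuan.new.high.12}. The prefactor terms $\log 2$ and $\frac{|\mathcal{E}_\Lambda|}{|\Lambda|}\log\cosh\beta\to d\log\cosh\beta$ come from \eqref{eq:Is.mag.high4}.
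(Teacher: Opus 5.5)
Your proof is the paper's proof. For (i) you apply the Fern\'andez--Procacci criterion with $\mu(\mathfrak p)=w^{\rm high}_\beta(\mathfrak p)(\coth a)^{|\mathfrak p|}$, take incompatibility to mean sharing an edge, and note that a compatible family in the neighbourhood of $\mathfrak p$ meets distinct edges of $\mathfrak p$; the hypothesis \eqref{eq.Xuan.new.GK.high.1111} is then exactly the resulting edge-based Gruber--Kunz condition. For (ii) the paper just points to the regrouping and boundary argument of Theorem~\ref{Thm.Xuan.strong.1}(ii), which is what you sketch. Your constant $d\log\cosh\beta$ is the right one; the $d\cosh\beta$ in the statement is a typo.

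The one thing you leave open is the stated bound $\card{\Gamma}_{\mathfrak p}\le\eee^{a|\mathfrak p|}$. You are right that the criterion with this $\mu$ only gives $(\coth a)^{|\mathfrak p|}$, and the paper's proof has the same hole. However, no reparametrisation is needed. If $\coth a\le\eee^a$, the bound is immediate. If $\coth a>\eee^a$, use that every polygon has at least $4$ edges; since $\eee^a/\coth a<1$,
\[
\sum_{\mathfrak p'\ni{\rm q}}w^{\rm high}_\beta(\mathfrak p')\,\eee^{a|\mathfrak p'|}\;\le\;\Bigl(\frac{\eee^a}{\coth a}\Bigr)^{4}(\coth a-1)\;=\;\frac{2\eee^{4a}(\eee^{2a}-1)^3}{(\eee^{2a}+1)^4}\;\le\;\eee^a-1 .
\]
The last step is $f(\coth a)\le f(\eee^a)$ for $f(x)=(x-1)/x^4$. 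When $\eee^a\ge 4/3$, it holds because $f$ decreases on $[4/3,\infty)$. When $\eee^a<4/3$, use $\eee^{2a}+1\ge 2\eee^a$ to bound the middle term by $(\eee^{2a}-1)^3/8$, and note that $(\eee^a-1)^2(\eee^a+1)^3<8$. Rerunning the criterion with $\mu=w^{\rm high}_\beta\,\eee^{a|\cdot|}$ then gives $\card{\Gamma}_{\mathfrak p}\le\eee^{a|\mathfrak p|}$ exactly as stated.

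You also inherit a caveat from the paper in (ii), where both of you take $Z_\Lambda=2^{|\Lambda|}(\cosh\beta)^{|\mathcal E_\Lambda|}\,\Xi^{\rm HT}_\Lambda$ to be an identity. With edge-disjointness as the compatibility relation, $\Xi^{\rm HT}_\Lambda$ counts each even edge set once for every way of splitting it into edge-disjoint polygons, and such splittings need not be unique. For example, in $\mathbb Z^2$ the boundaries of $[0,3]\times[1,2]$ and $[1,2]\times[0,3]$ have the same union as the boundary of $[1,2]^2$ together with the $12$-gon bounding the cross-shaped region $[0,3]\times[1,2]\cup[1,2]\times[0,3]$. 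Part (i), which concerns the abstract polygon gas, is unaffected. Identifying the limit in (ii) with the Ising free energy, however, needs a different choice of polymers or a canonical splitting rule.
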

	One of our main contributions in this paper is presented in the following theorem.
	\begin{Thm}\label{main.result.high.X1} The pressure function $p^\varnothing(\beta)$ is analytic in the domain $\mathcal{D}$ with
		\[\mathcal{D}\;=\;\{\beta\in\mathbb R:\beta\le\;\tanh^{-1}[\phi_1^{\rm high}(\bar{a}^{\rm high})]\}\]
		where $\phi^{\rm high}_1$ is defined by 
		\begin{equation}\label{eq:rr.mag.RXhigh11}\phi_1^{\rm high}(a):=\frac{\tanh a}{\sqrt{2d-1}}\sqrt{\frac{(1-\coth a)+\sqrt{(\coth a-1)^2+\frac{8}{2d-1}(\coth a-1)}}{4}},
		\end{equation}
		and $\bar{a}^{\rm high}$ represents the value of $a$ at which $\phi^{\rm high}_1$ reaches its maximum.
	\end{Thm}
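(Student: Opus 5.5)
The plan is to verify the hypothesis \eqref{eq.Xuan.new.GK.high.1111} of Theorem \ref{Thm.Xuan.hightem} for every $\beta$ in $\mathcal D$ and every complex $\beta$ in a small neighbourhood. Then item (ii) of that theorem gives a series for $p^\varnothing$ that converges uniformly in $\Lambda$. Each finite-volume free energy is analytic in $\tanh\beta$, since $\log\Xi^{\rm HT}_\Lambda$ is a convergent power series. A uniform limit of analytic functions is analytic, so $p^\varnothing$ is analytic on $\mathcal D$. The work is therefore in bounding the polygon sum through a fixed edge ${\rm q}$.

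Set $t=|\tanh\beta|$ and $z=t\coth a$. Then the left-hand side of \eqref{eq.Xuan.new.GK.high.1111} equals $\sum_{n} N_n z^n$, where $N_n$ counts primitive cycles of length $n$ through ${\rm q}$. Cycles have length at least $4$, and their lengths are even in $\mathbb Z^d$. A polygon of length $n$ through ${\rm q}$ determines a non-backtracking walk of $n$ steps starting along ${\rm q}$ and returning to its origin. Since the polygon is self-avoiding, after leaving the first edge each step has at most $2d-1$ choices. To gain the square-root structure visible in \eqref{eq:rr.mag.RXhigh11}, I would count more carefully. A polygon of length $2k$ must close up, so roughly half of its steps are forced to be reversals in coordinates. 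I would aim for a bound of the form $N_{2k}\le c\,(2d-1)^{2k-2}\cdot 2^{k}$, or some similar product of $(2d-1)^2$ per pair of steps with a small combinatorial factor. With $x=(2d-1)z^2$, this yields a geometric-type series whose sum is a rational function of $x$ such as $\frac{2x^2}{1-2x}$, up to constants.

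Next, I would impose (sum) $\le \coth a-1=:u$. This gives a quadratic inequality in $x$ of the form $4x^2+ux-\frac{2u}{2d-1}\cdot\ldots\le0$. Its positive root is $x^*=\frac{-u+\sqrt{u^2+\frac{8}{2d-1}u}}{4}$, which matches the inner expression in \eqref{eq:rr.mag.RXhigh11}, since $1-\coth a=-u$. Solving $(2d-1)t^2\coth^2 a\le x^*$ gives exactly $t\le\phi_1^{\rm high}(a)$. Choosing the $a$ that maximises $\phi_1^{\rm high}$ gives the largest admissible $\tanh\beta$, namely $\phi_1^{\rm high}(\bar a^{\rm high})$. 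Any $\beta$ with $\tanh\beta$ at most this value satisfies the criterion. For analyticity one needs an open complex neighbourhood. On the boundary I would use the strict inequality available at each interior point, and handle the endpoint by continuity of the supremum.

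The main obstacle is the combinatorial step: obtaining a bound on $N_{2k}$ whose generating function produces precisely the quadratic above. The crude bound $(2d-1)^{n-1}$ gives a linear rather than quadratic condition. I would first try to pair consecutive steps and use the fact that a closed polygon has equal numbers of positive and negative steps in each coordinate. This should give roughly $\binom{2k}{k}$-type savings, bounded by a geometric series in $(2d-1)z^2$ with ratio factor $2$. I would then check that the leading term, the plaquettes through ${\rm q}$ (there are $2(d-1)$ of them with $z^4$), is consistent with the constant $\frac{8}{2d-1}$.
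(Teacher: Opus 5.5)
Your outer frame is the paper's: verify \eqref{eq.Xuan.new.GK.high.1111} with $z=\tanh\beta\coth a$, reduce it to a quadratic, solve it, and maximize over $a$. The gap is that the one step carrying the content, the bound on the polygon sum, is never supplied, because you dismissed the bound that actually works. Your claim that the crude count only yields a linear condition is false. The quadratic comes from summing a geometric series over even lengths starting at $4$, not from refined counting.

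The paper uses exactly Lemma \ref{lem.number.cycle}, $|\mathcal N_k|\le 2(2d-1)^{k-1}$, together with bipartiteness (Lemma \ref{lem.Eulerian.cycle}). This is even weaker than the $(2d-1)^{k-1}$ your own non-backtracking remark gives. Set $y=\big((2d-1)\tanh\beta\coth a\big)^2$ and $u=\coth a-1$. Then for $0\le y<1$,
\[
\sum_{k\ge 2}2(2d-1)^{2k-1}(\tanh\beta\coth a)^{2k}=\frac{2}{2d-1}\,\frac{y^2}{1-y}\le u
\quad\Longleftrightarrow\quad
y^2+\frac{(2d-1)u}{2}\,y-\frac{(2d-1)u}{2}\le 0 .
\]
The positive root is $y^*=\frac{2d-1}{4}\big(-u+\sqrt{u^2+\frac{8u}{2d-1}}\big)$, which is precisely $\tanh\beta\le\phi_1^{\rm high}(a)$. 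In your variable $x=(2d-1)z^2=y/(2d-1)$, the inequality you want to solve is $2x^2+ux-\frac{u}{2d-1}\le 0$, that is, $\frac{2(2d-1)x^2}{1-(2d-1)x}\le u$. The coefficient of $z^{2k}$ on the left is exactly $2(2d-1)^{2k-1}$, the crude bound. So your own count already suffices, and it even gives a region containing $\mathcal D$.

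The bound you propose to aim for, $N_{2k}\le c\,(2d-1)^{2k-2}2^k$, is not a refinement. Its growth rate per two steps is $2(2d-1)^2$ rather than $(2d-1)^2$, so it is not coefficientwise below the crude bound. Its generating function is singular at $x=\frac{1}{2(2d-1)}$ instead of $x=\frac{1}{2d-1}$. It does not lead to the quadratic you then solve, nor does it sum to $\frac{2x^2}{1-2x}$. Hence your step ``solving gives exactly $t\le\phi_1^{\rm high}(a)$'' does not follow from anything you set up. The ``$\binom{2k}{k}$-type savings'' is neither needed nor developed.

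Finally, the endpoint cannot be handled ``by continuity of the supremum''. Every complex neighbourhood of $\beta=\tanh^{-1}[\phi_1^{\rm high}(\bar a^{\rm high})]$ contains points with $|\tanh\beta|>\phi_1^{\rm high}(\bar a^{\rm high})$. At those points this bound no longer verifies \eqref{eq.Xuan.new.GK.high.1111}. So the method yields analyticity only for $|\tanh\beta|<\phi_1^{\rm high}(\bar a^{\rm high})$, a point the paper also glosses over.
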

	As discussed in Subsection \ref{subsection.Ising.strong.Xuan1},  Theorem \ref{Thm.Xuan.hightem} and  Theorem \ref{main.result.high.X1} are explained in more detail in Subsection \ref{sec.app.clu2}. 
	
	\medskip
	\paragraph{\bf Comparison of analyticity domains.} Let us compare  our result in Theorem \ref{main.result.high.X1} with the best published result provided by B. Simon in \cite[Chapter V]{Simon93}.  In this reference, the analyticity holds if 
	\begin{equation}\label{eq:Is.mag.high14}
		\beta\exp \beta\;\le\;\frac{1}{48d^2}\;=:\; A_S,
	\end{equation}
	which yields the bound
	\begin{equation}\label{eq:Is.mag.high14.1}
		\beta\;\le\; \beta_S\;:=\; W(A_S),
	\end{equation}
	with $W$ being the Lambert function.
	From condition \eqref{eq:rr.mag.RXhigh11}, we can derive a weaker condition for the analyticity of the pressure function $p^{\varnothing}(\beta,0)$ as follows
	\begin{align}
		\tanh\beta&\le \max_{a>0}\frac{\tanh a}{2d-1}\sqrt{\frac{8(\coth a-1)}{\coth a-1+\sqrt{(\coth a-1)^2+8/3(\coth a-1)}}}\nonumber\\ &=\frac{1.0783}{2d-1}\nonumber\\ &\le \phi_1^{\rm high}(\bar{a}^{\rm high}).
	\end{align}
	Therefore, we obtain
	\begin{equation}\label{eq:Is.mag.high15}
		\beta\;\le \;\frac{1}{2}\log\frac{1+\frac{1.0783}{2d-1}}{1-\frac{1.0783}{2d-1}}\;\le\; \beta_N.
	\end{equation}
	where \[\beta_N:=\tanh^{-1}[\phi_1^{\rm high}(\bar{a}^{\rm high})]=\frac{1}{2}\log \frac{1+\phi^{\rm high}(\bar{a}^{\rm high})}{1-\phi^{\rm high}(\bar{a}^{\rm high})}.\]
	To compare with Simon's result, we start with the inequalities
	\begin{align}\beta_N\ge \frac{1}{2}\log \frac{1+\frac{1.0783}{2d-1}}{1-\frac{1.0783}{2d-1}}\ge \frac{1}{2}\left[\frac{1.0783}{2d-1}-\frac{(1.0783)^2}{2(2d-1)^2}+\frac{1.0783}{2d-1}\right]=\frac{1.0783}{2d-1}-\frac{(1.0783)^2}{4(2d-1)^2}:=\beta_c.
	\end{align}
	Then, for $d\ge 2$, we have
	\begin{align}
		\beta_N\eee^{\beta_N}\ge \beta_c\eee^{\beta_c}\ge \beta_c&> \frac{1}{2d-1}-\frac{3}{8(2d-1)^2}+\frac{1}{32d^2}-\frac{1}{8(2d-1)^2}\nonumber\\ &> \frac{1}{2d-1}-\frac{1}{2(2d-1)^2}+\frac{1}{32d^2}=\frac{1}{2d-1}\left[1-\frac{1}{4d-2}\right]+\frac{1}{32d^2}> \frac{1}{32d^2}>\frac{1}{48d^2}.
	\end{align}
	It implies that our domain includes the domain proposed by Simon. To clarify further, let us examine the rate function \( r^{\text{high}} \) that measures the ratio between our bound and Simon's bound, defined as follows
	\begin{equation}
		r^{\rm high}(d):=\frac{\beta_N}{\beta_S}=\frac{\displaystyle\frac{1}{2}\log(1+\phi_1^{\rm high}(\bar{a}^{\rm high}))-\frac{1}{2}\log(1-\phi_1^{\rm high}(\bar{a}^{\rm high}))}{\displaystyle W\left(\frac{1}{48d^2}\right)}.
	\end{equation}
	We observe that our bound is significantly stronger than Simon's bound, as illustrated in Figure \ref{Fig.3} and \ref{Fig.5}.
	\begin{figure}[h]
		\centering
		\includegraphics[width=0.5\linewidth]{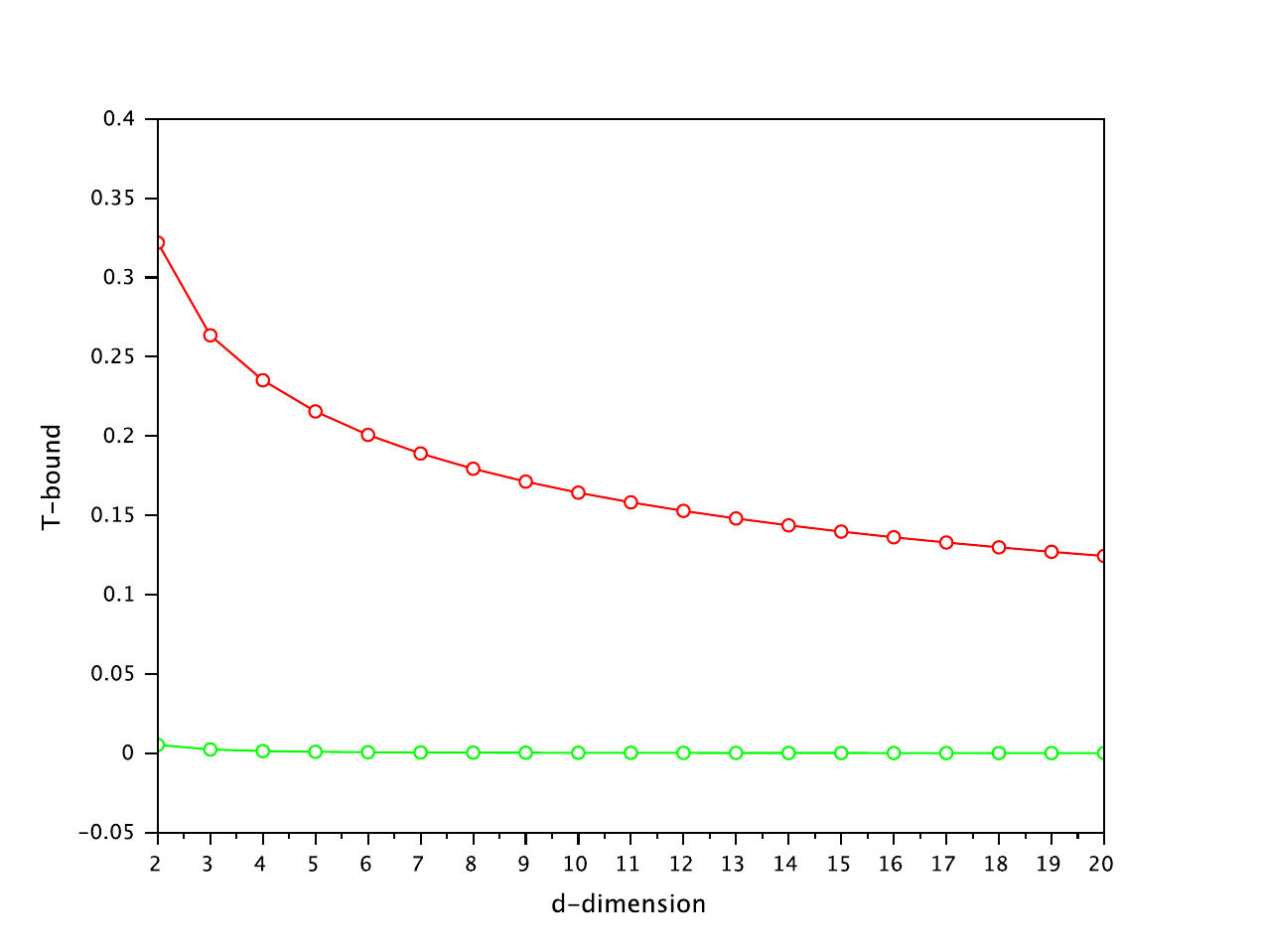}
		\caption{A comparison with Simon's result (green line presented for Simon's result and red line presented for our result)}
		\label{Fig.3}
	\end{figure}
	\begin{figure}[h]
		\centering
		\includegraphics[width=0.5\linewidth]{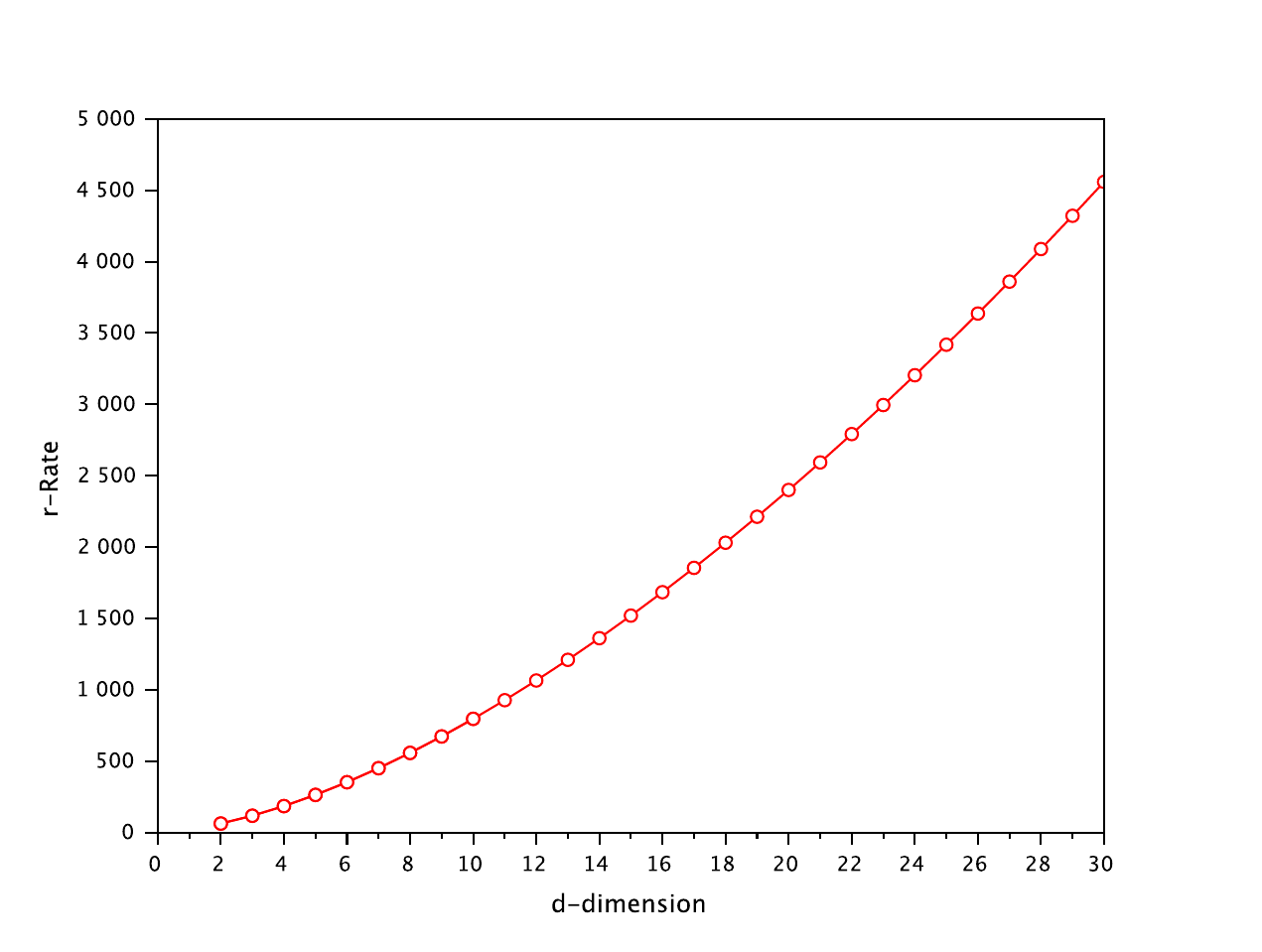}
		\caption{The rate between our bound and Simon's bound}
		\label{Fig.5}
	\end{figure}
	
	\noindent
	As noted in \cite{AP23}, Procacci established an analyticity domain for \( p^{\varnothing}(\beta,0) \) in the case \( d=2 \), which is valid for \( \beta \le 0.151 \). However, Procacci's bound is considerably weaker than our result, which extends the limit to \( \beta \le \beta_N(2)=0.322 \).
	
	\subsection{The Ising model at low-temperature without magnetic field ($h=0$)}\label{subsec:Ising.model.low}
	In this subsection, we build upon the work of J. L. Lebowitz and A. E. Mazel, as referenced in \cite{LM98}.
	Let us consider zero magnetic field and ``plus'' boundary condition: $\omega_i=1$, for all $i\notin \Lambda$. By employing a similar method as  in Subsection \ref{subsection.Ising.strong.Xuan1}, we add and subtract $1$ from each term in this Hamiltonian. Then the  Hamiltonian can be redefined as
	\begin{equation}
		\mathrm{H}^+_{\Lambda;\beta,0}(\bbd{\sigma}_\Lambda)=-\beta |\mathcal{E}_{\Lambda}|+\sum_{\bbd{\sigma}_{\Lambda}\in\Omega_\Lambda}|\partial\Lambda^-(\bbd{\sigma}_{\Lambda})|
	\end{equation}
	We now wish to give a geometrical description of $\partial\Lambda^-(\bbd{\sigma}_{\Lambda})$ for  each configuration $\bbd{\sigma}_{\Lambda}\in \Omega^+_{\Lambda}$ that can better account for the low-temperature trend towards alignment of nearest-neighbor spins. The starting point is thus to express the Hamiltonian in a form that emphasizes the role played by pairs of opposite spins.
	
	We associate to each $i\in\mathbb{Z}^d$ the closed unit cube of $\mathbb{R}^d$ centered at $i$:
	\[\mathbf{J}_i\;=\;i+\bigg[\frac{-1}{2},\frac{1}{2}\bigg]^d.\]
	The boundary of $\mathbf{J}_i$, in the sense of the standard topology on $\mathbb{R}^d$, denoted by $\partial\mathbf{J}_i$. The \emph{dual lattice} is defined as
	\[\mathbb{Z}_*^d\;:=\;\mathbb{Z}^d+\underbrace{\left(\frac{1}{2},\ldots,\frac{1}{2}\right)}_{d-\text{dimensional}}.\]
	Let us start with some basic definitions  in elementary geometry. 
	\begin{defn}
		A \emph{plaquette} is a unit $(d-1)$-dimensional face of a $d$-dimensional hypercube $\mathbf{J}_i$, $i\in\mathbb{Z}^d$. Let
		$\mathbb E_{\mathbb Z^d}$ be the set of  all plaquettes ${\rm p}$  in $\mathbb Z^d$.
	\end{defn}
	\begin{defn}[Adjacency relations]
		We use the following notions of adjacency.
		\begin{enumerate}
			\item  Two plaquettes are  \emph{adjacent} if they have common $(d-2)$-dimensional face.
			\item Two lattice sites are called \emph{adjacent} if they are the endpoints of a lattice bond.
			\item A plaquette and a lattice site are \emph{adjacent} if this plaquette intersects one of the lattice bonds incident on this site.
			\item  A lattice site and a $(d-2)$-dimensional face are \emph{adjacent} if this site is adjacent to one of four plaquettes incident on this face.
		\end{enumerate}
	\end{defn}
	\begin{defn}
		A set of plaquettes is \emph{connected} if any two of its plaquettes belong to a chain of pairwise adjacent plaquettes from the set. 
	\end{defn}
	\begin{defn}[Contour]
		A \emph{contour} (or \emph{Peierls contour}) is a connected closed set of plaquettes that separates
		$\mathbb Z^d$ into exactly two connected components: a finite component
		$\Omega_\gamma$, called the \emph{interior} of $\gamma$, and an infinite component $\Omega_\gamma^c=\mathbb Z^d\setminus\Omega_\gamma$, called the \emph{exterior} of $\gamma$.
		
		The size of a contour $\gamma$, denoted by $|\gamma|$, is the number of plaquettes contained in $\gamma$.
	\end{defn}
	
	\begin{defn}
		A contour $\gamma$ is called  \emph{primitive} if it can not be partitioned into two contours $\gamma^{\prime}$ and $\gamma^{\dprime}$. We shall find it convenient to employ $\Gamma^{\mathbb Z^d}_{\rm{prim}}, \Gamma^{\Lambda}_{\rm{prim}}$ as  the collection of primitive contours in $\mathbb{Z}^d$ and $\Lambda$ respectively.
	\end{defn}
	If $\gamma$ is not primitive then it can be partitioned into two parts $\gamma^{\prime}$ and $\gamma^{\dprime}$ with no common plaquettes but sharing some $(d-2)$--dimensional faces. In particular, $|\gamma|=|\gamma^{\prime}|+|\gamma^{\dprime}|$.
	
	To sum over contours we follow the approach of J. L. Lebowitz and A. E. Mazel in \cite{LM98}.  Fix a plaquette ${\rm p}$. Every contour $\gamma\in\Lambda$ containing ${\rm p}$ can be uniquely decomposed into maximal primitive subcontours. The decomposition can be naturally endowed with a rooted tree-like structure. The root of the tree is the primitive contour $\gamma_{0}$ which contains the plaquettes ${\rm p}$. The first generation is formed by all primitive subcontours $\gamma_{1,i_1}$ of $\gamma$ which have common $(d-2)$--dimensional faces with $\gamma_0$. Subsequently, the 
	subcontours of the $n^{\rm th}$ generation are the primitive subcontours  $\gamma_{n,i_n}$  which have a common $(d-2)$--dimensional face with some of primitive subcontour from generation $n-1$ and are not included in $\bigcup_{k=1}^{n-1}\bigcup_{i_k}\gamma_{k,i_k}$. 
	
	For a configuration $\bbd{\sigma}_{\Lambda}$ containing a finite number of sites $i$, we denote
	\begin{equation}
		\mathcal{M}(\bbd{\sigma}_{\Lambda})\;=\;\bigcup_{i\in\Lambda^-(\bbd{\sigma}_{\Lambda})}\mathbf{J}_i
	\end{equation}
	then $\partial \mathcal M(\bbd{\sigma}_{\Lambda})$ is  made of plaquettes of the dual lattice that are orthogonal to ``frustrated'' bonds.  That is, if we denote $\{i,j\}_{\bot}$ the plaquette orthogonal to $\{i,j\}$, then each $\{i,j\}_\bot\in\partial \mathcal{M}(\bbd{\sigma}_{\Lambda})$ separates two opposite spins $\sigma_i\ne\sigma_j$. We obtain
	\begin{equation}
		\mathrm{H}^+_{\Lambda;\beta,0}(\bbd{\sigma}_\Lambda)=-\beta |\mathcal{E}_{\Lambda}|+\sum_{\bbd{\sigma}_{\Lambda}\in\Omega_\Lambda}|\partial\mathcal{M}(\bbd{\sigma}_{\Lambda})|.
	\end{equation}
	It is not hard to see that, due to the plus boundary conditions, $\partial\mathcal{M}(\bbd{\sigma}_{\Lambda})$ is a closed surface consisting of several primitive contours (see Figure \ref{fig:Ising_models2}), i.e.
	\[\partial \mathcal{M}(\omega)=\gamma_1\cup\cdots\cup \gamma_n.\]
	\begin{figure}[hbt!]
		\begin{tikzpicture}[scale=0.9]
			
			\def\N{9}
			
			\draw[gray!70,line width=0.8pt]
			(0,0) grid (\N,\N);
			
			\foreach \i in {0,...,\N}
			{
				\foreach \j in {0,...,\N}
				{
					\fill[blue!75]
					(\i,\j) circle (3.2pt);
				}
			}
			
			
			\foreach \x/\y in {
				2/8, 3/8, 4/8, 3/7, 3/6, 4/5, 3/4, 4/4, 5/3, 3/5, 6/3, 1/2, 1/8, 8/2, 4/2, 5/2, 8/6, 8/4,8/3, 7/8, 8/7
			}
			{
				\fill[red]
				(\x,\y) circle (5pt);
			}
			
			
			\draw[orange!80!black,line width=1.2pt]
			(0.5,7.5)--(0.5,8.5)--(4.5,8.5)--(4.5,7.5)--(3.5,7.5);
			
			\draw[orange!80!black,line width=1.2pt]
			(3.5,7.5)--(3.5,5.5)--(4.5,5.5)--(4.5,3.5)
			--(6.5,3.5)--(6.5,2.5)--(5.5,2.5)--(5.5,1.5)
			--(3.5,1.5)--(3.5,2.5)--(4.5,2.5)--(4.5,3.5)--(2.5,3.5)--(2.5,7.5)--(0.5,7.5);
			
			\draw[orange!80!black,line width=1.2pt]
			(7.5,7.5)--(6.5,7.5)--(6.5,8.5)--(7.5,8.5)--(7.5,5.5)--(8.5,5.5)--(8.5,7.5);
			
			\draw[orange!80!black,line width=1.2pt]
			(7.5,5.5) rectangle (8.5,7.5);
			
			\draw[orange!80!black,line width=1.2pt]
			(7.5,1.5) rectangle (8.5,4.5);
			
			\draw[orange!80!black,line width=1.2pt]
			(0.5,1.5) rectangle (1.5,2.5);


			\fill[blue!75] (2.8,-1.1) circle (0.09);
			\node[right] at (3.05,-1.1) {spin $+1$};
			
			\fill[red] (5.8,-1.1) circle (0.09);
			\node[right] at (6.05,-1.1) {spin $-1$};
			
		\end{tikzpicture}
		\caption{A configuration of the two-dimensional Ising model in a finite box $\Lambda$ with $+$ boundary condition. At low temperature, the lines separating regions of $+$ and $-$ spins are expected to be short and sparse, leading to a positive magnetization in $\Lambda$.}
		\label{fig:Ising_models2}
	\end{figure}

	Let \( V_\gamma \) be the set containing all sites \( i \in \mathbb{Z}^d \) that are in the interior of \( \gamma \) and adjacent to the contour \( \gamma \).
	\begin{defn}
		Two primitive contours $\gamma$ and $\gamma^{\prime}$ are \emph{compatible} if they do not share any common plaquettes and $V_{\gamma}\cap V_{\gamma'}=\varnothing$ (they may, however, share common $(d-2)$-dimensional faces). 
		This situation is denoted by $\gamma\sim\gamma^{\prime}$; otherwise the contours are
		\emph{incompatible} and the relation is denoted by $\gamma\nsim\gamma^{\prime}$
	\end{defn}
	We then can write 
	\begin{equation}\label{eq:Is.mag.low3} 
		Z^+_{\Lambda}(\beta,0)\;=\;\eee^{\beta \card{\mathcal E_{\Lambda}
		}}\,\Xi^{\rm{LT}}_{\Lambda}(\beta),
	\end{equation}
	where the \emph{large-field polymer partition function} is defined as
	\begin{equation}\label{eq:Is.mag.low4} 
		\Xi_{\Lambda}^{\rm{LT}}(\beta)\;=\;1+\sum_{n\ge 1}\frac{1}{n!}\sum_{(\gamma_1,\ldots,\gamma_n)\in\left[\Gamma^{\Lambda}_{\rm{prim}}\right]^n}\prod_{1\le i<j\le n}^n\zeta(\gamma_i, \gamma_j)\prod_{i=1}^nw^{\rm low}_\beta(\gamma_i)
	\end{equation}
	with
	\begin{equation} \label{eq:Is.mag.low5} 
		w^{\rm low}_\beta(\gamma_i)\;=\;\exp(-2\beta |\gamma_i|), 
	\end{equation}
	and $\zeta$ defined as in \eqref{eq.Ising.model.Xuan.1}.
	
	By cluster expansion theory, we can give a representation of the pressure function with a \(+\)--boundary condition in \(\Lambda\) as the following form
	\begin{equation}\label{eq:Is.mag.low16}
		P^{+}_{\Lambda}(\beta)=\beta\frac{|\mathcal{E}_{\Lambda}|}{|\Lambda|}+\frac{1}{|\Lambda|}\log \Xi^{\rm LT}_{\Lambda}(\beta),
	\end{equation}
	where
	\begin{equation}\label{eq:Is.mag.low6} 
		\log\Xi^{\rm{LT}}_{\Lambda}(\beta)\;=\;\sum_{n=1}^{\infty}\frac{1}{n!}\sum_{(\gamma_1,\ldots,\gamma_n)\in \left[\Gamma^{\Lambda}_{\rm{prim}}\right]^n}a_n^{T}(\gamma_1,\ldots,\gamma_n)\prod_{i=1}^nw^{\rm low}_\beta(\gamma_i),
	\end{equation}
	with Ursell function $a_n^T(\cdot)$ defined as in \eqref{Ursell.F.X1}.
	
	The next theorem establishes a sufficient condition for the existence of the pressure function as $\Lambda \to \mathbb{Z}^d$ in the thermodynamic limit. Based on this condition, we will establish the analytic domain for the pressure function at infinity in the final theorem of this subsection. 
	\begin{Thm}\label{Thm.Xuan.lowtem}  If there exists $a>0$ such that
		\begin{align}\label{eq.Xuan.new.GK.low.1111}
			\sup_{{\rm p}\in \mathbb E_{\mathbb Z^d}}\sum_{\substack{\gamma\in\Gamma^{\mathbb Z^d}_{\rm prim}\\{\rm p} \in \gamma}}w^{\rm low}_\beta(\gamma)\eee^{a|\gamma|}\le \frac{\eee^a-1}{2d},
		\end{align}
		then the following holds:
		
		(i) Denote $\bbd{w}_\beta^{\rm low}:=\{w_\beta^{\rm low}(\gamma)\}_{\gamma\in\Gamma^{\mathbb Z^d}_{\rm prim}}$. For each finite contour $\gamma\in \Gamma_{\rm prim}^{\mathbb Z^d}$, $\card{\Gamma}_{\gamma}(\bbd{w}^{\rm low}_{\beta})$ converges. Furthermore, for finite contour $\gamma\in\Gamma_{\rm prim}^{\mathbb Z^d}$,
		\[\card{\Gamma}_\gamma(\bbd{w}^{\rm low}_{\beta})\le \eee^{a|\gamma|},\]
		where $|\Gamma|_{\gamma}$ is defined in  \eqref{eq.Xuan.Rob13}.
		
		(ii) The free energy function  \eqref{eq:Is.mag.low16}  converges absolutely and uniformly in $\Lambda$, and for a fixed plaquette ${\rm p}\in\mathbb E_{\mathbb Z^d}$, 
		\begin{equation}\label{eq.Xuan.new.12} p^+(\beta, h)=\beta d+d\sum_{B\subset \mathbb Z^d: B\ni {\rm p}}\frac{1}{|B|}\Psi^{\rm low}(B),\end{equation}
		where for each finite set of plaquettes $B\subset \mathbb E_{\mathbb Z^d}$, $\Psi^{\rm low}(\cdot)$ is defined as follows
		\begin{align}
			\Psi^{\rm low}(B)=\sum_{n=1}^{\infty}\frac{1}{n!}\sum_{(\gamma_1\ldots \gamma_n)\in[\Gamma^{\mathbb Z^d}_{\rm prim}]^n\atop \gamma_1\cup \cdots \cup\gamma_n=B}a_n^T(\gamma_1,\ldots,\gamma_n)\prod_{i=1}^nw^{\rm low}_{\beta}(\gamma_i)\nonumber
		\end{align}
	\end{Thm}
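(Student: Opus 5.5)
The plan is to deduce Theorem \ref{Thm.Xuan.lowtem} from the Fernández--Procacci criterion for abstract polymer gases \cite{FP07,BFP10}. That criterion says: if there are numbers $\mu(\gamma)>0$ with
\[
w(\gamma)\,\Xi_{\mathcal{I}(\gamma)}(\bbd{\mu})\;\le\;\mu(\gamma)\qquad\text{for all }\gamma,
\]
then $\card{\Gamma}_\gamma(\bbd w)\le \mu(\gamma)/w(\gamma)$. Here $\mathcal I(\gamma)$ is the set of polymers incompatible with $\gamma$, and $\Xi_{\mathcal I(\gamma)}(\bbd\mu)$ is the hard-core partition function restricted to collections of pairwise compatible polymers inside $\mathcal I(\gamma)$. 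I choose $\mu(\gamma)=w^{\rm low}_\beta(\gamma)(\eee^{a|\gamma|}-1)$. With this choice the criterion reduces to
\[
\Xi_{\mathcal{I}(\gamma)}(\bbd\mu)\le \eee^{a|\gamma|}-1,
\]
and its conclusion is exactly $\card{\Gamma}_\gamma\le \eee^{a|\gamma|}$ (the $-1$ can be absorbed, or one can use $\mu=w\eee^{a|\gamma|}$ with the conclusion $\le\eee^{a|\gamma|}$ directly, following the paper's convention in \eqref{eq.Xuan.new.GK.1111}). This gives part (i).

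\textbf{Partitioning the incompatible set.} To bound $\Xi_{\mathcal I(\gamma)}$, I split $\mathcal I(\gamma)$ into families indexed by local objects of $\gamma$. A primitive contour $\gamma'\nsim\gamma$ either shares a plaquette with $\gamma$, or has $V_{\gamma'}\cap V_\gamma\ne\varnothing$. In the second case $\gamma'$ contains a plaquette adjacent to a site of $V_\gamma$, and each such site is adjacent to a plaquette of $\gamma$. So to each ${\rm p}\in\gamma$ I attach a bounded number of "anchor" plaquettes: ${\rm p}$ itself together with the plaquettes incident to the $2d$ bonds at the interior site adjacent to ${\rm p}$.

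Within one anchor family any two contours are incompatible, since they share the anchor plaquette or the anchor site. A compatible collection therefore picks at most one contour per anchor. Hence
\[
\Xi_{\mathcal I(\gamma)}\le\prod_{{\rm p}\in\gamma}\Big(1+\sum_{\text{anchors }q}\ \sum_{\gamma'\ni q}\mu(\gamma')\Big).
\]
Counting anchors, the inner sum is at most $2d\cdot\sup_{\rm p}\sum_{\gamma'\ni{\rm p}}w\,\eee^{a|\gamma'|}$. By hypothesis \eqref{eq.Xuan.new.GK.low.1111} this is $\le \eee^a-1$, so each factor is $\le\eee^a$ and the product is $\le \eee^{a|\gamma|}$. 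This closes the criterion.

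\textbf{Main obstacle.} The delicate step is this counting. One must check that the $2d$ factor really covers every way a primitive contour can meet $V_\gamma$ without sharing a plaquette, and that no anchor is double-counted in a way that breaks the "one contour per anchor" argument. If the site-adjacency count turns out larger than $2d$, I would instead anchor by sites of $V_\gamma$ and use $|V_\gamma|\le|\gamma|$, attaching the $2d$ bonds per site.

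\textbf{Thermodynamic limit (part (ii)).} Part (i) gives absolute convergence of the cluster expansion \eqref{eq:Is.mag.low6}, uniformly in $\Lambda$: summing clusters containing a fixed plaquette is bounded by $\sum_{\gamma\ni{\rm p}}w\,\card\Gamma_\gamma\le \sum_{\gamma\ni{\rm p}} w\,\eee^{a|\gamma|}<\infty$. Next I regroup $\log\Xi^{\rm LT}_\Lambda$ by the support $B=\gamma_1\cup\cdots\cup\gamma_n$. Writing $\sum_B\Psi^{\rm low}(B)=\sum_{\rm p}\sum_{B\ni{\rm p}}\Psi^{\rm low}(B)/|B|$ and using translation invariance, clusters touching the boundary contribute $O(|\partial\Lambda|)$. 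There are $d$ plaquettes per site, which yields the factor $d$, and $|\mathcal E_\Lambda|/|\Lambda|\to d$. Analyticity in $\beta$ then follows from uniform convergence of analytic terms.
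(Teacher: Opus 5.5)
\textbf{Verdict: the route is the paper's, but the product bound at the heart of (i) is justified by a false claim.}

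\textbf{What matches the paper.} You use Fern\'andez--Procacci with $\mu_\gamma=w^{\rm low}_\beta(\gamma)\eee^{a|\gamma|}$ and a factor $1+2dA$ per plaquette of $\gamma_0$, where $A:=\sup_{\rm q}\sum_{\gamma\ni{\rm q}}w^{\rm low}_\beta(\gamma)\eee^{a|\gamma|}$. The paper's $1+A+B$ with $B\le(2d-1)A$ is exactly your count: ${\rm p}$ plus the other $2d-1$ faces of $\mathbf J_{x_{\rm p}}$. Your part (ii) follows Theorem~\ref{Thm.Xuan.strong.1}(ii), as the paper does.

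\textbf{Drop the variant $\mu_\gamma=w(\gamma)(\eee^{a|\gamma|}-1)$.} Since $\Xi_{\mathcal I(\gamma)}\ge 1$, its hypothesis fails whenever $a|\gamma|<\log 2$. Its conclusion would be $\card{\Gamma}_\gamma\le\eee^{a|\gamma|}-1$, so nothing is ``absorbed''.

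\textbf{The gap.} You flag it yourself as the main obstacle, and your proposed resolution does not work. It is true that a compatible family contains at most one contour through any single anchor plaquette. But the product $\prod_{{\rm p}\in\gamma_0}\bigl(1+\sum_{q\text{ anchor of }{\rm p}}\sum_{\gamma'\ni q}\mu_{\gamma'}\bigr)$ needs at most one contour per plaquette ${\rm p}$, i.e.\ per group of $2d$ anchors. Your reason for this is that two members of the family of ${\rm p}$ share the anchor plaquette or the anchor site. That fails for mixed pairs.

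Counterexample in $d=2$:
\begin{itemize}
\item Let ${\rm p}$ separate $x_{\rm p}=(0,0)$ from $(1,0)$.
\item Let $\gamma'=\partial\mathbf J_{(1,0)}$.
\item Let $\gamma''$ be the boundary of $R=\{(s,t)\in\mathbb Z^2:|s|\le 3,\ |t|\le 2\}\setminus\{(0,1),(0,2)\}$.
\end{itemize}
Then $\gamma'\ni{\rm p}$, while ${\rm p}\notin\gamma''$ and $(0,0)\in V_{\gamma''}$, because the top face of $\mathbf J_{(0,0)}$ lies in $\gamma''$. So for every $\gamma_0\ni{\rm p}$ with $(0,0)$ inside, both contours lie in $\mathcal I(\gamma_0)$ and in the family of ${\rm p}$. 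Yet they share no plaquette, $V_{\gamma'}=\{(1,0)\}$, and $(1,0)\notin V_{\gamma''}$ since all its neighbours lie in $R$. Hence $\gamma'\sim\gamma''$.

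Your fallback of anchoring at sites of $V_{\gamma_0}$ fails the same way. For $\gamma_0=\partial\mathbf J_x$, the compatible contours $\partial\mathbf J_{x\pm e_1}$ both contain a face of $\mathbf J_x$.

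\textbf{What you can prove rigorously.} The true statement ``one contour per anchor plaquette'' gives a valid bound. Let $\mathcal F$ be the set of faces of the cubes $\mathbf J_x$ with $x\in V_{\gamma_0}$. Every $\gamma'\in\mathcal I(\gamma_0)$ contains an element of $\mathcal F$, and $|\mathcal F|\le 2d|V_{\gamma_0}|\le 2d|\gamma_0|$. Hence $\psi^{\rm FP}_{\gamma_0}(\bbd{\mu})\le(1+A)^{2d|\gamma_0|}$. This proves (i) under $A\le\eee^{a/(2d)}-1$, which is strictly stronger than the stated hypothesis $A\le(\eee^a-1)/(2d)$.

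\textbf{What the stated constant requires.} You need an injective assignment of the members of a compatible subfamily of $\mathcal I(\gamma_0)$ to distinct plaquettes ${\rm p}\in\gamma_0$, each assigned contour containing a face of $\mathbf J_{x_{\rm p}}$. The paper builds this in two classes:
\begin{enumerate}
\item Contours sharing a plaquette with $\gamma_0$ go to a shared plaquette. These are distinct, since compatible contours share none.
\item The others go to a plaquette whose interior neighbour lies in $V_{\gamma_{i_j}}\cap V_{\gamma_0}$. These are distinct, since the $V$'s are pairwise disjoint and each plaquette has one interior neighbour.
\end{enumerate}
A separate argument, choosing ${\rm p}^{i_j}$ through a common $(d-2)$-face, then excludes collisions between the two classes.

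Your proposal has no counterpart to that cross-class step, and it is where the difficulty lies. The paper's own justification there is that a common $(d-2)$-face forces the $V$'s to meet. That is not automatic either: $\gamma'$ and $\gamma''$ above both pass through the dual vertex $(\tfrac12,\tfrac12)$, yet they are compatible with disjoint $V$'s.
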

	In the last theorem, we aim to specify the domain of temperature $\beta$ for which the pressure function $p^+(\beta)$ is analytic, as stated in the following theorem.
	\begin{Thm}\label{main.result.low.Xuan} The pressure function $p(\beta,h)$ is analytic in the domain $\mathcal{D}$ with
		\[\mathcal{D}\;=\;\{(\beta, h)\in\mathbb R\times \mathbb R:\beta\ge\min_{m,\kappa\in \mathcal{L}}\phi^{\rm low}(m,\kappa)\}\]
		where $\phi^{\rm low}$ is defined by
		\begin{eqnarray}\label{eq.Xuan.low.Ising.new4cor}
			\phi^{\rm low}(m,\kappa)&:=&\log(1+2m)+\left(-2+\frac{3}{d}\right)\log(1+\sqrt{1+(d-1)(\kappa-1)(\kappa+3)m})\nonumber\\ &&-\frac{1}{d}\log(2m)+\frac{2d-2}{d}\log\left (\sqrt{1+(d-1)(\kappa-1)(\kappa+3)m}+2m\kappa+1\right)\nonumber\\ &&-[\log (\kappa-1)-\log(\kappa+3)],
		\end{eqnarray}
		and the domain $\mathcal{L}$ is defined as follows
		\begin{equation}\label{eq.Xuan.low.Ising.domain.new4cor}
			\mathcal{L}:=\left\{(m,\kappa)\in (0,1)\times(0,1): 0<m\le \frac{4(2d-3)\kappa+(\kappa^2+2\kappa-3)(d-1)}{4(2d-3)^2\kappa^2}\right\},
		\end{equation}
		for each $d\ge 2$. 
	\end{Thm}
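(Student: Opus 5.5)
The plan is to reduce Theorem \ref{main.result.low.Xuan} to the criterion of Theorem \ref{Thm.Xuan.lowtem}. Once \eqref{eq.Xuan.new.GK.low.1111} holds uniformly on a complex neighbourhood of a real $\beta$, the cluster expansion \eqref{eq:Is.mag.low6} converges absolutely and uniformly in $\Lambda$. Each term is analytic in $\beta$. So the limit $p^+(\beta)$ is a uniform limit of analytic functions, hence analytic. It therefore suffices to show that $\beta\ge \phi^{\rm low}(m,\kappa)$ for some $(m,\kappa)\in\mathcal L$ forces \eqref{eq.Xuan.new.GK.low.1111} for a suitable $a>0$. Writing $c_n(d):=\sup_{\rm p}\#\{\gamma\in\Gamma^{\mathbb Z^d}_{\rm prim}:{\rm p}\in\gamma,\ |\gamma|=n\}$, the left side of \eqref{eq.Xuan.new.GK.low.1111} is at most $\sum_{n}c_n(d)\,x^n$ with $x:=\eee^{-2\beta+a}$. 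The whole problem is thus a sharp bound on the contour-counting generating function.

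\textbf{Step 1 (counting via generating functions).} I follow the Balister--Bollob\'as idea from \cite{BB07}. Encode each primitive contour through ${\rm p}$ by a spanning exploration tree of its plaquette adjacency graph, rooted at ${\rm p}$. Each plaquette is adjacent to at most $2(d-1)$ $(d-2)$-faces. Through each such face pass, besides the current plaquette, at most three others, but only a restricted number are admissible for a primitive contour. Let $F(x)$ be the generating function of such rooted trees, counted by number of plaquettes. It satisfies an algebraic recursion of the form $F=x\,(1+\kappa' F+\cdots)^{2d-3}\times(\text{root factor})$. Solving the resulting quadratic gives the square root $\sqrt{1+(d-1)(\kappa-1)(\kappa+3)m}$ appearing in $\phi^{\rm low}$. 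Here I intend $m$ to play the role of the rescaled variable $x$, and $\kappa$ to be a free auxiliary weight inserted to optimize the bound. The admissibility of the evaluation point, namely that it lies inside the radius of convergence so that the discriminant stays real and the branch is correct, is exactly the constraint defining $\mathcal L$ in \eqref{eq.Xuan.low.Ising.domain.new4cor}.

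\textbf{Step 2 (plugging into the criterion).} From Step 1 I get $\sum_n c_n(d)x^n\le G(m,\kappa)$ with $x$ explicitly tied to $(m,\kappa)$. I then choose $a$ so that $\eee^a-1$ matches the $2d$ and $\kappa$-dependent factors, e.g.\ $\eee^{a}$ expressed via $(\kappa+3)/(\kappa-1)$, which accounts for the last bracket in $\phi^{\rm low}$. Requiring $2d\,G\le \eee^a-1$ and solving for $\beta$ gives $2\beta\ge a-\log x$. Taking logarithms and dividing by $d$ gives the $1/d$ and $(2d-2)/d$ weights. This should be exactly $\beta\ge\phi^{\rm low}(m,\kappa)$, after which I minimize over $\mathcal L$.

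\textbf{Main obstacle.} The hard step is Step 1. I must justify that the tree encoding of primitive contours, via the Lebowitz--Mazel decomposition, overcounts by at most the claimed branching factor per $(d-2)$-face. I must also control the discrete boundary effects, i.e.\ the exponents $-2+3/d$ and $(2d-2)/d$, in an exact closed form. I would first try an injective map from contours to plane trees with labelled edges of bounded alphabet, and verify the recursion coefficient by hand in $d=2$ and $d=3$. Checking that $\mathcal L$ is nonempty and that analyticity extends to a complex neighbourhood (replace $|\eee^{-2\beta}|$ by $\eee^{-2\operatorname{Re}\beta}$) is routine.
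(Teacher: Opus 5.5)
\textbf{There is a genuine gap in Step 1.} Your outer reduction matches the paper's: Theorem \ref{Thm.Xuan.lowtem} plus a bound on the contour generating function, and your remark on uniform convergence on a complex neighbourhood is fine. But Step 1 carries the whole theorem, and the mechanism you propose cannot yield $\phi^{\rm low}$.

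A spanning tree of the plaquette adjacency graph with a recursion $F=x(1+\kappa' F)^{2d-3}$, $\kappa'\ge 1$, has radius of convergence of order $1/(d\kappa')$. Plugging it into Theorem \ref{Thm.Xuan.lowtem} can therefore at best give $2\beta\ge a+\log(c\,d\,\kappa')$, a threshold growing like $\log d$. By contrast, $\min_{\mathcal L}\phi^{\rm low}=O(\log d/d)$; this already holds at $(m,\kappa)=(1/(16d),4d+1)$. That recursion is also not quadratic (its degree in $F$ is $2d-3$), so it does not produce the square root. Nothing in your scheme justifies ``dividing by $d$''.

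Your identification of the parameters is off as well:
\begin{enumerate}
\item In the paper $m=(\eee^a-1)/2$ is the target value in $d\,h\le m$, not an activity. So $\log(1+2m)=a$ is the first term of $\phi^{\rm low}$.
\item $\kappa=1+4Y/(1-Y)$ encodes a stack fugacity $Y=(\kappa-1)/(\kappa+3)\in(0,1)$, so necessarily $\kappa>1$. The last bracket of $\phi^{\rm low}$ is $-\log Y$; it is not part of the choice of $a$.
\item $\mathcal L$ is not about the discriminant staying real, since $1+(d-1)(\kappa-1)(\kappa+3)m$ is automatically positive. It is the requirement $g_1\le g_c=1/((2d-3)\kappa)$, i.e. $m\le g_c+(d-1)\frac{(\kappa-1)(\kappa+3)}{4}g_c^2$.
\end{enumerate}

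\textbf{The missing idea} is the Balister--Bollob\'as floor--stack encoding, combined with averaging over the choice of vertical direction. Fix a vertical direction. Horizontal plaquettes form floors (connected $(d-1)$-dimensional complexes), and vertical plaquettes form stacks joining floors. Grow a spanning tree of the connected floor--stack graph from a floor plaquette. Each of its $2d-2$ facets receives nothing, a floor extension, or one of $4$ stacks of length $k$ with weight $Y^k$. This gives $g=X(1+\kappa g)^{2d-2}$, valid for $X\le X_c$, where $X$ marks \emph{only horizontal} plaquettes. A root lying inside a stack contributes $\widetilde g=\frac{4Y}{(1-Y)^2}g^2$.

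Now sum over all $d$ vertical directions; the root is horizontal for one of them and vertical for $d-1$. By AM--GM, $\sum_i X^{r_i}Y^{s_i}\ge dX^{n/d}Y^n$, because $\sum_i r_i=n$ and $s_i\le n$. This gives Proposition \ref{prop.Xuan.lowX.new2} in the form used in the proof: $d\,h(X^{1/d}Y)\le g+(d-1)\widetilde g$. The exponent $1/d$ is exactly what makes the threshold decay in $d$.

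The formula then follows in three steps:
\begin{enumerate}
\item Solve the quadratic $g_1+(d-1)\frac{(\kappa-1)(\kappa+3)}{4}g_1^2=m$; this is the source of the square root.
\item Set $X_1=g_1(1+\kappa g_1)^{-(2d-2)}$. The condition $X_1\le X_c$ is guaranteed by $(m,\kappa)\in\mathcal L$, which keeps you on the branch reached by the monotone iteration $g_L\uparrow g$.
\item Impose $\eee^{-2\beta+a}\le X_1^{1/d}Y$. Expanding $-\frac1d\log X_1$ produces exactly the weights $-\frac1d$, $-2+\frac3d$ and $\frac{2d-2}{d}$, and what comes out is $2\beta\ge\phi^{\rm low}(m,\kappa)$.
\end{enumerate}
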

	The proof of Theorem \ref{Thm.Xuan.lowtem} and \ref{main.result.low.Xuan} are explained in Subsection \ref{subsection.Xuan.low.proof}.
	
	\medskip
	\paragraph{\bf Dimension \( d=2 \)} In the case $d=2$, based on the deformation rule presented in \cite{FV17}, then the compatibility relation is redefined as $\gamma\sim \gamma'$ if and only if $V_\gamma\cap V_{\gamma'}=\varnothing$ where $V_{\gamma}$ is the set of vertices in a primitive contours $\gamma$. Note that $|V_\gamma|=|\gamma|$. Then by an analogous method using to that used Theorem \ref{Thm.Xuan.hightem}, from the Fernandez-Procacci condition yields the  Gruber-Kunz condition as follows:
	\begin{equation}
		\sup_{x\in(\mathbb Z^d)^*}\sum_{x\in V_{\gamma}\atop \gamma\in{\Gamma_{\rm prim}}}w_{\beta}(\gamma)\eee^{a|\gamma|}\;\le\; \eee^a-1.
	\end{equation}
	In the next step, we aim to count the number of primitive contours in \( \mathbb{Z}^2 \), which is equivalent to counting the number of self-avoiding polygons in \( \mathbb{Z}^2 \). In Subsection \ref{subsection.Xuan.high.proof}, we can bound the number of self-avoiding polygons by \( \frac{4}{9} \cdot 3^k \). Additionally, since the number of edges in any self-avoiding polygons must be even,  we obtain
	\begin{eqnarray}
		\sup_{x\in(\mathbb Z^d)^*}\sum_{x\in V_{\gamma}\atop \gamma\in{\Gamma_{\rm prim}}}w_{\beta}(\gamma)\eee^{a|\gamma|}&\le& \frac{4}{9}\sum_{k=2}^{\infty}3^{2k}\eee^{(-4\beta+2a)k}\nonumber\\ &=& \frac{4}{9}\frac{3^4\eee^{-8\beta+4a}}{1-3^2\eee^{-4\beta+2a}}\nonumber\\ &\le& \eee^{a}-1.
	\end{eqnarray}
	By elementary algebra, the temperature $\beta$ below by  
	\begin{equation}
		\beta\ge  \frac{a}{2}-\frac{1}{4}\log\left[\frac{-9(\eee^a-1)+\sqrt{81(\eee^a-1)^2+144(\eee^a-1)}}{72}\right]:= \phi_2^{\rm low}(a).
	\end{equation}
	Then we have
	\begin{equation}
		\beta\ge \min_{a\ge 0}\phi^{\rm low}_2(a)=0.822614.
	\end{equation}
	
	\medskip
	\paragraph{\bf Comparison of analyticity domains.} We will present a comparison of our results with the previous work discussed  by Balister and Bollob\'as \cite{BB07}. According to this reference, analyticity holds if 
	\begin{equation}
		2\beta \ge \frac{1}{2d}+\frac{1}{d}\log(8\eee d^2)+\log\left(1+\frac{1}{d}\right)=\frac{3}{2d}+\frac{1}{d}\log(8d^2)+\log\left(1+\frac{1}{d}\right):=2\beta_{\rm BB}.
	\end{equation}
	for any $d\ge 2$. From the condition \eqref{eq.Xuan.low.Ising.new4}, we consider $\kappa=4d+1$ and $m=1/(16d)$, then $(m, \kappa)\in \mathcal{L}$. More precisely, we have
	\begin{equation}\frac{1}{16d}\le \frac{d(d+1)(d-1)}{16(d-3/2)^2(d+1/4)^2}< \frac{1}{(2d-3)(4d+1)}+\frac{d(d+1)(d-1)}{16(d-3/2)^2(d+1/4)^2},
	\end{equation}
	for $d\ge 2$. Moreover, $\phi^{\rm low}(1/16d,4d+1)$ is equal to
	\begin{align}
		\log\left(1+\frac{1}{8d}\right)&+\frac{1}{d}\log(8d)+\left(-2+\frac{3}{d}\right)\log(1+d)+\left(2-\frac{2}{d}\right)\log\left( d+\frac{3}{2}+\frac{1}{8d}\right)+\log\left(1+\frac{1}{d}\right)\nonumber\\& \le  \log\left(1+\frac{1}{8d}\right)+\frac{1}{d}\log(8d^2)+\left(2-\frac{2}{d}\right)\log\left( 1+\frac{\frac{1}{2d}+\frac{1}{8d^2}}{1+\frac{1}{d}}\right)\nonumber\\&+ \frac{1}{d}\log\left( 1+\frac{1}{d}\right)+\log\left(1+\frac{1}{d}\right)\nonumber\\ &\le\frac{1}{8d}+\frac{1}{d}\log(8d^2) +\left(1-\frac{1}{d}\right) \left(\frac{1}{d}+\frac{1}{4d^2}\right)+ \frac{1}{d^2}+\log\left(1+\frac{1}{d}\right)\nonumber\\ & \le\frac{9}{8d}+\frac{1}{d}\log(8d^2) +\frac{1}{4d^2}+ \log\left(1+\frac{1}{d}\right)\nonumber\\& < \beta_{\rm BB}.
	\end{align}
	To clarify this further, let us  examine the rate function \( r^{\text{high}} \) that measures the ratio between our bound and the Balister--Bollobás bound, defined as follows
	\[r^{\rm low}=\frac{\beta_{NN}}{\beta_{BB}}.\]
	We observe that our bound is stronger than the Balister--Bollobás bound, as illustrated in Figure \ref{Fig.6} and \ref{Fig.7}.
	\begin{figure}[h]
		\centering
		\includegraphics[width=0.5\linewidth]{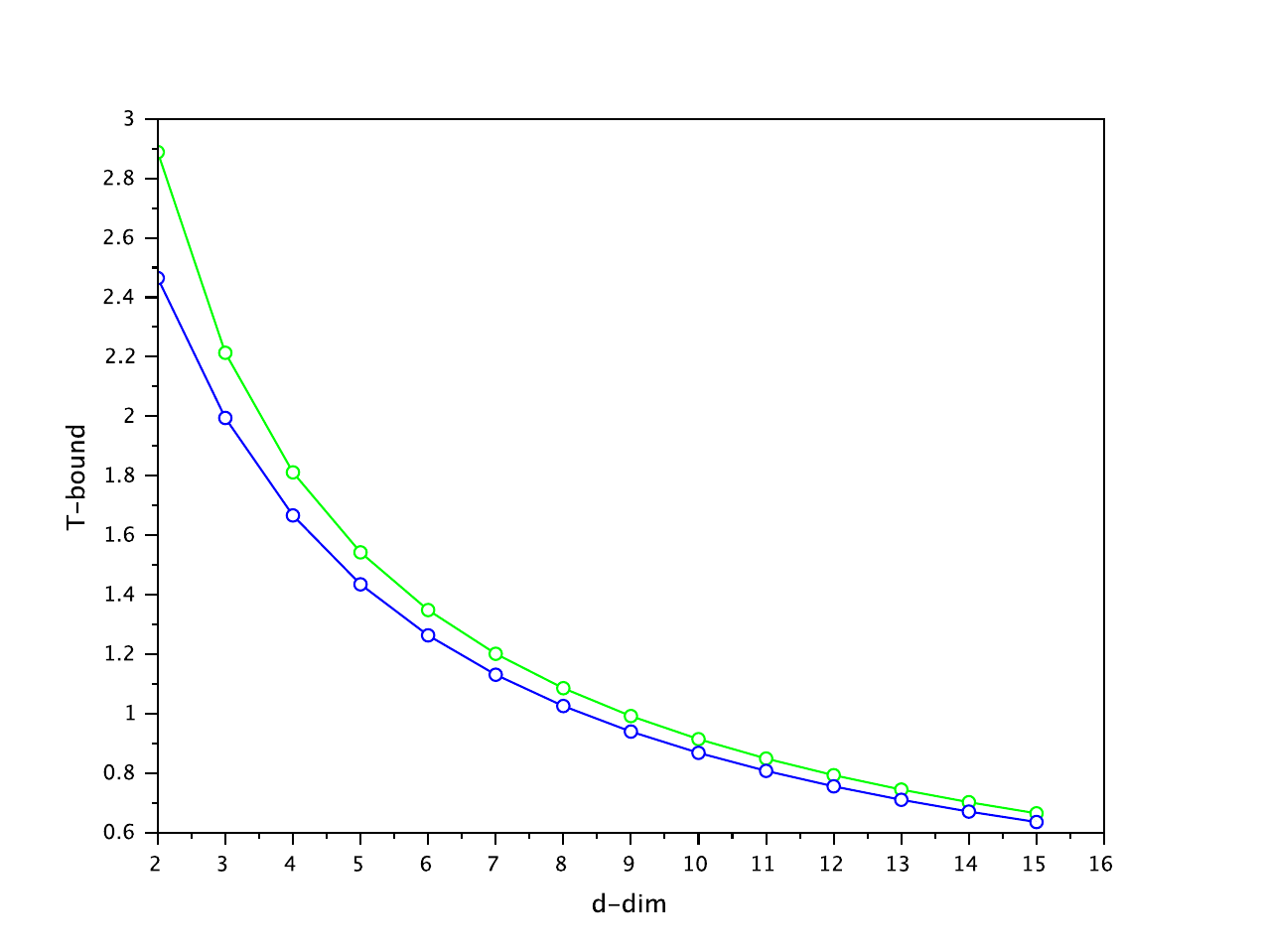}
		\caption{Comparison with the result of Balister and Bollob\'as (the green and red lines represented our result, Balister--Bollob\'as result, respectively)}
		\label{Fig.6}
	\end{figure}
	\begin{figure}[h]
		\centering
		\includegraphics[width=0.5\linewidth]{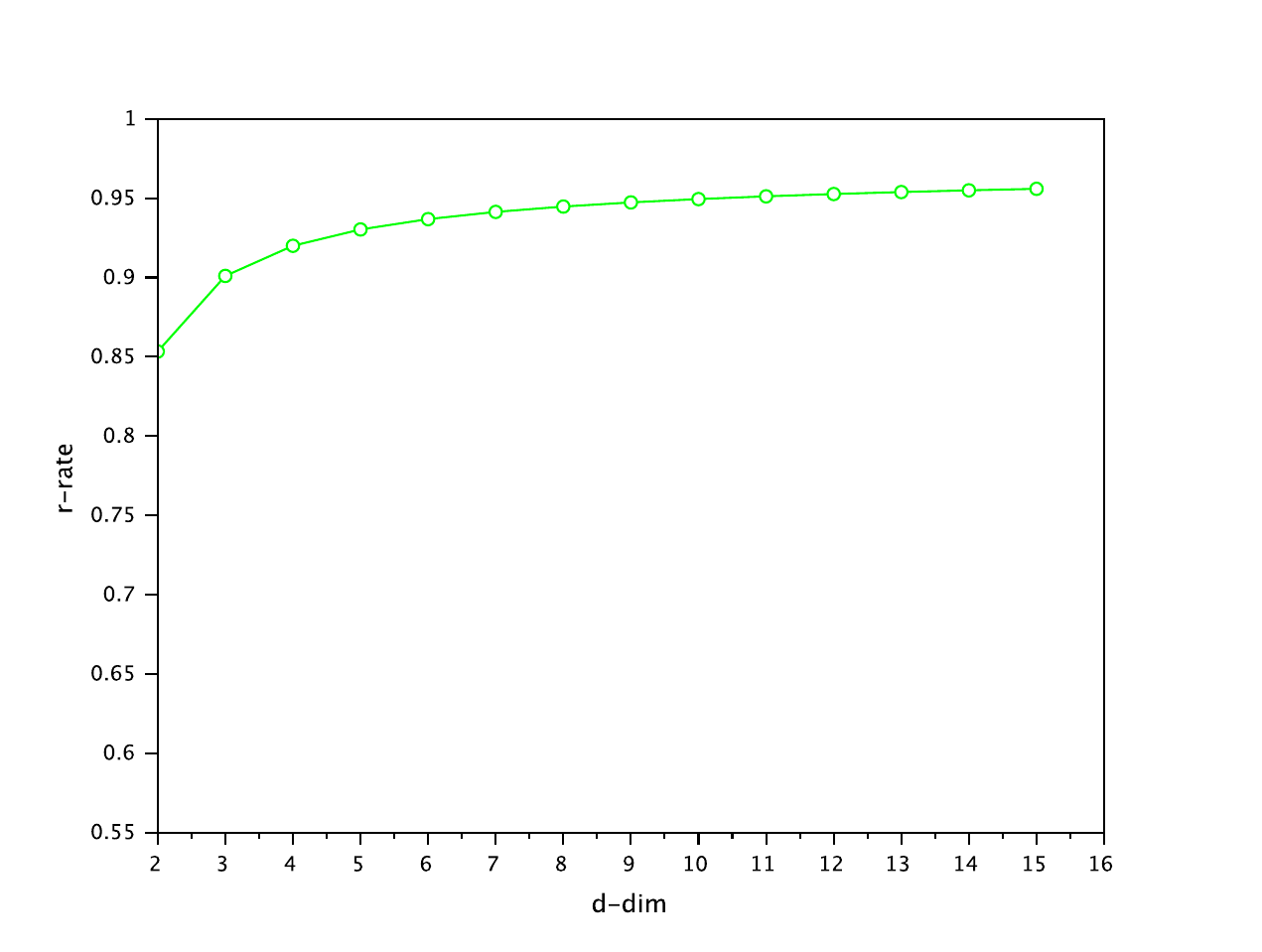}
		\caption{The rate between our bound and Balister and Bollobás bound}
		\label{Fig.7}
	\end{figure}
	
	In particular, for \(d = 2\), Procacci \cite{AP23} obtain an  analyticity domain for 
	\begin{equation}
		\beta\ge 0.94.
	\end{equation}
	In this  case, we obtain a better domain \( \beta \geq 0.822614 \) which is less restrictive than Balister and Bollobás or Procacci.

		\section{PROOFS}\label{sec.proofs}

		\subsection{Proof of Theorem \ref{Thm.Xuan.strong.1} and \ref{main.result.X1}}\label{Sub.Xuan.new.1}
		In this subsection, we will utilize the Fernandez-Procacchi criterion given in Section \ref{sec.app.clu2} along with a new compatible relation presented in Section \ref{sec.Xuan.1} to derive the improved Gruber-Kunz condition which is presented in the  following proof.
		\begin{proof}[{\bf Proof of Theorem \ref{Thm.Xuan.strong.1}}] 
			\noindent
			
			(i)
			Let us start with the following direct consequence of from Proposition \ref{thm:FP} that $\card{\Gamma}_S(\bbd{w}_{\beta,h})$ converges if for each $S\in\mathcal{P}$, 
			\begin{equation}\label{eq:Is.mag8}
				1+\sum_{n\ge 1}\sum_{(S_1,...,S_n)\in \mathcal{P}^n\atop S\nsim S_i,\; S_i\sim S_j,\; 1\le i, j\le n}\prod_{i=1}^nw_{\beta,h}(S_i)\eee^{a(S_i)}\;\le\; \eee^{a(S)}
			\end{equation}
			where we take $\mu_S=w_{\beta,h}(S)\eee^{a(S)}$. It is easy to see that
			\[ S\nsim S^\prime \Longleftrightarrow d(S,S')\le 1\Longleftrightarrow S\cap [S^\prime]_1=\varnothing,\] with
			\[[S]_1\;:=\;\{j\in \mathbb{Z}^d:d(j,S)\le 1\},\]
			and 
			\[ S\sim S^\prime \Rightarrow d(S,S')> 1\Rightarrow S\cap S'=\varnothing.\] 
			
			It implies that the left-hand side of convergence  condition \eqref{eq:Is.mag8} can be bounded as follows
			\begin{equation}\label{eq:Is.mag9}
				1+\sum_{n\ge 1}\sum_{\{S_1,...,S_n\}\subset \mathcal{P}\atop [S_0]_1\cap S_i\ne \varnothing,\; S_i\cap S_j=\varnothing,\; 1\le i, j\le n}\prod_{i=1}^nw_{\beta,h}(S_i)\eee^{a(S_i)}.
			\end{equation}
			Then we can replace the convergent condition \ref{eq:Is.mag8} by
			\begin{align}1+\sum_{n\ge 1}\sum_{\{S_1,...,S_n\}\subset \mathcal{P}\atop [S]_1\cap S_i\ne \varnothing,\; S_i\cap S_j=\varnothing,\; 1\le i, j\le n}&\prod_{i=1}^nw_{\beta,h}(S_i)\eee^{a|[S_i]_1|}\le \eee^{a|[S]_1|}
			\end{align}
			By an argument analogous to that used to derive the Gruber-Kunz condition \eqref{eq:int-ccct10}, the constraint, $[S]_1\cap S_i\ne\varnothing,\;S_i\cap S_j=\varnothing,\; 1\le i<j\le n$,  implies that each of the  polymers $S_1,\ldots,S_n$ must intersect different points in $[S]_1$ to avoid overlapping. Consequently, we can conclude that: (i) $n\le \card{[S]_1}$, and (ii) there are $n$ different points in $[S]_1$ touched by $S_1\cup\ldots\cup S_n$. The selection of these points can be done in $\binom{\card{[S]_1}}{n}$ ways.  Hence the left-hand side of \eqref{eq:int-ccct8} is less than or equal to
			\begin{equation}\label{eq:Is.mag10}
				1+\sup_{x\in\mathbb Z^d}\sum_{x\in S\atop S\in\mathcal{P}}w_{\beta,h}(S)\eee^{a|[S]_1|}\;\le\; \eee^a.
			\end{equation}
			
			(ii)
			We can then rearrange the terms of the cluster expansion in finite subset $X\subset \Lambda$ as follows:
			\begin{align}\label{eq:int-ccct.X.5}
				\sum_{X: X \subset \Lambda} \Psi(X) 
				&= \sum_{s \in \Lambda} \sum_{\substack{X:\\ s \in X \subset \Lambda}} \frac{1}{|X|} \Psi(X) \\
				&= \sum_{s \in \Lambda} \left\{ \sum_{\substack{X\subset  \mathbb{Z}^d:\\ s \in X}} \frac{1}{|X|} \Psi(X) - \sum_{\substack{X\subset  \mathbb{Z}^d:\\ s \in X, X \not\subset \Lambda}} \frac{1}{|X|} \Psi(X) \right\}. 
			\end{align}
			The difference between the two series is well-defined, since both series are absolutely convergent as a consequence of Proposition \ref{thm:FP}. Notice that both of them contain clusters of unbounded sizes. In the case Ising model on lattice $\mathbb Z^d$, we have one useful property called translation invariance. By translation invariance, the first sum over $X$ in the right-hand side of \eqref{eq:int-ccct.X.5} does not depend on $s$ and thus yields 
			\begin{equation}
				\sum_{s \in \Lambda}  \sum_{\substack{X\subset \mathbb Z^d:\\ s \in X}} \frac{1}{|X|} \Psi(X)=|S|\sum_{\substack{X\subset \mathbb V:\\ s \in X}} \frac{1}{|X|} \Psi(X),
			\end{equation}
			for a fixed $s\in \Lambda$. The second sum in the right-hand side of \eqref{eq:int-ccct.X.5} is a boundary term. Indeed, whenever $s \in X \not\subset \Lambda$, there must exist at least one component $S_k \in \{S_1,\ldots S_n\}$ which intersects the boundary of $\Lambda$, namely $S_k \cap \partial \Lambda \neq \emptyset$. 
			Therefore, using \eqref{eq.FP.1} for the second inequality,
			\begin{equation}
				\left| \sum_{s \in \Lambda} \sum_{\substack{X:\\ s \in X, X \not\subset   \Lambda}} \frac{1}{|X|} \Psi(X) \right|
				\leq |\partial \Lambda| \max_{j \in \mathbb Z^d} \sum_{X : X \ni j} |\Psi(X)| \leq \eee^a|\partial \Lambda|.
			\end{equation}
			We thus obtain
			\begin{equation}
				\frac{1}{|\Lambda|}\left|\log Z(\bbd{\rho})- \sum_{s \in S}  \sum_{\substack{X\subset  \Lambda\\ s \in X}} \frac{1}{|X|} \Psi(X)\right|\le \eee^a\frac{|\partial \Lambda|}{|\Lambda|}.
			\end{equation}
			Therefore under the condition \eqref{eq.Xuan.new.GK.1111} and taking the thermodynamic limit  along a sequence of boxes $\Lambda$ such that $|\partial \Lambda|/|\Lambda|\to 0$, the boundary term vanishes, leaving $|\mathcal{E}_{ \Lambda}| / |\Lambda| \to d$ and yielding
			\begin{equation}
				\psi_\beta(h) = \beta d + h + \sum_{X: X \ni 0} \frac{1}{|X|} \Psi(X).
			\end{equation}
		\end{proof}
		Before proceeding with further calculations, let us establish a weaker condition for the convergence of the power series \(|\Gamma|_{S}(\bbd{w}(\beta, h))\) in the following lemma. This condition arises from a bound on the weight \(w_{\beta,h}(\cdot) \eee^{a|[S]_1|}\), as outlined in this lemma, along with the Gruber-Kunz condition. This bound is particularly useful for estimating the parameters \(\beta\) and $h$.
		
		\begin{lem}\label{lem.Xuan.new.1} 
			If there exists $a>0$ such that
			\begin{align}\label{eq.Xuan.new.GK.1}
				\sum_{k=1}^{\infty}\card{\mathcal{A}_k}\eee^{-2d\beta\;k^{(d-1)/d}V(1)^{1/d}-2h\,k+(2d+1)ak}\le \eee^a-1
			\end{align}
			with
			\begin{equation}\label{eq.Xuan.new.11}
				\mathcal{A}_k\;:=\;\{S\in\mathcal P:\;0\in S,\;\card{S}\;=\;k\},
			\end{equation}
			then the condition \eqref{eq.Xuan.new.GK.1111} holds.
		\end{lem}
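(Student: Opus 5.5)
The plan is to bound the left-hand side of \eqref{eq.Xuan.new.GK.1111} term by term, grouping polymers by cardinality. Translation invariance of $w_{\beta,h}$ and of the sizes $|[S]_1|$ shows that the supremum over $x\in\mathbb Z^d$ is attained at $x=0$. We then split the sum according to $k=|S|$:
\[
\sum_{0\in S\in\mathcal P}w_{\beta,h}(S)\eee^{a|[S]_1|}=\sum_{k\ge1}\sum_{S\in\mathcal A_k}\eee^{-2\beta|\partial S|-2hk+a|[S]_1|}.
\]
It therefore suffices to prove two pointwise estimates for every $S\in\mathcal A_k$: a lower bound on the boundary, $|\partial S|\ge d\,k^{(d-1)/d}V(1)^{1/d}$, and an upper bound on the neighbourhood, $|[S]_1|\le(2d+1)k$. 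Once these hold, each summand is at most $\eee^{-2d\beta k^{(d-1)/d}V(1)^{1/d}-2hk+(2d+1)ak}$, assuming $\beta\ge0$. Summing over the $|\mathcal A_k|$ polymers of size $k$ then gives exactly the left-hand side of \eqref{eq.Xuan.new.GK.1}, and that is bounded by $\eee^a-1$ by hypothesis.

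For the neighbourhood bound, I would write $[S]_1=\bigcup_{i\in S}[\{i\}]_1$ and count the new sites each $i$ contributes. With the convention used to define $\phi^{\rm st}$, each site contributes at most $2d+1$ sites: itself plus its $2d$ lattice neighbours. If the paper instead means the full $\ell^\infty$ ball, the constant becomes $3^d$. I would match whichever convention is used in \eqref{eq.Xuan.new.GK.1} and in $\phi^{\rm st}$. Since the lemma states $(2d+1)a$, I will use the nearest-neighbour count, giving $|[S]_1|\le(2d+1)|S|$.

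For the boundary bound I would use the discrete isoperimetric inequality on $\mathbb Z^d$. Identify $S$ with the union of unit cubes $\bigcup_{i\in S}\mathbf J_i$, whose volume is $k$. Then $|\partial S|$ counts at least the $(d-1)$-dimensional faces separating $S$ from its complement, which is the surface area of this union. The continuum isoperimetric inequality gives surface area at least $d\,V(1)^{1/d}k^{(d-1)/d}$, where $V(1)$ is the volume of the unit ball. For the cube-union I would take $V(1)$ to be the volume of the unit ball of the relevant norm. If the $\ell^\infty$ ball with $V(1)=2^d$ is intended, the inequality instead follows from Loomis--Whitney: $k^{d-1}\le\prod_j|\pi_j(S)|$, and each projection $\pi_j(S)$ contributes at least $2|\pi_j(S)|$ boundary faces orthogonal to $e_j$. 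By AM--GM this gives $|\partial S|\ge 2d\,k^{(d-1)/d}$, which equals $d\,(2^d)^{1/d}k^{(d-1)/d}$.

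I expect this isoperimetric step to be the main obstacle. The difficulty is pinning down which $V(1)$ and which notion of boundary make the inequality true with the exact constant in \eqref{eq.Xuan.new.GK.1}. I would try the Loomis--Whitney route first, because it is fully discrete and gives the sharp cube constant.

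Finally, the sign of $h$ needs care. If $h$ could be negative, the factor $\eee^{-2hk}$ is harmless, since it already appears exactly as written. Only the $\beta$-term needs $\beta\ge0$ to turn the lower bound on $|\partial S|$ into an upper bound on the weight, and $\beta\ge0$ is assumed in \eqref{eq:int.free1}.
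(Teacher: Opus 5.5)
Your proposal is correct and follows the paper's proof essentially step for step. The paper also bounds $|[S]_1|\le(2d+1)|S|$, invokes the isoperimetric inequality $|\partial S|\ge d\,V(1)^{1/d}|S|^{(d-1)/d}$, and sums over $k=|S|$, implicitly using translation invariance and $\beta\ge 0$; its isoperimetric step appeals to the Euclidean sphere, which your cube-union or Loomis--Whitney argument makes rigorous, the latter even with the larger constant $2d\ge d\,V(1)^{1/d}$. Your caveat about conventions is well founded: with the paper's literal $\ell^\infty$ distance one has $|[\{0\}]_1|=3^d>2d+1$, so the step $|[S]_1|\le(2d+1)|S|$, in your argument and in the paper's, holds only under the nearest-neighbour reading of $[S]_1$ that the constants $(2d+1)a$ and $\phi^{\rm st}$ presuppose.
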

		\begin{proof}
			We observe that
			\begin{equation}\label{eq:Is.mag10.1}
				\card S\;\le \;\card{[S]_1}\;\le \;(2d+1)\card S,
			\end{equation}
			Then we have
			\begin{equation}\label{eq:Xuan.Is.mag10.new.1}
				\sup_{x\in\mathbb Z^d}\sum_{x\in S\atop S\in\mathcal{P}}w_{\beta,h}(S)\eee^{a|[S]_1|}\;\le\; \sup_{x\in\mathbb Z^d}\sum_{x\in S\atop S\in\mathcal{P}}\eee^{-2\beta |\partial S|-2h|S|}\eee^{(2d+1)a|S|}
			\end{equation}
			We also observe that the smallest ratio of area to volume is achieved by a $d$-dimensional sphere.  Denoting the volume and surface area of a sphere of radius $R$, respectively, by
			\[\begin{split}V_d(R)&=V_d(1)\,R^d\\S_d(R)&=dV_d(1)\,R^{d-1}\end{split}\]
			we obtain 
			\begin{align}\label{eq:Xuan.Is.mag10.new.2}
				\card{\partial S}\ge S_d(R)=dV_d(1)^{1/d}[V_d(R)]^{(d-1)/d} &=dV_d(1)^{1/d}|S|^{(d-1)/d}.
			\end{align}
			As a consequence of inequalities \eqref{eq:Xuan.Is.mag10.new.1}, \eqref{eq:Xuan.Is.mag10.new.2}, and the condition \eqref{eq.Xuan.new.GK.1}, we obtain
			\begin{align}\label{eq:Is.mag11}
				\sup_{x\in\mathbb Z^d}\sum_{x\in S\atop S\in\mathcal{P}}w_{\beta,h}(S)\eee^{a|[S]_1|}&\le \sum_{k=1}^{\infty}\card{\mathcal{A}_k}\eee^{-2d\beta\;k^{(d-1)/d}V(1)^{1/d}-2h\,k+(2d+1)ak}\nonumber\\ &\le \eee^a-1
			\end{align}
			with $\mathcal{A}_k$ defined as in \eqref{eq.Xuan.new.11}. This concludes the proof. 
		\end{proof}
		For simplicity in computation, we will bound 
		\[2d\beta\;k^{(d-1)/d}V(1)^{1/d} \ge 0.\]
		This implies that we can replace condition \eqref{eq.Xuan.new.GK.1} with a weaker condition as follows:
		\begin{equation}\label{eq.Xuan.new.GK1.1}
			\sum_{k=1}^{\infty}\card{\mathcal{A}_k}\eee^{-2h\,k+(2d+1)ak}\le (\eee^a-1)
		\end{equation}
		To estimate \( h > 0 \) such that the condition \eqref{eq.Xuan.new.GK1.1} is satisfied, we will utilize the generating function method. This method was previously employed by Balister and Bollobás in \cite{BB07} to establish a bound on the number of connected subsets. Let
		\begin{equation}
			p(x)=\sum_{n=1}b_nX^n
		\end{equation}
		be a generating  function of a number of connected subsets, and
		\begin{equation}
			p(\eee^{-2h+(2d+1)a})=\sum_{k=1}^{\infty}\card{\mathcal{A}_k}\eee^{-2h\,k+(2d+1)ak}.
		\end{equation}
		\begin{proof}[\bf Proof of Theorem \ref{main.result.X1}] Each connected subset \( S \) with volume \( n \) corresponds to a connected graph \( G_S \) based on that set. For any pair of elements \( s, s' \in S \), the set \( \{s, s'\} \) is considered an edge in the connected graph \( G_S \) if the distance \( d(s, s') = 1 \). Let \( L \) denote the maximum distance from any vertex in \( S \) to the root vertex \( 0 \). We define the generating function as
			\begin{equation}
				p_L(X):=\sum_{n\ge 1}b_{n, L}X^n
			\end{equation}
			where $b_{n, L}$ counts the number of connected subsets where the distance between the root \( 0 \) and each site is less than or equal to \( L \). Let us see the relation between the $p_{L+1}(X)$ and $p_L(X)$. We know that each connected subset \( G_S \), where the distance from the root \( 0 \) to each site is less than or equal to \( L+1 \), can be represented by connecting the root \( 0 \) with smaller connected subsets whose maximum distance to their respective roots is less than or equal to \( L \). These smaller subsets are rooted at some or all of the adjacent sites of \( 0 \). Then
			\begin{equation}
				p_{L+1}(X) = X (1 + p_L(X))^{2d},
			\end{equation}
			and $p_0(X)=X$. As $L$ increases, $b_{n,L}$ increases, and for $L \geq n$, $b_{n,L}$ is constant, say $b_{n,L} = a_n$. Thus $p_L(X)$ increases monotonically to 
			\[
			p(X) = \sum_{n\ge 1} b_n X^n,
			\]
			provided that $X$ lies within the radius of convergence of this limiting series, 
			where $p(X)$ satisfies the equation
			\begin{equation}\label{eq.Xuan.Ising.strong.SSS1}
				p(X) = X (1 + p(X))^{2d}
			\end{equation}
			We then rewrite equation \eqref{eq.Xuan.Ising.strong.SSS1} as
			\begin{equation}
				X = p (1 + p)^{-2d}
			\end{equation}
			and maximize $X$. If the maximum value \( X = X_c \) occurs at \( p = p_c \), we can see inductively that \( p_L(X_c) \leq p_c \) for all \( L \). Therefore, the generating function \( p(X) \) converges for all \( X \leq X_c \). To estimate the maximum of \( X \), we will solve the following equation
			\begin{equation}
				\frac{dX}{dp} = (1+p)^{-2d}\left(1-\frac{2dp}{p+1}\right)=0
			\end{equation}
			Thus, \( X \) reaches its maximum value when \( p = \frac{1}{2d - 1} \), which leads us to the equation
			\begin{equation}
				X_c=\left(1+\frac{1}{2d-1}\right)^{-(2d-1)}(2d)^{-1}.
			\end{equation}
			In order to get the optimal domain for the analyticity of pressure function we will set up 
			\begin{equation}
				p(X_1)=(\eee^a-1),
			\end{equation}
			and 
			\[X_1=(\eee^a-1)
			\left[1 + (\eee^a-1)
			\right]^{-2d}=(\eee^a-1)\eee^{-2da}.\]
			Since $p(X)$ is an increasing function, it implies that for $X \leq X_1$, we have $p(X) \leq p(X_1)$. Therefore, we have
			\begin{equation}
				X=\eee^{-2h+(2d+1)a}\le X_1. 
			\end{equation}
			It is equivalent to
			\begin{eqnarray}\label{eq.min.strong.X.3}
				2h
				&\ge& (2d+1)a+2d\log\left[1+(\eee^a-1)
				\right]-\log(\eee^a-1)\nonumber\\&=&(4d+1)a-\log(\eee^a-1)=\phi^{\rm st}(a).
			\end{eqnarray}
			Therefore, we have
			\begin{equation}
				2h
				\ge \min_{a> 0}\phi^{\rm st}(a).
			\end{equation}
			Using elementary calculus, $\phi^{\rm st}$ attains the minimum value at 
			\begin{equation}
				a^{\rm st}=\log\left[1+\frac{1}{4d}\right],
			\end{equation}
			and 
			\begin{equation}\label{eq.min.strong.X.1}
				\min_{a> 0}\phi^{\rm st}(a)=(4d+1)\log\left[1+\frac{1}{4d}\right]+\log(4d)=\varphi^{\rm st}(d).
			\end{equation}
		This completes the proof of Theorem \ref{main.result.X1}.
		\end{proof}

		\subsection{Proof of Theorem  \ref{Thm.Xuan.hightem} and \ref{main.result.high.X1}}\label{subsection.Xuan.high.proof}
		At the beginning of this subsection, we will start with the Fernandez-Procacchi criterion presented in Section \ref{sec.app.clu2}, along with a new compatible relation discussed in Section \ref{sec.Xuan.1}. This will allow us to derive an improvement of the Gruber-Kunz condition, which is presented in the following proof.
		\begin{proof}[\bf Proof of Theorem  \ref{Thm.Xuan.hightem}]
			\noindent 
			
			(i) Let us start with following readily from Proposition  \ref{thm:FP} that $\card{\Gamma}_{\mathfrak{p}}(\bbd{w}^{\rm high}_{\beta})$ converges if for each $\mathfrak{p}\in\mathcal{P}^{\rm p}_{\mathbb Z^d}$, 
			\begin{equation}\label{eq:Is.mag8}
				1+\sum_{n\ge 1}\sum_{(\mathfrak{p}_1,...,\mathfrak{p}_n)\in [\mathcal{P}^{\rm p}_{\mathbb Z^d}]^n\atop \mathfrak{p}\nsim \mathfrak{p}_i,\; \mathfrak{p}_i\sim \mathfrak{p}_j,\; 1\le i, j\le n}\prod_{i=1}^nw^{\rm high}_{\beta}(\mathfrak{p}_i)(\coth a)^{|\mathfrak{p}_i|}\;\le\; (\coth a)^{|\mathfrak{p}|}
			\end{equation}
			where we take $\xi_{\mathfrak{p}}=w^{\rm high}_{\beta}(\mathfrak{p})(\coth a)^{|\mathfrak{p}|}$. It is easy to see that
			\[ \mathfrak{p}\nsim \mathfrak{p}^\prime \Longleftrightarrow \mathfrak{p},\mathfrak{p}'\mbox{ have at least one common edge (denoted by } \mathfrak{p}\cap\mathfrak{p}'\ne\varnothing) \mbox{ for }  \;\mathfrak{p},\mathfrak{p}'\in\mathcal{P}^{\rm p}_{\mathbb Z^d}.\]
			
			We then rewrite the convergent condition \eqref{eq:Is.mag8} by
			\begin{align}\label{eq:int-ccct10.high1}
				1+\sum_{n\ge 1}\sum_{\{\mathfrak{p}_1,...,\mathfrak{p}_n\}\subset \mathcal{P}^{\rm p}_{\mathbb Z^d}\atop \mathfrak{p}\cap \mathfrak{p}_i\ne \varnothing,\; \mathfrak{p}_i\cap\mathfrak{p}_j=\varnothing,\; 1\le i, j\le n}\prod_{i=1}^nw^{\rm high}_{\beta}(\mathfrak{p}_i)(\coth a)^{|\mathfrak{p}_i|}\le (\coth a)^{|\mathfrak{p}|}
			\end{align}
			By using an argument similar to the one employed to derive the Gruber-Kunz condition \eqref{eq:int-ccct10}, we start with the constraint in the sum: $\mathfrak{p}_1 \cap \mathfrak{p}_i \neq \varnothing$ and $\mathfrak{p}_i \cap \mathfrak{p}_j = \varnothing$ for $1 \le i < j \le n$. This implies that each of the polymers $\mathfrak{p}_1, \ldots, \mathfrak{p}_n$ must intersect different edges in $\mathfrak{p}$ to avoid overlapping. Consequently, we have the following results: (i) \( n \leq \card{\mathfrak{p}} \), and (ii) there are \( n \) distinct points in \( \mathfrak{p} \) that are touched by \( \mathfrak{p}_1 \cup \ldots \cup \mathfrak{p}_n \). The selection of these points can be accomplished in \( \binom{\card{\mathfrak{p}}}{n} \) ways. Therefore, we can conclude that the left-hand side of \eqref{eq:int-ccct10.high1} is less than or equal to 
			\begin{equation}\label{eq:Is.mag.high.10}
				1+\sup_{e\in\mathcal{E}_{\mathbb{Z}^{d}}}\sum_{e\in \mathfrak{p}\atop \mathfrak{p}\in\mathcal{P}^{\rm p}_{\mathbb Z^d}}w^{\rm high}_{\beta}(S)(\coth a)^{|\mathfrak{p}|}\;\le\; \coth a,
			\end{equation}
			Then we obtain
			\begin{equation}\label{eq:Is.mag.high.11}
				\sup_{x\in\mathbb Z^d}\sum_{x\in \gamma\atop \gamma\in\mathcal{P}^{\rm p}_{\mathbb Z^d}}w^{\rm high}_{\beta}(\gamma)(\coth a)^{|\mathfrak{p}|}
				\;\le\; \frac{\coth a-1}{2}.
			\end{equation}
			(ii) Statement (ii) can be proved using an  argument analogous to that of  Theorem \ref{Thm.Xuan.strong.1} (ii). This completes the proof.
		\end{proof}
		Denote
		\[\mathcal{N}_k=\big\{\mathfrak{p}\in \mathcal{P}^{\rm p}_{\mathbb Z^d}\big|e\in\mathfrak{p},\;|\mathfrak{p}|=k\big\},\]
		for a fixed edge $e\in\mathcal{E}_{\mathbb Z^d}$.
		\begin{lem}\label{lem.Eulerian.cycle} For each \(\gamma \in \mathcal{C}^{\rm cy}\), the number of edges in \(\gamma\) is an even value.
		\end{lem}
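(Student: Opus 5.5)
The plan is a bipartite parity argument. I read $\mathcal{C}^{\rm cy}$ as the collection of cycles, that is, closed edge sets such as the polygons $\mathfrak{p}\in\mathcal{P}^{\rm p}_{\mathbb Z^d}$ or the even-degree components $E_i\in\mathcal{E}^{\rm even}_\Lambda$ arising in \eqref{eq:Is.mag.high3}, built from nearest-neighbour edges of $\mathbb Z^d$. The key point is that $\mathbb Z^d$ with nearest-neighbour edges is bipartite. Colour each vertex $x=(x_1,\dots,x_d)$ by the parity of $x_1+\cdots+x_d$. Each lattice edge changes exactly one coordinate by $\pm1$, so it joins vertices of opposite colour.

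\emph{Step 1: a simple cycle has even length.} Write $\gamma$ as a closed walk $x^{(0)},x^{(1)},\dots,x^{(k)}=x^{(0)}$ with $k=|\gamma|$. The parity of the coordinate sum flips at every step. Returning to the starting vertex therefore forces $k$ to be even.

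An equivalent algebraic form is useful. Let $n_j^{\pm}$ be the number of steps in direction $\pm e_j$. Closedness gives $n_j^+=n_j^-$ for every $j$, so
\[
|\gamma|=\sum_{j=1}^d\bigl(n_j^++n_j^-\bigr)=2\sum_{j=1}^d n_j^+,
\]
which is even.

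\emph{Step 2: extension to general closed edge sets.} If $\mathcal{C}^{\rm cy}$ also contains general even-degree connected edge sets, apply Lemma \ref{thm.Veb}. Such a $\gamma$ is an edge-disjoint union of simple cycles $\mathfrak{c}_1,\dots,\mathfrak{c}_m$, so $|\gamma|=\sum_i|\mathfrak{c}_i|$. Each summand is even by Step 1, hence so is $|\gamma|$.

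A more direct route avoids Veblen. Orient the edges by the bipartition, from even vertices to odd ones. Summing degrees over the even-coloured vertices counts every edge exactly once, so $|\gamma|$ equals a sum of even degrees and is therefore even.

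\emph{Main obstacle.} The mathematics is routine; the only real difficulty is pinning down what $\mathcal{C}^{\rm cy}$ denotes, since it is not defined earlier in the paper. I would state the lemma for any edge set in $\mathcal{E}^{\rm even}$. That case covers polygons and also justifies the later restriction of the sums over $\mathcal N_k$ to even $k$.
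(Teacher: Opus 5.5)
Your proposal is correct and takes essentially the same route as the paper, whose proof is just the observation that $\mathbb Z^d$ is bipartite and so has no odd cycles; you make this explicit with the coordinate-sum colouring, and your Step 2 (via Lemma \ref{thm.Veb} or the colour-class degree count) is a harmless extension to even-degree edge sets. Like the paper, you assume nearest-neighbour adjacency rather than the literal $\|i-j\|_\infty=1$ adjacency of \eqref{eq:int.free2}; under the literal reading $(0,0),(1,0),(1,1)$ form a triangle and the lemma fails, so you should state the nearest-neighbour assumption explicitly.
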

		\begin{proof}
			Since $\mathbb Z^d$ is hypercube graph then $\mathbb Z^d$ is bipartite graph. And we know that a bipartite graph is a graph that does not contain any odd-length cycles. Then each \(\gamma \in \mathcal{C}^{\rm cy}\), the number of edges in \(\gamma\) is an even value. The proof is completed.
		\end{proof}
		In the next step, we will establish an upper bound for the set \(\mathcal{N}_k\). This set consists of self-avoiding polygons, a topic that has been extensively studied, as summarized in \cite{AG09}. In the following lemma, we will present a well-known upper bound for \(\mathcal{N}_k\), which is derived from the properties of simple cycles. The proof is straightforward and can be found in \cite{AG09}, so we will omit it here.
		\begin{lem}\label{lem.number.cycle} For $k\ge 4$, 
			\begin{equation}\label{eq:Is.mag.high10} 
				\card{\mathcal{N}_k}\;\le\; 2(2d-1)^{k-1}.
			\end{equation}
		\end{lem}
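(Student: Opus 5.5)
The plan is to bound $\card{\mathcal{N}_k}$ by an injection into non-backtracking walks. Recall that $\mathcal{N}_k$ is the set of polygons (primitive cycles, i.e.\ self-avoiding polygons) of length $k$ that contain the fixed edge $e=\{x,y\}$.

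\emph{Step 1: encode polygons as walks.} Each $\mathfrak{p}\in\mathcal{N}_k$ has a cyclic sequence of $k$ distinct vertices, and $e$ is one of its edges. Cut $\mathfrak{p}$ at $e$ and read it as a closed walk
\[x=v_0,\;v_1=y,\;v_2,\;\ldots,\;v_{k-1},\;v_k=x.\]
This walk is determined by $\mathfrak{p}$ once we fix the convention that the first step goes from $x$ to $y$. Reversing the orientation would start with the step $y\to x$ instead, so the map is injective; it is not surjective, which is harmless. Anticipating the factor $2$ in the bound, one may equally allow both orientations, starting at $x$ or at $y$, and this is where the constant $2$ comes from.

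\emph{Step 2: count the walks.} The first step is forced (one choice). For $i=2,\ldots,k-1$, the vertex $v_i$ is a neighbour of $v_{i-1}$ in $\mathbb Z^d$ with $v_i\ne v_{i-2}$, because the polygon is self-avoiding. This leaves at most $2d-1$ choices at each of these $k-2$ steps. The last step $v_{k-1}\to v_k=x$ is forced. Hence the number of walks is at most $(2d-1)^{k-2}$, and so
\[\card{\mathcal{N}_k}\le (2d-1)^{k-2}\le 2(2d-1)^{k-1}.\]
The factor $2$ and the extra power of $2d-1$ in the stated bound are slack, so even a crude version of the argument suffices.

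\emph{Step 3: check the one subtle point.} The only thing to verify is that primitivity in the paper's sense (a cycle that cannot be split into two cycles) really yields a simple cycle, so that the vertex sequence is self-avoiding. Suppose the connected even edge-set $\mathfrak{p}$ visited some vertex twice. Splitting the closed trail at that vertex gives two closed trails, and by Veblen's theorem (Lemma~\ref{thm.Veb}) each decomposes into cycles. That would partition $\mathfrak{p}$, contradicting primitivity. So primitive cycles are simple and Step 2 applies.

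I expect this identification in Step 3 to be the main point requiring care; the counting itself is routine. The hypothesis $k\ge 4$ only guarantees that $\mathcal{N}_k$ can be nonempty, since the shortest cycles in the bipartite lattice have length $4$ by Lemma~\ref{lem.Eulerian.cycle}.
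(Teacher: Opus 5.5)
Your proof is correct, and it is the standard non-backtracking-walk count for self-avoiding polygons that the paper omits (it only defers to the literature). Cutting at $e=\{x,y\}$ and fixing the first step $x\to y$ gives $|\mathcal{N}_k|\le(2d-1)^{k-2}$, so the factor $2$ and the extra power of $2d-1$ in the stated bound are slack.

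Two cosmetic points. First, injectivity holds because the edge set of the walk is $\mathfrak{p}$ itself; your orientation remark only shows the walk is well defined. Second, in Step~3 the two closed trails obtained by splitting at a repeated vertex are already a partition of $\mathfrak{p}$ into two cycles in the paper's sense (connected even edge sets), so the further appeal to Veblen's theorem is unnecessary.
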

		
		\noindent
		\begin{proof}[\bf{Proof of Theorem \ref{main.result.high.X1}:}] As the consequence of the alternative Gruber-Kunz condition, Lemma \ref{lem.Eulerian.cycle}, Lemma \ref{lem.number.cycle}, we obtain
			\begin{equation}\label{eq:Is.mag.high11}
				\frac{2}{2d-1}\sum_{k\ge 2}\bigg((2d-1)\frac{\tanh\beta}{\tanh a}\bigg)^{2k}\;=\;\frac{2}{2d-1}\frac{\left[(2d-1)\frac{\tanh\beta}{\tanh a}\right]^4}{1-\left[(2d-1)\frac{\tanh\beta}{\tanh a}\right]^2}\;\le\; \coth a-1.
			\end{equation}
			It is also equivalent to the following expression 
			\begin{equation}\label{eq:Is.mag.high13}\left[(2d-1)\frac{\tanh\beta}{\tanh a}\right]^4+ (2d-1)\frac{\coth a-1}{2}\left[(2d-1)\frac{\tanh\beta}{\tanh a}\right]^2-(2d-1)\frac{\coth a-1}{2}\le 0.
			\end{equation}
			We then have
			\begin{align}\label{eq:rr.mag.high11}\tanh\beta&\le \displaystyle\frac{\tanh a}{\sqrt{2d-1}}\sqrt{\frac{(1-\coth a)+\sqrt{(\coth a-1)^2+\frac{8}{2d-1}(\coth a-1)}}{4}}\nonumber\\ &=\phi^{\rm high}(a).
			\end{align}
			In the next step, we would like to optimize the best domain for temperature by taking
			\begin{align}\label{eq:rr.mag.highX11}\tanh\beta&\le \displaystyle\max_{a> 0}\frac{\tanh a}{\sqrt{2d-1}}\sqrt{\frac{(1-\coth a)+\sqrt{(\coth a-1)^2+\frac{8}{2d-1}(\coth a-1)}}{4}}.
			\end{align}
			Using basic optimization theory, we can show that there is a unique number \(\bar{a}^{\text{high}} \in (\mathrm{arcoth}(1+1/3),\infty)\) at which \(\phi^{\text{high}}\) reaches its global maximum when \(a > 0\). Consequently, the global maximum of \(\phi^{\text{high}}\) is given by \(\phi^{\text{high}}(\bar{a}^{\text{high}})\). Therefore, the condition \eqref{eq:rr.mag.highX11} implies that the function is analytic for 
			\begin{equation}\label{eq:Is.mag.high15}
				\beta\;\le \;\frac{1}{2}\log \frac{1+\phi^{\rm high}(\bar{a}^{\rm high})}{1-\phi^{\rm high}(\bar{a}^{\rm high})}\;=\; \beta_N.
			\end{equation}
			This completes the proof of Theorem \ref{main.result.high.X1}.
		\end{proof}
		
		\subsection{Proof of Theorem \ref{Thm.Xuan.lowtem}  and \ref{main.result.low.Xuan}}\label{subsection.Xuan.low.proof} 
		We will begin by examining a consequence of the Fernandez-Procacci convergence condition to derive the statement in Theorem \ref{Thm.Xuan.lowtem}.
		\begin{proof}[\bf Proof of Theorem \ref{Thm.Xuan.lowtem}]
			\noindent 
			
			(i.) It follows readily from the Fern\'andez-Procacci condition of Proposition   \ref{thm:FP} guarantees the convergence of expansion if
			\begin{equation}\label{eq:Is.mag.low7} 
				1+\sum_{n\ge 1}\sum_{\{\gamma_1,...,\gamma_n\}\subset\Gamma^{\mathbb Z^d}_{\rm prim}}\prod_{i=1}^nw_\beta(\gamma_i)\eee^{a(\gamma_i)}\prod_{i=1}^n\one{\gamma_0\nsim \gamma_i}\prod_{1\le i<j\le n}\one{\gamma_i\sim \gamma_j}\;\le\; \eee^{a(\gamma_0)}
			\end{equation}
			To apply the Fern\'andez-Procacci criterion, as for most of the models, we set $\mu_{\gamma}=w({\gamma})\eee^{a\card{\gamma}}$ to obtain
			\begin{equation}\label{eq:int-ccct8}
				1+\sum_{n\ge 1}\sum_{\{\gamma_1,\ldots,\gamma_n\}\subset \Gamma^{\mathbb Z^d}_{\rm prim}\atop \gamma_0\nsim \gamma_i,\;\gamma_i\sim \gamma_j,\; 1\le i<j\le n}\prod_{i=1}^{n}w_\beta(\gamma_i)\eee^{a|\gamma_i|}\;\le\;\eee^{a|\gamma_0|}
			\end{equation}
			From the constraints in the sum of inequality \eqref{eq:int-ccct8}, for each $\{\gamma_1,\ldots,\gamma_n\}$,  we can assume, without loss of generality, we have \( k \) contours \( \gamma_{i_1}, \ldots, \gamma_{i_k} \) that share common plaquettes with \( \gamma_0 \), and \( (n-k) \) contours \( \gamma_{i_{k+1}}, \ldots, \gamma_{i_n} \) that do not share common plaquettes with \( \gamma_0 \), however \( V_{\gamma_{i_j}} \cap V_{\gamma_0} \neq \varnothing \). Note that for each pair \( \gamma_i \) and \( \gamma_j \) where \( 1 \leq i < j \leq n \), there are no common plaquettes, and \( V_{\gamma_i} \cap V_{\gamma_j} = \varnothing \).
			Denote 
			\[\gamma^{0,i}=\{\mathrm{p}: {\rm p} \,\mbox{ is common plaquette of contour }\,\gamma_0 \mbox{ and contour  }\gamma_{i}\}.\] 
			For $j=k+1,\ldots, n$, since $\gamma^{0,i_j}=\varnothing$ and $V_{\gamma_{i_j}}\cap V_{\gamma_0}\ne \varnothing$, $\gamma_0$ intersects $\gamma_{i_j}$ at a $(d-2)$-dimensional faces,  ${\rm f}^{i_j}$. Let $x^{i_j}\in V_{\gamma_{i_j}}\cap V_{\gamma_0}$ be the site  adjacent to one of  $(d-2)$-dimensional faces ${\rm f}^{i_j}_1$. Let ${\rm p}^{i_j}\in \gamma_0$ be the plaquette that passes through the $(d-2)$-dimensional face ${\rm f}^{i_j}$ and adjacent to the site $x^{i_j}$. We need to show that ${\rm p}^{i_j}\notin \gamma^{0,1}\cup \cdots\cup\gamma^{0,k}$. Let us assume that there exists $1\le\ell\le k$ such that ${\rm p}^{i_j}\in \gamma^{0.\ell}$. Then ${\rm p}^{i_j}$ is the common plaquettes of $\gamma_0$ and $\gamma_{i_\ell}$. Since $\gamma_{i_j}$ and $\gamma_{i_\ell}$ do not share any  plaquettes, they $\gamma_{i_\ell}$ and $\gamma_{i_j}$ must intersect at at least $(d-2)$-dimensional face $f^{i_j}$. It implies that $V_{\gamma_{i_j}}\cap V_{\gamma_{i_\ell}}\ne \varnothing$. This contradicts the compatibility.
			Moreover, for $k+1\le\ell<\kappa\le  n$, $\gamma_{\ell}\sim\gamma_{\kappa}$ then $V_{\gamma_{\ell}}\cap V_{\gamma_\kappa}=\varnothing$, it implies that $(V_{\gamma_\kappa}\cap V_{\gamma_0})\cap (V_{\gamma_\ell}\cap V_{\gamma_0})=\varnothing$. 
			
			The explanation of the constraint in the previous paragraph implies that we have $k$-contours $\gamma_{i_1},\ldots, \gamma_{i_k}$ which must intersect different plaquettes ${\rm p}_1,\ldots, {\rm p}_k\in\gamma$ to avoid overlapping, and $(n-k)$-contours $\gamma_{i_{k+1}},\ldots, \gamma_{i_n}$ such that $V_{\gamma_{i_{k+1}}},\ldots, V_{\gamma_{i_n}}$ must intersect different sites $x^{i_{k+1}},\ldots, x^{i_n}\in V_{\gamma}$. Furthermore, 
			${\rm p}_{i_j}\in\gamma_{0}\setminus \{{\rm p}_1,\ldots, {\rm p}_k\}$ are the adjacent plaquettes of $x^{i_j}$ for $j=k+1,\ldots, n$. Consequently, we can conclude that: $n\le \card{\gamma_0}$ such that  there exist  $k$ distinct plaquettes ${\rm p}_1,\ldots, {\rm p}_k\in \gamma$ which are also the plaquettes in $\gamma_{i_1}\cup\ldots\cup \gamma_{i_k}$ and $n-k$ distinct plaquettes in $\gamma_0\setminus \{{\rm p}_1,\ldots, {\rm p}_k\}$. These $n-k$ plaquettes are adjacent of $(n-k)$ different sites in $V_{\gamma_0}\setminus \cup_{j=1}^kV_{\gamma_{i_j}}$ touched by $V_{\gamma_{i_{k+1}}}\cup\cdots\cup V_{\gamma_{i_n}}$. The selection of these points can be done in $\displaystyle\binom{\card{\gamma_0}}{k}\binom{\card{\gamma_0}-k}{n-k}$ ways.  
			Hence the left-hand side of \eqref{eq:int-ccct8} is less than or equal to
			\begin{align}\label{eq:int-ccct9}
				&1+\sum_{n=1}^{\card{\gamma_0}}\sum_{k=0}^n\binom{\card{\gamma_0}}{k}\binom{\card{\gamma_0}-k}{n-k}\Bigg[\sup_{{\rm p}\in \gamma_0}\sum_{\substack{\gamma\in\Gamma^{\mathbb Z^d}_{\rm prim}\\ \gamma\ni {\rm p}}}w_\beta({\gamma})\eee^{a|\gamma|}\Bigg]^{k}\Bigg[\sup_{{\rm p}\in \gamma_0}\sum_{\substack{\gamma\in\Gamma^{\mathbb Z^d}_{\rm prim}\\ V_\gamma\ni  x_{\rm p}, {\rm p\notin \gamma}}}w_\beta(\gamma)\eee^{a|\gamma|}\Bigg]^{n-k}\nonumber\\
				& =\;\Bigg[1+\sup_{{\rm{p}}\in \gamma_0}\sum_{\substack{\gamma\in\Gamma^{\mathbb Z^d}_{\rm prim}\\ \gamma\ni {\rm p}}}w_{\beta}({\gamma})\eee^{a|\gamma|}+\sup_{{\rm p}\in \gamma_0}\sum_{\substack{\gamma\in\Gamma^{\mathbb Z^d}_{\rm prim}\\ V_\gamma\ni  x_{\rm p}, {\rm p\notin \gamma}}}w_\beta(\gamma)\eee^{a|\gamma|}\Bigg]^{\card{\gamma_0}}
				\nonumber\\ &\le \Bigg[1+\sup_{{\rm{p}}\in \mathbb{E}_{\mathbb Z^d}}\sum_{\substack{\gamma\in\Gamma^{\mathbb Z^d}_{\rm prim}\\ \gamma\ni {\rm p}}}w_\beta({\gamma})\eee^{a|\gamma|}+\sup_{{\rm p}\in \mathbb E_{\mathbb Z^d}}\sum_{\substack{\gamma\in\Gamma^{\mathbb Z^d}_{\rm prim}\\ V_\gamma\ni  x_{\rm p}, {\rm p\notin \gamma}}}w_\beta(\gamma)\eee^{a|\gamma|}\Bigg]^{\card{\gamma_0}}
			\end{align}
			where the site $x_{\rm p}\in V_{\gamma_0}$ is adjacent to the plaquette ${\rm p}$. 
			This leads us to the following sufficient condition for \eqref{eq:int-ccct8}
			\begin{equation}\label{eq:int-ccct.lowX10b}
				\sup_{{\rm{p}}\in \mathbb{E}_{\mathbb Z_*^d}}\sum_{\substack{\gamma\in\Gamma^{\mathbb Z^d}_{\rm prim}\\ \gamma\ni {\rm p}}}w_\beta({\gamma})\eee^{a|\gamma|}+\sup_{{\rm p}\in \mathbb E_{\mathbb Z^d}}\sum_{\substack{\gamma\in\Gamma^{\mathbb Z^d}_{\rm prim}\\ V_\gamma\ni  x_{\rm p}, {\rm p\notin \gamma}}}w_\beta(\gamma)\eee^{a|\gamma|}\;\le \;\eee^{a}-1.
			\end{equation}
			We have 
			\begin{equation}
				\sup_{{\rm p}\in \mathbb E_{\mathbb Z^d}}\sum_{\substack{\gamma\in\Gamma^{\mathbb Z^d}_{\rm prim}\\ V_\gamma\ni  x_{\rm p}, {\rm p\notin \gamma}}}w_\beta(\gamma)\eee^{a|\gamma|}\le (2d-1)\sup_{{\rm p}\in \mathbb E_{\mathbb Z^d}}\sum_{\substack{\gamma\in\Gamma^{\mathbb Z^d}_{\rm prim}\\{\rm p} \in \gamma}}w_\beta(\gamma)\eee^{a|\gamma|}
			\end{equation}
			This brings us to the following sufficient condition for  the condition \eqref{eq:int-ccct.lowX10b}
			\begin{equation}\label{eq:int-ccct.lowX10}
				\sup_{{\rm p}\in \mathbb E_{\mathbb Z^d}}\sum_{\substack{\gamma\in\Gamma^{\mathbb Z^d}_{\rm prim}\\{\rm p} \in \gamma}}w_\beta(\gamma)\eee^{a|\gamma|}\le \frac{\eee^a-1}{2d}.
			\end{equation}
			
			(ii) We can prove statement (ii) by using an argument similar to the one used for  Theorem \ref{Thm.Xuan.strong.1}  (ii).
		\end{proof}
		To verify the inequality \eqref{eq:int-ccct.lowX10}, we need to count the number of simple closed surfaces \(\gamma\) in \(\mathbb{Z}^d\) that pass through a specific plaquette \({\rm p}\). 
		In our work, we build upon the results of Balister and Bollob\'as in \cite{BB07}, where they employ the generating function method to provide a bound on the number of contours on the lattice \(\mathbb{Z}^d\). In this paper, we present several updates that utilize optimal tools to enhance our results compared to those of Balister and Bollob\'as in \cite{BB07}.
		
		Following the notation in \cite{LM98}, let $\gamma_i$ represent the set of plaquettes of a contour $\gamma$ that are orthogonal to the coordinate axis number $i$, where $i = 1, \ldots, d$. We define $i^*$ as the direction for which $\card{\gamma_{i^*}} = \min_{i} \card{\gamma_i}$. This direction, $i^*$, is referred to as the $\gamma$-vertical direction. Consequently, all the plaquettes of $\gamma$ are categorized into two groups: horizontal plaquettes, belonging to $\gamma_{i^*}$, and vertical plaquettes, which are those in $\gamma \setminus \gamma_{i^*}$. Based on these definitions, we can construct a floor-stack multi-graph that pertains to primitive contours, as outlined in \cite{BB07}.
		
		Let  $h$ be the  generating function for the number of primitive contours containing the plaquette ${\rm p}$ defined as
		\begin{equation}
			h(x)=\sum_{n}a_n X^n
		\end{equation}
		where $a_n$ stands for the possible number of primitive contours with volume $n$. A generating function $g$ is then  defined as follows
		\[ g(X, Y) = \sum_{r,s} a_{r,s} X^r Y^s, \]
		and  satisfies the following equation
		\begin{equation}g(X,Y)=X(1+\kappa g(X,Y))^{2d-2}
		\end{equation}
		where 
		\begin{equation}\label{eq.Xuan.ising.low.new.1}
			\kappa=1+\frac{4Y}{1-Y},
		\end{equation}
		and \( a_{r,s} \) represents a bound on the number of possible spanning trees with a total stack size of \( s \) and a total floor volume of \( r \).  
		
		Let us define 
		\[\widetilde g(X, Y)=\frac{4Y}{(1-Y)^2}[g(X,Y)]^2=\sum_{n=1}^{\infty}nY^{n}[2g(X, Y)]^2.\]
		In the following proposition, we will provide the upper bound of the generating function \( h \) for the number of primitive contours. This result provide a significantly contribution to estimating the domain of analyticity of the pressure function.
		\begin{prop}\label{prop.Xuan.lowX.new2} For all $d\ge 2$,
			\begin{equation}\label{BB.X.low.1}
				h(X^{1/d}Y)\le g(X,Y)+(d-1)\widetilde g(X,Y)
			\end{equation}
			for every $0<Y<1$, and $X\le X_c$ defined as
			\[X_c=\frac{1}{(2d-3)\kappa}\left[\frac{2d-3}{2d-2}\right]^{2d-2}.\]
		\end{prop}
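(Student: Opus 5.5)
The plan is to follow the Balister--Bollobás floor--stack encoding of primitive contours from \cite{BB07}, refined as set up just before the statement. There are three steps: encode each primitive contour through ${\rm p}$ by a rooted spanning tree decorated with stacks; bound the number of such decorated trees by the coefficients of $g$; and check that the bound survives the substitution $X\mapsto X^{1/d}Y$.

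\emph{Encoding.} Fix a primitive contour $\gamma\ni{\rm p}$ with vertical direction $i^*$. Then $\card{\gamma_{i^*}}\le\card{\gamma}/d$, so the horizontal plaquettes ("floors") carry at most a $1/d$ fraction of the size. The vertical plaquettes organise into stacks: maximal columns of vertical plaquettes lying over a $(d-2)$-face of the boundary of a floor. Build the floor multigraph on $\gamma_{i^*}$, where two floor plaquettes are joined if they are adjacent inside a floor or connected through a stack. Primitivity plus connectedness of $\gamma$ makes this graph connected, and we take a spanning tree rooted at the floor containing (or nearest to) ${\rm p}$.

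\emph{Counting via the recursion.} At each tree vertex there are at most $2d-2$ horizontal directions to branch into. Each branch is either empty, or a child reached directly, or reached through a stack of height $s\ge1$ climbing up or down in one of two orientations. The number of such choices is weighted by $1+\sum_{s\ge1}4Y^s=1+4Y/(1-Y)=\kappa$, as in \eqref{eq.Xuan.ising.low.new.1}. This yields the functional equation $g=X(1+\kappa g)^{2d-2}$, so the coefficient $a_{r,s}$ dominates the number of contours with $r$ floor plaquettes and $s$ stack plaquettes.

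When ${\rm p}$ is itself a vertical plaquette, it sits inside some stack. We then root at the two floor pieces that stack joins. This gives the product of two trees, $[g]^2$, times the factor $\sum_n nY^n\cdot 4$: the factor $n$ accounts for the position of ${\rm p}$ within a stack of height $n$, and the $4$ for orientation choices. There are $d-1$ non-vertical directions for such a ${\rm p}$, which produces $(d-1)\widetilde g$.

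Since $\card{\gamma}=r+s$ with $r\le \card{\gamma}/d$, the vertical part satisfies $s\ge \card{\gamma}-r$. Substituting $X^{1/d}Y$ into $h$ and assigning weight $X$ per floor and $Y$ per plaquette overall is dominated termwise once $X,Y<1$. I expect this bookkeeping to need care, so I will verify it coefficientwise.

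\emph{Convergence.} Finally I need convergence of $g(X,Y)$ for $X\le X_c$. As in the proof of Theorem~\ref{main.result.X1}, invert to $X=g/(1+\kappa g)^{2d-2}$ and maximise. The maximum is at $g=1/((2d-3)\kappa)$, which gives exactly
\[
X_c=\frac{1}{(2d-3)\kappa}\Big(\frac{2d-3}{2d-2}\Big)^{2d-2}.
\]
Monotone convergence of the truncated recursions $g_L$ then bounds $g_L(X_c)$ by this critical value.

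The main obstacle is the injectivity of the encoding: showing that primitivity guarantees the floor graph is connected, and that the tree together with the stack data determines $\gamma$, so that overcounting only helps. I would first prove this by reconstructing $\gamma$ plaquette by plaquette from the tree, importing the corresponding lemma from \cite{BB07}.
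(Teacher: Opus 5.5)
Your encoding, the recursion $g=X(1+\kappa g)^{2d-2}$ with $\kappa=1+4Y/(1-Y)$, the factor $\frac{4Y}{(1-Y)^2}g^2$ for a vertical root, and the bound $g_L\le g_c$ that produces $X_c$ all match the paper. The genuine difference is the final comparison step.

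\textbf{The paper's route.} It does not fix the minimizing direction. It encodes each $\gamma$ once for \emph{every} choice $i=1,\dots,d$ of vertical direction. The root ${\rm p}$ is horizontal for exactly one $i$ and vertical for the other $d-1$, which is where $(d-1)\widetilde g$ comes from. So $\gamma$ contributes $\sum_i X^{r_i}Y^{s_i}$ to $g+(d-1)\widetilde g$. AM--GM with $\sum_i r_i=|\gamma|$, $s_i\le|\gamma|$ and $Y<1$ then gives
\[
\sum_i X^{r_i}Y^{s_i}\ \ge\ d\,(X^{1/d}Y)^{|\gamma|}.
\]

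\textbf{Your route.} You encode $\gamma$ only with respect to $i^*(\gamma)$. You then use $r\le|\gamma|/d$, together with $X\le X_c<1$ and $Y<1$, to get $X^rY^s\ge (X^{1/d}Y)^{|\gamma|}$. Two repairs are needed.
\begin{enumerate}
\item The factor $d-1$ must be justified differently from what you wrote. Let $j$ be the axis normal to ${\rm p}$. Then ${\rm p}$ is vertical exactly when $i^*(\gamma)\neq j$. In that case $i^*$ takes $d-1$ possible values, each contributing one copy of $\widetilde g$, while $i^*(\gamma)=j$ contributes the single copy of $g$.
\item The inequality you need is $s\le|\gamma|$, which holds because the tree's stacks are made of vertical plaquettes of $\gamma$. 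Your statement $s\ge|\gamma|-r$ points the wrong way and is not used. Also, $|\gamma|=r+s$ fails in general, because stacks outside the spanning tree carry no weight.
\end{enumerate}

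\textbf{Comparison.} With these repairs your route is simpler, avoids AM--GM, and proves the displayed inequality exactly as stated. What it gives up is a factor $d$. Averaging over all $d$ orientations yields $d\,h(X^{1/d}Y)\le g+(d-1)\widetilde g$, which is stronger than the proposition as stated. That stronger form is the one actually invoked in the proof of Theorem~\ref{main.result.low.Xuan}, so your single-direction bound would not support that application without switching to the paper's sum over all orientations.
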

		\begin{proof} Let us rewrite the generating function $h$ as follows 
			\begin{equation}\label{eq.X.low.new1}
				h(x)=h^=(x)+h^\bot(x).
			\end{equation}
			In expression \eqref{eq.X.low.new1}, \( h^{=}(x) \) represents the generating function associated with the root \( {\rm p} \) from the set of floors, while \( h^{\bot}(x) \) denotes the generating function related to the root \( {\rm p} \) from the set of verticals. To establish a bound for the generating function \( h \), we will estimate each term, \( h^{=}(x) \) and \( h^{\bot}(x) \).
			\medskip
			
			\noindent
			{\bf Case 1. ${\rm p}$ belongs to the set of floors.} Let \( B = \partial B \) be a primitive contour. According to Lemma 4 in \cite{BB07}, the floor-stack graph \( G \) of the primitive contour \( B \) is connected. Fix a spanning tree of \( G \). We can reconstruct the floors by specifying each floor as a rooted \((d-1)\)-complex along with the connecting stacks. In this step, we will prove 
			\begin{equation}\label{eq.Xuan.Ising.low.1}
				h^{=}(X)\le g(X,Y)
			\end{equation}
			for each $X\le X_c$ and $0<Y<1$. To establish the inequality \eqref{eq.Xuan.Ising.low.1}, we first prove by induction on $L$ that 
			\begin{equation}\label{eq.Xuan.Ising.low.2}
				h_L^{=}(X)\le g_L(X,Y)
			\end{equation}
			where the graph length $L$ is defined as
			\begin{equation}
				L:=\max\{{\rm dist}_G(u,{\rm p}): u\in \mathbb V_G\}
			\end{equation}
			with ${\rm p}$ chosen as the root vertex here, ${\rm dist}_G(u,v)$ denotes the minimum number of edges connecting vertices  $u$ and $v$ in the graph $G$. The functions  $g_{L}(X, Y)$ are defined recursively by: $g_0(X,Y)=0$, and
			\[g_{L+1}(X,Y)=X[1+\kappa g_{L}(X,Y)]^{2d-2}\]
			where $\kappa$ is given by \eqref{BB.X.low.1}. Let \( h_L^{=}(x) \) be the generating function associated with the primitive contour \( G \) that passes through the root \( {\rm p} \). The distance from the root \( {\rm p} \) to the set of floors in a spanning tree graph \( G \) is less than or equal to \( L \). We have \( h_0^{=}(X) = 0 = g_0(X, Y) \). We assume that this holds for \( L \), meaning \( h_0^{=}(X) = 0 = g_0(X, Y) \). We need to prove that the inequality stated in \eqref{eq.Xuan.Ising.low.2} also holds for \( L + 1 \).
			
			Consider any primitive contour $B$ whose associated graph $G$ has a the spanning tree with length  at most $L+1$ relative to the root ${\rm p}$.  We examine the growth of this graph starting from the root \({\rm p}\). For each of the \(2(d-1)\) faces of \({\rm p}\), we have three options:
			\begin{itemize}
				\item[a.)] Attach nothing.
				\item[b.)] Attach a neighboring horizontal \((d-1)\)-cube, which extends the current floor. In this case, when we attach cubes, we will continue building the complex from the new horizontal \((d-1)\)-cube.
				\item[c.)] Attach a stack that includes a horizontal \((d-1)\)-cube at the other end. When choosing to attach a stack, it can extend in one of two directions (up or down), and the horizontal \((d-1)\)-cube at the other end of the stack can be attached in one of two positions. Since the stack itself can be of any positive integral length, this choice yields
				\[4(Y + Y^2 + Y^3 + \ldots)= \frac{4Y}{1-Y}.\]
			\end{itemize}
			Additionally, note that two stacks cannot be  connected sequentially, as this would simply merge them into a single stack. Likewise, we cannot attach a stack to a horizontal cube because doing so would push the stack beyond the boundaries of the current floor. From options a.), b.), c.) and the induction hypotheses that $h_L^{=}(x)\le g_{L}(X, Y)$, we can bound 
			\begin{equation}
				h^{=}_{L+1}(X)\le X\left[1+\kappa h_L(X)\right]^{2d-2}\le X\left[1+\kappa g_L(X, Y)\right]^{2d-2}:=g_{L+1}(X, Y).
			\end{equation}
			It is easy to prove that $g_L$ is an increasing function with respect to the variable $X \in (0,1)$ for each $Y \in (0,1)$. Let us consider the following equation
			\begin{equation}
				X=g(1+\kappa g)^{-2d+2}
			\end{equation}
			We can compute that $X$ attains its maximum value at 
			\begin{equation}
				g_c=\frac{1}{(2d-3)\kappa},
			\end{equation}
			and the maximum value of $X$ is equal to
			\begin{equation}
				X_c=g_c[1+\kappa g_c]^{-2d+2}=\frac{1}{(2d-3)\kappa}\left[\frac{2d-3}{2d-2}\right]^{2d-2}.
			\end{equation}
			We can  prove by induction that $g_{L}(X, Y)\le g_c$, for each $Y\in (0,1)$ and $X\le X_c$. More precisely, $g_0(X,Y)=0\le g_c$. Assume that $g_L(X, Y)\le g_c$ holds. We need to prove that $g_{L+1}(X, Y)\le g_c$ also holds. We have
			\[g_{L+1}(X, Y)=X(1+\kappa g_L(X, Y))^{2d-2}\le X_c(1+\kappa g_c)^{2d-2}=g_c.\] 
			Thus, for every $Y \in (0,1)$, $g_{L}(X, Y)$ converges monotonically to $g(X, Y)$, where $g(X,Y)$ is the solution to the following equation
			\begin{equation}\label{eq.Xuan.Ising.low.3}
				g(X,Y)=X(1+\kappa g(X,Y))^{2d-2}.
			\end{equation}
			{\bf Case 2. ${\rm p}$ belongs to the set of verticals.}
			We begin constructing the spanning tree of graph \( G \), starting from the root \( {\rm p} \), which is situated in the middle of a stack. There are floors at each end of the stack. Then \( h^{\bot}(x) \) is bounded from above by
			\begin{equation}
				(Y+2Y^2+3Y^3+\cdots)[2g(X, Y)]^2=\frac{4Y}{(1-Y)^2}[g(X, Y)]^2.
			\end{equation}
			The term $kY^k (2g(X, Y))^2$ arises from selecting a stack of length $k$, where there are $k$ possible choices for the root. The contour then grows from this root, beginning with two floors, each initiating in one of two possible directions.
			
			We fix a primitive contour $B$ and initially choose the vertical direction to be $i$. The contour $B$ then contributes a term $X^r Y^s$ to the generating function $g(X, Y)$, where $r = |B|^{=}$ counts the number of horizontal edges, and $s \leq |B|$ denotes the total stack height in the vertical direction $i$. According to Lemma 4 in \cite{BB07}, the contour $B$ contributes at least once for every spanning tree of the associated floor-stack graph.
			
			Rather than fixing a single vertical direction $i$, we can consider each of the $d$ directions as vertical, one at a time. In this generalized setting, the contour $B$ contributes at least
			\[
			\sum_{i=1}^d X^{r_i} Y^{s_i}
			\]
			to the function $g(X, Y) + \widetilde{g}(X, Y)$, where each $r_i = |B^=|$ and $s_i \leq |B|$ are measured with respect to the chosen vertical direction $i$. Note that the root ${\rm p}$ is vertical in $d-1$ dimensions and horizontal in only one. Consequently, since $r_i$ represents the number of horizontal components in direction $i$, the total horizontal contribution satisfies $n = \sum_{i=1}^d r_i$, and each vertical stack height satisfies $s_i \leq n$. The AM-GM inequality, along with the condition $0< Y<1$ leads to the conclusion that
			\begin{equation}
				\sum_{i=1}^d X^{r_i} Y^{s_i}\ge dX^{n/d}Y^{\sum s_i/d}\ge dX^{n/d}Y^{n}.
			\end{equation}
			Consequently, for any $0 < Y < 1$ and $0 < X < X_c= X_c(Y)$, we have
			\begin{equation}
				g(X, Y) + \widetilde{g}(X, Y)\ge dh(X^{1/d}Y).
			\end{equation}
			This completes the  proof.
		\end{proof}
		\begin{proof}[\bf{Proof of Theorem \ref{main.result.low.Xuan}}] As the consequence of Theorem \ref{Thm.Xuan.lowtem}, we need to find a number $\beta$ such that
			\begin{equation}
				d\sup_{{\rm p}\in \mathbb E_{\mathbb Z^d}}\sum_{\substack{\gamma\in\mathcal{P}_{\mathbb V}\\{\rm p} \in \gamma}}w_\beta(\gamma)\eee^{a|\gamma|}=d\sum_{n\ge 1}a_n\eee^{(-2\beta+2a)n}=dh(\eee^{-2\beta+2a})\le \frac{\eee^a-1}{2}:=m.
			\end{equation}
			In order to do it, let us begin with following identity
			\begin{eqnarray}g(X, Y)+(d-1)\widetilde g(X, Y)&=&g(X, Y)+(d-1)\frac{4Y}{(1-Y)^2} g^2(X, Y)\nonumber\\  &=& g(X, Y)+(d-1)\frac{(\kappa-1)(\kappa+3)}{4} g^2(X, Y)=m,
			\end{eqnarray}
			where $\kappa$ is defined in \eqref{eq.Xuan.ising.low.new.1}. Since $g\ge 0$, we have
			\begin{equation}g_1=2\frac{-1+\sqrt{1+(d-1)(\kappa-1)(\kappa+3)m}}{(d-1)(\kappa-1)(\kappa+3)}=\frac{2m}{1+\sqrt{1+(d-1)(\kappa-1)(\kappa+3)m}}.
			\end{equation}
			From equation \eqref{eq.Xuan.Ising.low.3}, we have
			\begin{eqnarray}
				X_1&=&g_1(1+\kappa g_1)^{-2d+2}\nonumber\\ &=&\frac{2m}{1+\sqrt{1+(d-1)(\kappa-1)(\kappa+3)m}}\left(1+\frac{2m\kappa}{1+\sqrt{1+(d-1)(\kappa-1)(\kappa+3)m}}\right)^{-2d+2}.
			\end{eqnarray}
			Note that if 
			\begin{equation}\label{eq.Xuan.Ising.low.new3}
				m\le \frac{4(2d-3)\kappa+(\kappa^2+2\kappa-3)(d-1)}{4(2d-3)^2\kappa^2}
			\end{equation}
			we have $g_1 \le g_c$. Since $g(1 - \kappa g)^{-2d + 2}$ is an increasing function on $[0, g_c]$ and $g_1 \le g_c$, it follows that $X_1 \le X_c$. By Proposition \ref{prop.Xuan.lowX.new2}, we have 
			\begin{equation}
				dh(X_1^{1/d}Y)\le g(X_1, Y)+(d-1)\widetilde g(X_1,Y)= g_1+(d-1)\frac{(\kappa-1)(\kappa+3)}{4}g_1^2=m
			\end{equation}
			for each $Y\in (0,1)$. Since \( h \) is an increasing function, it implies that for every \( X \le (X_1)^{1/d} Y \), we have \( h(X) \le m \). Therefore, we conclude that
			\begin{equation}
				\eee^{-2\beta+a}\le (X_1)^{1/d}Y=(X_1)^{1/d}\frac{\kappa-1}{\kappa+3}.
			\end{equation}
			It is equivalent to stating that for each $m$ and  $\kappa$ satisfying the condition \eqref{eq.Xuan.Ising.low.new3}, we obtain
			\begin{eqnarray}\label{eq.Xuan.low.Ising.new4}
				2\beta&\ge& a-\frac{1}{d}\log X_1-\log (\kappa-1)+\log(\kappa+3)\nonumber\\ &=&\log(1+2m)-\frac{1}{d}\log X_1-[\log (\kappa-1)-\log(\kappa+3)]\nonumber\\ &=&\log(1+2m)-\frac{1}{d}\log\frac{2m}{1+\sqrt{1+(d-1)(\kappa-1)(\kappa+3)m}}\nonumber\\ && +\frac{2d-2}{d}\log\left[1+\frac{2m\kappa}{1+\sqrt{1+(d-1)(\kappa-1)(\kappa+3)m}}\right]-[\log (\kappa-1)-\log(\kappa+3)]\nonumber\\ &=&\log(1+2m)+\left(-2+\frac{3}{d}\right)\log(1+\sqrt{1+(d-1)(\kappa-1)(\kappa+3)m})-\frac{1}{d}\log(2m)\nonumber\\ &&+\frac{2d-2}{d}\log\left (\sqrt{1+(d-1)(\kappa-1)(\kappa+3)m}+2m\kappa+1\right)-[\log (\kappa-1)-\log(\kappa+3)]\nonumber\\ &&=\phi^{\rm low}(m,\kappa).
			\end{eqnarray}
			To achieve a better domain, we can optimize the parameters \( m \) and \( \kappa \) as follows
			\begin{equation}
				2\beta\ge \min_{m,\kappa\in \mathcal{L}}\phi^{\rm low}(m,\kappa).
			\end{equation}
			where $\mathcal{L}$ is defined as in \eqref{eq.Xuan.low.Ising.domain.new4cor},
			for each $d\ge 2$. This completes the proof.
		\end{proof}
		\section*{Acknowledgements}
		The authors wishes to express his deepest gratitude to Prof. Roberto Fern\'andez for his helpful suggestions during the preparation of this  paper.  
		This research is funded by Vietnam Ministry of Education and Training (MOET) under grant number B2026-CTT-03.
		
		\appendix
		
		\section{Cluster expansion for subset gases}\label{sec.app.clu2}
		Subset gases are specific types of polymer gases that are frequently utilized in cluster expansion within statistical mechanics. Their definition requires a countable subset, denoted by $\mathbb{V}$ (e.g. $\mathbb{Z}^d,\;\mathcal{E}_{\mathbb{Z}^d},\;\mathbb{E}_{\mathbb{Z}^d}$) which acts as an underlying ``space''. Polymers are defined as finite, non-empty subsets of  $\mathbb{V}$, represented mathematically as
		\[\mathcal{P}_{\mathbb V}\;=\;\{S\subset\mathbb{V}:0\;<\;\card{S}\;<\;\infty\}.\]
		with a compatibility relation, denoted by \(S\sim S'\), which is a subset of
		\(\mathcal{P}_{\mathbb V}\times\mathcal{P}_{\mathbb V}\) satisfying
		\(S\sim S\) for every \(S\in\mathcal{P}_{\mathbb V}\).
		The compatibility relation depends on the model under consideration. For instance, in Section \ref{sec.Xuan.1}, we stated that $S\sim S'$ if and only if $d(S, S')\ge 2$. In the work of Bissacot, Procacci and Fernandez \cite{BFP10}, it mentioned that $S\sim S'$ if and only if $S\cap S'=\varnothing$. 
		Polymers can be measured through its cardinality, so it makes sense to talk about large and small polymers. The definition of the gas is completed by a family of activities $\bbd{z}\;=\;\{z_S\in\mathbb{C}\}_{S\in\mathcal{P}_{\mathbb V}}$. 
		Let us define the partition function for gas polymers as follows
		\begin{align}
			Z(\bbd{z})=1+\sum_{n\ge 1}\frac{1}{n!}\sum_{(S_1,\ldots, S_n)\in \mathcal{P}_{\mathbb V}^n}\prod_{1\le i<j\le n}\zeta(S_i, S_j)\prod_{i=1}^nz_{S_i}\nonumber 
		\end{align}
		Using Mayer's trick (which can be found in \cite{MMay40}), we can derive $\log Z(\bbd{z})$ as the following form:
		\begin{align}\label{eq.Xuan.Rob12}\log Z(\boldsymbol{z})=\sum_{n\ge 1}\frac{1}{n!}\sum_{(S_1,\ldots, S_n)\in\mathcal{P}_{\mathbb V}^n}a^T_n(S_1,\ldots,S_n)\prod_{i=1}^nz_{S_i},
		\end{align}
		where $a_n^T(\cdot)$ is defined in \eqref{Ursell.F.X1}.
		
		To study the convergence of cluster expansion, we typically examine it through the convergence conditions of the formal power series in infinite volume as below (see \cite{FP07} for a full explanation): For each $S\in\mathcal{P}_{\mathbb V}$,
		\begin{align}\label{eq.Xuan.Rob13}|\Gamma|_S(\bbd{\rho})=1+\sum_{n=1}^{\infty}\frac{1}{n!}\sum_{(S_1,\ldots, S_n)\in\mathcal{P}_{\mathbb V}^n}|a^T_{n+1}(S_1,\ldots,S_n)|\prod_{i=1}^{n}\rho_{S_i}
		\end{align}
		with $\bbd{\rho}\in [0,\infty)^{\mathcal{P}_{\mathbb V}}$.
		
		In the following proposition, we will give the condition for the convergence of the cluster expansion presented in equation \eqref{eq.Xuan.Rob13}. This condition is well-known as the Fernández-Procacci criterion, first introduced in \cite{FP07}.
		\begin{prop} [Fern\'andez-Procacci criterion]\label{thm:FP} Suppose that for some $\bbd{\xi}\in [0,\infty)^{\mathcal{P}_{\mathbb V}}$ there exists $\bbd{\mu}\in [0,\infty)^{\mathcal{P}_{\mathbb V}}$ such that 
			\begin{equation}\label{eq:int-ccct4}
				\xi_{S_0}\psi^{\mathrm{FP}}_{S_0}(\bbd{\mu})\;\le\;\mu_{S_0}, \quad\text{for each}\;S_0\in\mathcal{P}_{\mathbb V}
			\end{equation}
			with
			\begin{equation}\label{eq:int-ccct5}
				\psi^{\mathrm{FP}}_{S_0}(\bbd{\mu})=1+\sum_{n\ge 1}\frac{1}{n!}\sum_{ (S_1,\ldots,S_n)\in\mathcal{P}_{\mathbb V}^n\atop S_0\nsim S_i,\; S_i\sim S_j,\; 1\le i, j\le n}\prod_{i=1}^n\mu_{S_i}.
			\end{equation}
			Then  $\card{\Gamma}_{S}(\bbd{\xi})$ is  convergent. Furthermore,  for each $S\in\mathcal{P}_{\mathbb V}$,
			\begin{equation}\label{eq.FP.1}
				\xi_{S}\card{\Gamma}_{S}(\bbd{\xi})\le \mu_{S}.
			\end{equation}
		\end{prop}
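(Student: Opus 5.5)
The plan is to bound the Ursell coefficients by a sum over Penrose trees, and then show by induction on tree depth that the resulting tree series is dominated by $\bbd{\mu}$ through the map $\bbd{\rho}\mapsto \xi_S\psi^{\rm FP}_S(\bbd{\rho})$.

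\textbf{Step 1 (Penrose bound).} For $n\ge 1$ and polymers $S_0,S_1,\ldots,S_n$, let $G(S_0,\ldots,S_n)$ be the incompatibility graph on $\{0,\ldots,n\}$, in which $\{i,j\}$ is an edge iff $S_i\nsim S_j$. I will use the Penrose partition scheme. Fix the root $0$ and map each connected spanning subgraph of $G$ to its Penrose tree. This partitions the connected subgraphs into Boolean intervals $[\tau,R(\tau)]$. The alternating sum over each nontrivial interval vanishes, so
\[
|a^T_{n+1}(S_0,S_1,\ldots,S_n)|\;\le\;\#\{\tau \text{ Penrose tree of } G\}.
\]
The property I need is the one exploited by Fern\'andez and Procacci: in a Penrose tree, any two siblings (distinct children of the same vertex) are \emph{not} joined by an edge of $G$. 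Hence siblings are pairwise compatible, while each child is incompatible with its parent. I will quote this from \cite{FP07} rather than reprove the Penrose construction.

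\textbf{Step 2 (tree expansion and truncation).} Inserting this bound into \eqref{eq.Xuan.Rob13} and summing over labelings, the factor $1/n!$ converts the sum into a sum over unlabeled rooted trees. In these trees the root carries $S$, every vertex carries a polymer incompatible with its parent's polymer, and siblings carry pairwise compatible polymers. This gives
\[
\xi_S|\Gamma|_S(\bbd{\xi})\;\le\;\lim_{L\to\infty}T^{(L)}_S,
\]
where $T^{(L)}_S$ is the same weighted sum restricted to trees of depth at most $L$. Summing first over the children of the root and then recursively over their subtrees yields
\[
T^{(0)}_S=\xi_S,\qquad T^{(L+1)}_S\le \xi_S\,\psi^{\rm FP}_S\big(\bbd{T}^{(L)}\big),
\]
because the children of the root form a family $(S_1,\ldots,S_k)$ with $S\nsim S_i$ and $S_i\sim S_j$, counted with weight $1/k!$. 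This is exactly the structure of \eqref{eq:int-ccct5}.

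\textbf{Step 3 (induction and limit).} The map $\bbd{\rho}\mapsto\psi^{\rm FP}_S(\bbd{\rho})$ has nonnegative coefficients, so it is monotone in each coordinate. Suppose $\bbd{T}^{(L)}\le\bbd{\mu}$ coordinatewise. Then
\[
T^{(L+1)}_S\le \xi_S\psi^{\rm FP}_S\big(\bbd{\mu}\big)\le\mu_S
\]
by \eqref{eq:int-ccct4}. The base case is $T^{(0)}_S=\xi_S\le \xi_S\psi^{\rm FP}_S(\bbd{\mu})\le\mu_S$. The sequence $T^{(L)}_S$ is nondecreasing in $L$ and bounded by $\mu_S$, so it converges. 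By monotone convergence of nonnegative series, $|\Gamma|_S(\bbd{\xi})$ converges and satisfies \eqref{eq.FP.1}. If $\xi_S=0$, the statement about $|\Gamma|_S$ follows by the same argument applied at the children level.

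\textbf{Main obstacle.} The delicate point is Step 2: performing the combinatorial reorganization precisely. One has to pass from labeled tuples with $1/n!$ to the recursive tree sums so that the sibling weights come out as the $1/k!$ in $\psi^{\rm FP}$, without overcounting. One also has to check that the sibling-compatibility property of Penrose trees is preserved under the recursion, with the depth-first ordering of labels used in the Penrose map. I would follow the argument of \cite{FP07} closely here, ordering vertices by generation so that the recursion is a genuine inequality at each level.
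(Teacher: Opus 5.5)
The paper gives no proof of its own and defers to \cite{FP07}; your outline is exactly the Fern\'andez--Procacci argument given there, and it is correct: the Penrose-tree bound, the relaxation to trees with incompatible parent--child pairs and pairwise compatible siblings, and the depth recursion $T^{(L+1)}_S=\xi_S\,\psi^{\rm FP}_S(\bbd{T}^{(L)})$, which by the exponential formula for labeled rooted trees in fact holds with equality. The one point to state more carefully is $\xi_S=0$: there your argument gives $|\Gamma|_S(\bbd{\xi})\le\psi^{\rm FP}_S(\bbd{T})\le\psi^{\rm FP}_S(\bbd{\mu})$, so convergence needs $\psi^{\rm FP}_S(\bbd{\mu})<\infty$, which the hypothesis guarantees only when $\xi_S>0$ and must otherwise be assumed implicitly.
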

	The proof can be found in full detail in the reference \cite{FP07}. Moreover, the statement (i) in the  Theorem \ref{Thm.Xuan.strong.1}, Theorem \ref{Thm.Xuan.hightem}, and Theorem \ref{Thm.Xuan.lowtem} are the consequences of Proposition \ref{thm:FP} which is presented more precisely in Section \ref{sec.proofs}.

		To apply the Fern\'andez-Procacci criterion, as for  most of the models,  we set $\mu_{\gamma}\;=\;\xi_{\gamma}\eee^{a\card{\gamma}}$ to obtain
		\begin{equation}\label{eq:int-ccct8}
			1+\sum_{n\ge 1}\sum_{\{S_1,\ldots,S_n\}\subset\mathcal{P}_{\mathbb V}\atop S_0\cap S_i\ne\varnothing,\;S_i\cap S_j=\varnothing,\; 1\le i<j\le n}\prod_{i=1}^{n}\xi_i\eee^{a|S_i|}\;\le\;\eee^{a|S_0|}
		\end{equation}
		From the constraint in the sum, $S_0\cap S_i\ne\varnothing,\;S_i\cap S_j=\varnothing,\; 1\le i<j\le n$,  this means that each of the  polymers $S_1,\ldots,S_n$ must intersect different points in $S_0$ to avoiding overlapping. Consequently, we can conclude that: (i) $n\le \card{S_0}$, and (ii) there are $n$ different points in $S_0$ touched by $S_1\cup\ldots\cup S_n$. The selection of these points can be done in $\binom{\card{S_0}}{n}$ ways. Hence the left-hand side of \eqref{eq:int-ccct8} is less than or equal to
		\begin{align}\label{eq:int-ccct9}
			1+\sum_{n=1}^{\card{\gamma }}\binom{\card{\gamma}}{n}\Bigg[\sup_{x\in \gamma}\sum_{\substack{S\in\mathcal{P}_{\mathbb V}\\ S\ni x}}\xi_{S}\eee^{a(S)}\Bigg]^{n} =\;\Bigg[1+\sup_{x\in S_0}\sum_{\substack{S\in\mathcal{P}_{\mathbb V}\\ S\ni x}}\xi_{S}\eee^{a(S)}\Bigg]^{\card{S_0}}
		\end{align}
		This leads us to the following sufficient condition for \eqref{eq:int-ccct8}
		\begin{equation}\label{eq:int-ccct10}
			\sup_{x\in S_0}\sum_{\substack{S\in\mathcal{P}_{\mathbb V}\\ S\ni x}}\xi_{S}\eee^{a|S|}\;\le \;\eee^{a}-1.
		\end{equation}
This condition is the well-known Gruber--Kunz condition, originally introduced in \cite{GK71}.

		
		\appendix

		\renewcommand{\refname}{REFERENCES}
		\makeatletter
		\renewcommand\@biblabel[1]{#1.}
		\makeatother

	\end{document}